\documentclass[pdflatex,sn-mathphys-num]{sn-jnl}
\usepackage{amsmath,amssymb,mathtools,bm}
\usepackage{microtype}
\usepackage[utf8]{inputenc}

\usepackage{color}
\usepackage{graphicx}
\usepackage{dsfont}
\usepackage{mathrsfs}
\usepackage{verbatim}
\usepackage{enumerate}
\usepackage{tcolorbox}
\usepackage{enumitem}
\tcbuselibrary{breakable}
\usepackage{lipsum}

\theoremstyle{plain}

\theoremstyle{remark}

\usepackage{graphicx}
\usepackage{epic}\usepackage{eepic}\usepackage{epsfig}\usepackage{amsfonts}
\usepackage{latexsym}\usepackage{color}\usepackage{enumerate}\usepackage{bbm}
\allowdisplaybreaks[4]

\hypersetup{
colorlinks,
linkcolor=blue,
anchorcolor=blue,
citecolor=red
}

\theoremstyle{plain}
\newtheorem{theorem}{Theorem}\newtheorem{corollary}[theorem]{Corollary}\newtheorem{lemma}[theorem]{Lemma}\newtheorem{proposition}[theorem]{Proposition}\theoremstyle{definition}
\theoremstyle{remark}
\newtheorem{remark}{Remark}

\newcommand{\nb}{\nonumber}

\newcommand{\Tr}{\operatorname{Tr}}
\newcommand{\E}{\mathbb E}

\newcommand{\cB}{\mathfrak B}

\newcommand{\norm}[1]{\left\lVert#1\right\rVert}

\providecommand{\claimname}{Claim}
\providecommand{\theoremname}{Theorem}

\makeatother
\providecommand{\claimname}{Claim}
\providecommand{\theoremname}{Theorem}

\begin{document}

\title{Reliability Functions of Quantum Soft Covering and Privacy
Amplification via a Mixed-Order R{\'e}nyi Divergence}

\author{\fnm{Shi-Bing} \sur{Li}\textsuperscript{*}}
\author{\fnm{Hongsen} \sur{Qiu}\textsuperscript{\(\dagger\)}}
\author{\fnm{Xinyu} \sur{Zhang}\textsuperscript{\(\ddagger\)}}
\affil[]{%
	\makebox[0pt][c]{%
		\begin{minipage}{\textwidth}
			\centering
			Institute for Advanced Study in Mathematics,\\
			Harbin Institute of Technology, Harbin, 150001, China
		\end{minipage}%
	}%
}

\abstract{
In this paper, we introduce a novel mixed-order Rényi divergence and investigate its fundamental properties. Using this divergence, we define a family of mixed-order order-two Rényi mutual information and Rényi conditional entropies. We derive exact reliability functions of quantum soft covering and privacy amplification under the sandwiched Rényi divergence of order $\alpha\in[2,\infty)$. The former is jointly characterized by the sandwiched and mixed-order order-two Rényi mutual information quantities, while the latter is characterized by the corresponding conditional entropies. The proofs rely on two main ingredients---the noncommutative Rosenthal inequality and a novel inequality established in this paper. These results provide operational interpretations of the proposed mixed-order Rényi divergence. To the best of our knowledge, this is the first exact characterization of the reliability function for quantum soft covering.
}
\maketitle
\begingroup
\renewcommand{\thefootnote}{\fnsymbol{footnote}}

\footnotetext[1]{\texttt{shibingli10@gmail.com}}
\footnotetext[2]{\texttt{chieughongsen@gmail.com}}
\footnotetext[3]{\texttt{xy.zhang@stu.hit.edu.cn}}

\endgroup
\setcounter{footnote}{0}

\section{Introduction}\label{sec:intro}
Matrix trace functionals constitute a fundamental bridge between operator theory, mathematical physics, and quantum information theory.
Since Lieb's seminal concavity
theorem, their joint convexity and concavity properties have played a central
role in the aforementioned fields.
They underlie among other objects, for example, the Wigner--Yanase--Dyson skew information,
Petz quasi-entropies, quantum relative entropies, and data-processing
inequalities
\cite{lieb1973convex,ando1979concavity,petz1986quasi,jencova2010unified,
hiai2013concavity,carlen2016some,zhang2020wigner}.

A particularly useful general family is the three-parameter trace functional
\begin{equation*}
\Psi_{p,q,s}(A,B,K)
:=
\operatorname{Tr}
\left(
B^{\frac q2}K^*A^pKB^{\frac q2}
\right)^s,
\qquad p,q,s\in\mathbb{R},
\end{equation*}
where $A$ and $B$ are positive semidefinite matrices and $K$ is an
operator. Rooted in Lieb’s celebrated concavity theorem~\cite{lieb1973convex}, and further developed through Ando's convexity theorem and subsequent refinements of Lieb-type trace inequalities
\cite{ando1979concavity,epstein1973remarks,hiai2013concavity,carlen2016some},
the study of this functional provides a foundational mathematical framework for establishing data-processing inequalities for various quantum information measures
\cite{petz1986quasi,frank2013monotonicity,Beigi2013sandwiched,zhang2020wigner}.
These properties are important in quantum information theory. With a suitable normalization and a logarithm, these trace functionals give rise to quantum information measures that satisfy key properties such as the data processing inequality.

Several important quantum R{\'e}nyi divergences arise precisely in this way~\cite{petz1986quasi,audenaert2015alpha,muller2013quantum,wilde2014strong}.
For any $\alpha\geq0$ and quantum states $\rho,\sigma$, the order-$\alpha$ Petz R{\'e}nyi divergence between $\rho$ and $\sigma$ is defined as~\cite{petz1986quasi}
\begin{equation*}
D_{\alpha}^*(\rho,\sigma):=\frac{1}{\alpha-1}\log \Psi_{\alpha,1-\alpha,1}(\rho,\sigma,\mathbbm{1})=\frac{1}{\alpha-1}\log\operatorname{Tr}\rho^\alpha\sigma^{1-\alpha},
\end{equation*}
where $\mathbbm{1}$ is the identity operator.
The Petz R{\'e}nyi divergence and its induced entropy, information,
and capacity quantities have found broad operational applications in
quantum state discrimination, composite quantum hypothesis testing and
correlation detection, classical data compression with quantum side
information, classical--quantum channel coding, and quantum privacy
amplification, device-independent cryptography, the
sphere-packing exponent and the reliability function of classical--quantum channels
\cite{AKCMBMA2007discriminating,Hayashi2007error,ANSV2008asymptotic,
HayashiTomamichel2016correlation,CHDH2021non,Dalai2013lower,CHT2019quantum,HahnTanBrown2024,renes2024tight,LiYang2025}.
The order-$\alpha$ sandwiched R{\'e}nyi divergence between $\rho$ and $\sigma$ is
defined as~\cite{muller2013quantum,wilde2014strong}
\begin{equation*}
D_{\alpha}(\rho,\sigma):=\frac{1}{\alpha-1}\log\Psi_{1,\frac{1-\alpha}{\alpha},\alpha}(\rho,\sigma,\mathbbm{1})=\frac{1}{\alpha-1}\log\operatorname{Tr}
\left(\sigma^{\frac{1-\alpha}{2\alpha}}\rho\sigma^{\frac{1-\alpha}{2\alpha}}\right)^\alpha.
\end{equation*}
The sandwiched R{\'e}nyi divergence and its induced information
quantities have found broad operational applications in quantum
hypothesis testing, classical communication over classical--quantum
channels, classical data compression with quantum side information,
and entanglement-assisted and quantum-feedback-assisted communication,
among others \cite{MosonyiOgawa2015quantum,HayashiTomamichel2016correlation,MosonyiOgawa2017strong,CHDH2021non,GuptaWilde2015multiplicativity,CMW2016strong,LiYao2024,LYH2023tight,LiYao2024reliability,LiYaoChannelSimulation2025,TightAnyShotDecoupling2026}.
They have also become fundamental tools in one-shot quantum information
theory
\cite{tomamichel2016quantum}.
More generally, the $\alpha$-$z$ R{\'e}nyi divergence between $\rho$ and $\sigma$ is defined as
\begin{equation*}
D_{\alpha,z}(\rho\|\sigma)
:=\frac{1}{\alpha-1}\log\Psi_{\frac{\alpha}{z},\frac{1-\alpha}{z},z}(\rho,\sigma,\mathbbm{1})=
\frac{1}{\alpha-1}
\log
\operatorname{Tr}
\left(
\sigma^{\frac{1-\alpha}{2z}}
\rho^{\frac{\alpha}{z}}
\sigma^{\frac{1-\alpha}{2z}}
\right)^z.
\end{equation*}
The $\alpha$-$z$ R{\'e}nyi divergence provides a common framework containing both the Petz and sandwiched
families as special cases. Recently, an operational interpretation of the genuinely
two-parameter $\alpha$-$z$ R{\'e}nyi divergences was established
through large-sample and catalytic relative majorization of pairs
of flat states and certain generalizations thereof
\cite{VerhagenTomamichelHaapasalo2026}. Thus, the theory of matrix trace functionals offers
a unified route from trace functional to R{\'e}nyi divergences and their operational applications.

This perspective also provides a systematic mechanism for discovering new
information quantities. Rather than beginning with a desired operational
formula and introducing an auxiliary expression tailored to its proof, one
may start from a trace functional with a distinguished convexity or concavity
property, construct the corresponding divergence and entropy quantities, and
then seek their information-theoretic meanings. A recent example is the
family of two-parameter R{\'e}nyi conditional entropies introduced by Rubboli,
Goodarzi, and Tomamichel from convex trace functionals
\cite{rubboli2024quantum}. These quantities characterize the strong converse
exponent of quantum privacy amplification under the purified distance~\cite{RubboliTomamichel2026},
showing that previously unexplored trace functionals can give rise to
information measures with exact operational significance.

Motivated by this trace-functional viewpoint, we identify in this paper a new
mixed-order order-two R{\'e}nyi divergence generated by a negative-power Lieb
trace functional. For $\alpha>1$, let $\rho$ be a quantum state and $\sigma$ be a positive semi-definite operator 
satisfying
$\operatorname{supp}\rho\subseteq\operatorname{supp}\sigma$, define
\begin{align}
D_{2}^{(\alpha)}(\rho\|\sigma)
:=\log\Psi_{\frac{1-\alpha}{\alpha},-\frac{1}{\alpha},1}(\sigma,\sigma,\rho)
=\log\operatorname{Tr}\left(\rho\sigma^{\frac{1-\alpha}{\alpha}}\rho\sigma^{-\frac{1}{\alpha}}\right).\label{eq:mixed-order-Q}
\end{align}
We set $D_{2}^{(\alpha)}(\rho\|\sigma)=+\infty$ when the support condition is
not satisfied. The quantity therefore combines two different orders and is, in
general, distinct from the Petz, sandwiched, and $\alpha$-$z$ R{\'e}nyi
divergences.

The functional in~\eqref{eq:mixed-order-Q} is closely related to
\[
\Psi_{-p,-1+p,1}(A,A,K),
\qquad p\in(0,1),
\]
whose joint convexity follows from the theory of negative-power Lieb trace
functionals. In \cite{lieb1973convex}, Lieb established that $\Psi_{-p, -1+p, 1}(A, A, K)$ is jointly convex in $(A,K)$ for $p\in(0,1)$. Moreover, joint convexity can imply a data-processing inequality for the trace functional, see \cite{lindblad1974, lindblad1975}. This implies that $D_2^{(\alpha)}(\rho\|\sigma)$ satisfies the data processing inequality for $\alpha>1$. 

From the mixed-order order-two R{\'e}nyi divergence, the mixed-order order-two R{\'e}nyi mutual information of the C-Q state $\rho_{XE}
=
\sum_{x\in\mathcal X}
P_X(x)
|x\rangle\langle x|
\otimes\rho_E^x$ is defined as
\begin{align*}
I_{2}^{(\alpha)}(X:E)_{\rho_{XE}}
:=D_{2}^{(\alpha)}\bigl(\rho_{XE}\big\|\rho_X\otimes\rho_E\bigr),
\end{align*}
and the mixed-order order-two conditional entropy is defined as
\begin{align*}
H_{2}^{(\alpha)}(X|E)_{\rho_{XE}}
:=-D_{2}^{(\alpha)}\bigl(\rho_{XE}\big\|\mathbbm{1}_X\otimes\rho_E\bigr).
\end{align*}
In this paper, we show that these two quantities, together with the sandwiched Rényi mutual information and conditional entropy, characterize the reliability functions of the quantum soft covering and quantum privacy amplification under the sandwiched R{\'e}nyi divergence with order $\alpha\in[2,\infty)$. Specifically, our results are as follows:
\begin{enumerate}
\item Mixed-order order-two R{\'e}nyi divergence (Sect.~\ref{sec:divergence}). We introduce the mixed-order order-two R{\'e}nyi divergence
$D_{2}^{(\alpha)}(\rho\|\sigma)$ and derive its fundamental
structural properties from the convexity theory of negative-power Lieb trace
functionals. 
\item Quantum soft covering (Sect.~\ref{sec:sc}).
Quantum soft covering concerns the approximation of the average output
state induced by a finite random codebook $\mathcal{C}$ through a C-Q
channel $\mathcal{N}:x\mapsto\rho_E^x$ by the target state $\rho_E$
corresponding to a prescribed input distribution $P_X$
\cite{AhlswedeWinter2002}. It is a fundamental tool in quantum
identification, measurement compression, channel simulation, and
quantum cryptography
\cite{AhlswedeWinter2002,bennettEtAl2014,devetak2005private,
DevetakWinter2003,DevetakWinter2005,Winter2005,CaiWinterYeung2004}.
Consider the asymptotic i.i.d. setting, in which the channel is used
$n$ times and $\mathcal{C}_n$ is an i.i.d. random codebook consisting
of $2^{nR}$ codewords drawn independently according to
$P_X^{\otimes n}$.\footnote{\label{fn:common} Without loss of generality, we assume that $2^R$ is an
integer. Consequently, $2^{nR}$ is an integer for every
$n\in\mathbb{N}$. This convention has no effect on the reliability function.} We measure the
discrepancy between the output state induced by $\mathcal{C}_n$ and
the target state $\rho_E^{\otimes n}$ using the sandwiched
R{\'e}nyi divergence of order $\alpha\in[2,\infty)$. When
$R> I_\alpha(X:E)_{\rho_{XE}}$, the expected divergence,
where the expectation is taken over the random codebook, decays
exponentially with the blocklength. We show that its optimal
exponential decay rate, namely, the reliability function, is
\begin{equation}
\min\left\{
R-I_2^{(\alpha)}(X:E)_{\rho_{XE}},
(\alpha-1)
\bigl(R- I_\alpha(X:E)_{\rho_{XE}}\bigr)
\right\},
\end{equation}
where $ I_\alpha(X:E)_{\rho_{XE}}$ denotes the sandwiched
R{\'e}nyi mutual information defined in~\eqref{def:sand-mi}, and
$I_2^{(\alpha)}(X:E)_{\rho_{XE}}$ denotes the mixed-order
order-two R{\'e}nyi mutual information. See
Theorem~\ref{thm:ref-fun}.
Therefore, the reliability function of quantum soft covering is jointly 
determined by the mixed-order order-two R{\'e}nyi
mutual information and the sandwiched R{\'e}nyi mutual information.
\item Quantum privacy amplification (Sect.~\ref{sec:pa}).
Quantum privacy amplification concerns the extraction of private
randomness from a classical random variable $X$ correlated with a
quantum system $E$, and is a fundamental primitive in quantum
cryptography
\cite{DevetakWinter2005,TSSR2011leftover}. To extract private randomness from $n$ i.i.d.\ copies
of $\rho_{XE}$, we employ a random binning function
\[
F_n:\mathcal{X}^n\longrightarrow\{1,\ldots,2^{nR}\}
\]
as a randomness extractor, where each sequence in $\mathcal{X}^n$ is
assigned independently and uniformly to one of the $2^{nR}$ bins.\hyperref[fn:common]{\textsuperscript{\ref*{fn:common}}} The bin
index $F_n(X^n)$ is taken as the extracted key. We measure its
deviation from a uniform key independent of the adversary's quantum
side information using the sandwiched R{\'e}nyi divergence of order
$\alpha\in[2,\infty)$. When
$R<H_\alpha(X|E)_{\rho_{XE}}$, the expected divergence, where the
expectation is taken over the random binning function, decays
exponentially with the blocklength. We show that its optimal
exponential decay rate, namely, the reliability function, is
\begin{equation}
\min\left\{
H_2^{(\alpha)}(X|E)_{\rho_{XE}}-R,\,
(\alpha-1)
\bigl(H_\alpha(X|E)_{\rho_{XE}}-R\bigr)
\right\},
\end{equation}
where $H_\alpha(X|E)_{\rho_{XE}}$ denotes the sandwiched R{\'e}nyi
conditional entropy defined in~\eqref{def:sand-ce}, and
$H_2^{(\alpha)}(X|E)_{\rho_{XE}}$ denotes the mixed-order order-two
R{\'e}nyi conditional entropy. See Theorem~\ref{thm:one-shot}. 
Hence, the reliability function of quantum privacy amplification is 
jointly determined by the mixed-order order-two R{\'e}nyi
conditional entropy and the sandwiched R{\'e}nyi conditional entropy.
\end{enumerate}

Taken together, these results establish a direct chain from matrix trace
functionals to quantum information measures and, ultimately, to operational
error exponents. Starting from the convexity theory of a negative-power Lieb
trace functional, we obtain a new divergence, its associated mutual
information and conditional entropy, and exact operational significance for 
these quantities in the reliability functions of quantum soft covering and quantum privacy amplification.

\emph{Prior works.} For quantum soft covering, \cite{ChengGao2024} derived an achievable lower bound on the reliability function and a lower bound on the strong-converse exponent under the trace distance, rather than complete characterizations of either exponent. The optimal second-order asymptotics under the same criterion were obtained in \cite{ShenGaoCheng2024}. For quantum privacy amplification under the trace-distance security
criterion, the optimal asymptotic rate was characterized
in~\cite{RennerKonig2005,Renner2005}, and the optimal second-order
asymptotics were obtained in~\cite{ShenGaoCheng2024}. At the
exponential scale, an achievable lower bound on the reliability
function under quantum relative entropy was derived
in~\cite{Hayashi2015}. A converse upper bound was later obtained
in~\cite{LYH2023tight}; it coincides with the corresponding achievable
lower bound only in the high-rate regime above a critical rate.
Strong converse exponents under fidelity and
purified distance security criteria were characterized
in~\cite{LiYao2024,RubboliTomamichel2026}.

The remainder of this paper is organized as follows. In Sect.~\ref{sec:pre} we introduce the
necessary notation, definitions and some basic properties. In Sect.~\ref{sec:divergence} we give the definition of the mixed-order order-two R{\'e}nyi divergence and establish some basic properties.
In Sect.~\ref{sec:math} we introduce some mathematical tools and derive a novel inequality. The proofs are given in Sects.~\ref{sec:sc} and~\ref{sec:pa}, where in Sect.~\ref{sec:sc} we prove the characterization of
the reliability function for quantum soft covering, and in Sect.~\ref{sec:pa} we prove the reliability function for quantum privacy amplification.
Finally, in Sect.~\ref{sec:dis} we conclude the paper with some discussion and open questions.
\section{Preliminaries}\label{sec:pre}

\subsection{Basic Notation}

Let $\mathcal{H}$ be a Hilbert space associated with a finite-dimensional
quantum system. We denote the set of linear operators on $\mathcal{H}$
by $\mathcal{L(H)}$, and the positive semidefinite operators by
$\mathcal{P(H)}$. We use the notation $\mathcal{D(H)}$ to represent
the set of quantum states. We use $\mathbbm{1}_{\mathcal{H}}$ to represent the identity
operator on $\mathcal{H}$. When $\mathcal{H}$ is associated
with a system $A$, the above notations $\mathcal{L(H)},\mathcal{P(H)},\mathcal{D(H)}$
and $\mathbbm{1}_{\mathcal{H}}$ are also written as $\mathcal{L}(A),\mathcal{P}(A),\mathcal{D}(A)$
and $\mathbbm{1}_{A}$, respectively. The dimension of system $A$
is denoted by $|A|$. For a linear operator \(A\) and \(p\in(0,\infty)\), we define
\[
\|A\|_p:=\bigl(\operatorname{Tr}|A|^p\bigr)^{1/p},
\qquad
|A|:=\sqrt{A^\dagger A}.
\]
This defines the Schatten $p$-norm for $p\geq1$ and the
Schatten $p$-quasi-norm for $0<p<1$. For \(p=\infty\), \(\|A\|_\infty\) denotes the operator norm, i.e., the largest singular value of \(A\). A classical-quantum (C-Q) state is a bipartite
state of the form
\begin{equation*}
\rho_{XE}=\sum_{x\in\mathcal{X}}q(x)|x\rangle\langle x|_{X}\otimes\rho_{E}^{x},
\end{equation*}
where $\rho_{E}^{x}\in\mathcal{D}(E)$, and $\{|x\rangle\}$ is an orthonormal basis of $\mathcal{H}_{X}$, and $q(x)$ is a probability
distribution on $\mathcal{X}$. 

For any $\alpha\in(0,1)\cup(1,\infty)$, the order-$\alpha$ fidelity between two quantum states $\rho,\sigma\in\mathcal{D}(\mathcal{H})$ is defined as
\begin{equation*}
Q_{\alpha}(\rho\|\sigma)
:=\mathrm{Tr} \left(\sigma^{\frac{1-\alpha}{2\alpha}}\rho\,\sigma^{\frac{1-\alpha}{2\alpha}}\right)^{\alpha}
=\mathrm{Tr} \left(\rho^{\frac{1}{2}}\sigma^{\frac{1-\alpha}{\alpha}}\rho^{\frac{1}{2}}\right)^{\alpha}
=\left\|\sigma^{\frac{1-\alpha}{2\alpha}}\rho\,\sigma^{\frac{1-\alpha}{2\alpha}}\right\|_{\alpha}^{\alpha},
\end{equation*}
where the second equality follows from the fact that the operators
$\sigma^{\frac{1-\alpha}{2\alpha}}\rho\,\sigma^{\frac{1-\alpha}{2\alpha}}$
and
$\rho^{\frac{1}{2}}\sigma^{\frac{1-\alpha}{\alpha}}\rho^{\frac{1}{2}}$
have the same non-zero eigenvalues.
The order-$\alpha$ sandwiched Rényi divergence~\cite{MDSFT2013on,WWY2014strong} is then defined as
\begin{equation*}
D_\alpha(\rho\|\sigma)
:=\frac{1}{\alpha-1}\log Q_\alpha(\rho\|\sigma).
\end{equation*}
In the limit $\alpha\to1$, it reduces to the quantum relative entropy~\cite{Umegaki1954conditional}
\begin{equation*}
\lim_{\alpha\to1} D_\alpha(\rho\|\sigma)
= D(\rho\|\sigma)
:=\mathrm{Tr}\,\rho(\log \rho - \log \sigma).
\end{equation*}
The sandwiched R{\'e}nyi divergence is additive under tensor products. That is, for any  $\rho,\sigma\in\mathcal{P(A)}$ and $\rho',\sigma'\in\mathcal{P(A')}$
\begin{equation*}
D_\alpha
\left(
\rho\otimes\rho'
\,\middle\|\,
\sigma\otimes\sigma'
\right)
=
D_\alpha(\rho\|\sigma)
+
D_\alpha(\rho'\|\sigma').
\end{equation*}

Let $\rho_{XE}=\sum_{x\in\mathcal{X}} p(x)\,|x\rangle\langle x|_X \otimes \rho_E^x$ be a C-Q state. The order-$\alpha$ sandwiched Rényi conditional entropy~\cite{TBH2014relating} is defined as
\begin{align}\label{def:sand-ce}
H_\alpha(X|E)_{\rho_{XE}}
:=& - D_\alpha(\rho_{XE}\,\|\,\mathbbm{1}_{\mathcal{X}} \otimes \rho_E) \nb\\
=& \frac{1}{1-\alpha}\log
\mathrm{Tr}\sum_{x}
\left(
\rho_E^{\frac{1-\alpha}{2\alpha}}\, p(x)\rho_E^x \,\rho_E^{\frac{1-\alpha}{2\alpha}}
\right)^{\alpha}.
\end{align}
Similarly, the order-$\alpha$ sandwiched Rényi mutual information is defined as
\begin{align}\label{def:sand-mi}
I_\alpha(X:E)_{\rho_{XE}}
:=& D_\alpha(\rho_{XE}\,\|\,\rho_X \otimes \rho_E)\nb \\
=& \frac{1}{\alpha-1}\log
\mathrm{Tr}\sum_{x} p(x)
\left(
\rho_E^{\frac{1-\alpha}{2\alpha}}\,\rho_E^x\,\rho_E^{\frac{1-\alpha}{2\alpha}}
\right)^{\alpha}.
\end{align}
In the limit $\alpha \to 1$, these quantities recover the standard von Neumann conditional entropy and mutual information, i.e.,
\begin{align*}
H(X|E)_{\rho_{XE}}
:=& H(\rho_{XE}) - H(\rho_E), \\
I(X:E)_{\rho_{XE}}
:=& D(\rho_{XE}\,\|\,\rho_X \otimes \rho_E)
= H(\rho_X) + H(\rho_E) - H(\rho_{XE}),
\end{align*}
where $H(\rho) := -\mathrm{Tr}\,\rho \log \rho$ is the von Neumann entropy.

Throughout this paper, the
functions $\log$ and $\exp$ are with base $2$, and $\ln$ is with
base $e$.

\section{Mixed-order $(2,\alpha)$ R{\'e}nyi Divergence}\label{sec:divergence}
In this section, we investigate the basic properties of the introduced mixed-order order-two R\'enyi divergence.

The appearance of such a quantity in quantum information theory is new, but its underlying functional form has appeared in the study of quantum Markov semigroups. It is the \emph{quantum $\chi^2$-divergence} in \cite{chi-diver}. Besides, it is also the \emph{$s$-weighted inner product} in \cite{CM17} and plays the role of a divergence measure there.

In the following proposition, we derive the additivity of the mixed-order order-two R{\'e}nyi divergence.
\begin{proposition}[Additivity] \label{prop:multiplicativity}
Let $\rho_1$ and $\rho_2$ be quantum states on $\mathcal H_1$ and
$\mathcal H_2$, respectively, and let $\sigma_1$ and $\sigma_2$ be
positive semi-definite operators on the corresponding Hilbert spaces.
For the product states $\rho_{12} = \rho_1 \otimes \rho_2$ and $\sigma_{12} = \sigma_1 \otimes \sigma_2$, we have
\begin{equation*}
D_2^{(\alpha)}(\rho_{12}\|\sigma_{12}) = D_2^{(\alpha)}(\rho_1\|\sigma_1) + D_2^{(\alpha)}(\rho_2\|\sigma_2).
\end{equation*}
\end{proposition}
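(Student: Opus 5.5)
The plan is to compute the trace functional directly and use the multiplicativity of functional calculus and of the trace under tensor products.

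First I dispose of the support condition, so that both sides are either finite or $+\infty$ together. Since $\operatorname{supp}(\rho_1\otimes\rho_2)=\operatorname{supp}\rho_1\otimes\operatorname{supp}\rho_2$, and similarly for $\sigma_{12}$, we have $\operatorname{supp}\rho_{12}\subseteq\operatorname{supp}\sigma_{12}$ if and only if $\operatorname{supp}\rho_i\subseteq\operatorname{supp}\sigma_i$ for both $i=1,2$. The "if" direction is immediate. For the "only if" direction, note that $\rho_2\neq0$, because it is a state. Pick a unit vector $v$ in its support and take any $u\in\operatorname{supp}\rho_1$. Then $u\otimes v\in\operatorname{supp}\sigma_1\otimes\operatorname{supp}\sigma_2$, which forces $u\in\operatorname{supp}\sigma_1$; the case $i=2$ is symmetric. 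So if the support condition fails on one side, both sides equal $+\infty$. Here I use the convention that $+\infty$ plus a finite number, or plus $+\infty$, is $+\infty$. Each finite term is a genuine real number, since the trace will turn out to be strictly positive.

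Next, assume the support conditions hold. Negative powers are understood on the support, that is, as generalized inverses. For $\sigma_{12}=\sigma_1\otimes\sigma_2$, the spectral decomposition is built from products of eigenprojections of the factors. Hence for any real $t$, including negative $t$ taken on the support,
\begin{equation*}
(\sigma_1\otimes\sigma_2)^t=\sigma_1^t\otimes\sigma_2^t .
\end{equation*}
The support of $\sigma_{12}$ is exactly the span of products of eigenvectors with nonzero eigenvalues on both factors, so the generalized inverse factorizes as well. I apply this with $t=\frac{1-\alpha}{\alpha}$ and $t=-\frac1\alpha$. Then the mixed-product rule $(A\otimes B)(C\otimes D)=AC\otimes BD$ and $\operatorname{Tr}(A\otimes B)=\operatorname{Tr}A\cdot\operatorname{Tr}B$ give
\begin{equation*}
\operatorname{Tr}\Bigl(\rho_{12}\sigma_{12}^{\frac{1-\alpha}{\alpha}}\rho_{12}\sigma_{12}^{-\frac1\alpha}\Bigr)
=\prod_{i=1}^{2}\operatorname{Tr}\Bigl(\rho_i\sigma_i^{\frac{1-\alpha}{\alpha}}\rho_i\sigma_i^{-\frac1\alpha}\Bigr).
\end{equation*}
Taking $\log$ yields the claimed additivity.

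The only point needing care is that each factor is strictly positive, so that the logarithms are finite. Write $X=\sigma_i^{\frac{1-\alpha}{2\alpha}}\rho_i\sigma_i^{-\frac{1}{2\alpha}}$. Because the powers of $\sigma_i$ commute, the trace equals $\operatorname{Tr}(X^\dagger X)=\|X\|_2^2\ge0$. It is nonzero, since $\rho_i\neq0$ is supported inside $\operatorname{supp}\sigma_i$, where the powers of $\sigma_i$ are invertible. This positivity check and the support bookkeeping are the main, if mild, obstacles; the rest is routine.
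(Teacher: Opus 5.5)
Your proposal is correct and follows the same route as the paper: factorize $\sigma_{12}^{t}=\sigma_1^{t}\otimes\sigma_2^{t}$, then apply the mixed-product rule and $\operatorname{Tr}(A\otimes B)=\operatorname{Tr}A\cdot\operatorname{Tr}B$, and take logarithms. Your support bookkeeping (both sides are $+\infty$ together) and the strict positivity check $\operatorname{Tr}X^\dagger X=\|X\|_2^2>0$ fill in details that the paper leaves implicit.
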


\begin{proof}
By the definition of the mixed-order $(2,\alpha)$ R{\'e}nyi relative entropy, we have
\begin{equation*}
D_2^{(\alpha)}(\rho_{12}\|\sigma_{12}) = \log \operatorname{Tr}\left( \rho_{12} \sigma_{12}^{\frac{1-\alpha}{\alpha}} \rho_{12} \sigma_{12}^{-\frac{1}{\alpha}} \right).
\end{equation*}
Using the property $(A \otimes B)^\beta = A^\beta \otimes B^\beta$, the internal terms can be written as
\begin{align*}
\sigma_{12}^{\frac{1-\alpha}{\alpha}} = \sigma_1^{\frac{1-\alpha}{\alpha}} \otimes \sigma_2^{\frac{1-\alpha}{\alpha}}, \,
\sigma_{12}^{-\frac{1}{\alpha}} = \sigma_1^{-\frac{1}{\alpha}} \otimes \sigma_2^{-\frac{1}{\alpha}}.
\end{align*}
Substituting these into the trace and using the operator product rule $(A_1 \otimes A_2)(B_1 \otimes B_2) = (A_1B_1) \otimes (A_2B_2)$, we obtain
\begin{align*}
\rho_{12} \sigma_{12}^{\frac{1-\alpha}{\alpha}} \rho_{12} \sigma_{12}^{-\frac{1}{\alpha}}
&= (\rho_1 \otimes \rho_2) \left( \sigma_1^{\frac{1-\alpha}{\alpha}} \otimes \sigma_2^{\frac{1-\alpha}{\alpha}} \right) (\rho_1 \otimes \rho_2) \left( \sigma_1^{-\frac{1}{\alpha}} \otimes \sigma_2^{-\frac{1}{\alpha}} \right) \nonumber\\
&= \left( \rho_1 \sigma_1^{\frac{1-\alpha}{\alpha}} \rho_1 \sigma_1^{-\frac{1}{\alpha}} \right) \otimes \left( \rho_2 \sigma_2^{\frac{1-\alpha}{\alpha}} \rho_2 \sigma_2^{-\frac{1}{\alpha}} \right).
\end{align*}
Since the trace of a tensor product is the product of the individual traces, i.e., $\operatorname{Tr}(A \otimes B) = \operatorname{Tr}(A)\operatorname{Tr}(B)$, it follows that
\begin{align*}
D_2^{(\alpha)}(\rho_{12}\|\sigma_{12})
&= \log \left[ \operatorname{Tr}\left( \rho_1 \sigma_1^{\frac{1-\alpha}{\alpha}} \rho_1 \sigma_1^{-\frac{1}{\alpha}} \right) \times \operatorname{Tr}\left( \rho_2 \sigma_2^{\frac{1-\alpha}{\alpha}} \rho_2 \sigma_2^{-\frac{1}{\alpha}} \right) \right] \nonumber\\
&= \log \operatorname{Tr}\left( \rho_1 \sigma_1^{\frac{1-\alpha}{\alpha}} \rho_1 \sigma_1^{-\frac{1}{\alpha}} \right) + \log \operatorname{Tr}\left( \rho_2 \sigma_2^{\frac{1-\alpha}{\alpha}} \rho_2 \sigma_2^{-\frac{1}{\alpha}} \right) \nonumber\\
&= D_2^{(\alpha)}(\rho_1\|\sigma_1) + D_2^{(\alpha)}(\rho_2\|\sigma_2).
\end{align*}
This completes the proof.
\end{proof}

As mentioned in the Introduction, Lieb's joint convexity theorem
\cite{lieb1973convex} implies that the trace functional defining
\(D_2^{(\alpha)}(\rho\Vert\sigma)\) is jointly convex in
\((\rho,\sigma)\) for every \(\alpha>1\). By the standard argument of
Lindblad~\cite{lindblad1974,lindblad1975}, this joint convexity,
together with unitary invariance, implies monotonicity under
completely positive trace-preserving maps. Consequently,
\(D_2^{(\alpha)}\) satisfies the data-processing inequality for every
\(\alpha>1\). We omit the standard proof.
\begin{proposition}[Data-processing inequality]
\label{prop:D2-alpha-DPI}
Let \(\alpha>1\), and let
\(\rho,\sigma\in\mathcal{P}(\mathcal{H}_A)\) satisfy the support
condition required for \(D_2^{(\alpha)}(\rho\Vert\sigma)\) to be
finite. Then, for every completely positive trace-preserving map
\(\mathcal{N}:\mathcal{L}(\mathcal{H}_A)
\rightarrow\mathcal{L}(\mathcal{H}_B)\),
\begin{equation}
D_2^{(\alpha)}
\bigl(\mathcal{N}(\rho)\Vert\mathcal{N}(\sigma)\bigr)
\leq
D_2^{(\alpha)}(\rho\Vert\sigma).
\label{eq:D2-alpha-DPI}
\end{equation}
\end{proposition}

In the following two propositions, we compare the mixed-order order-two R{\'e}nyi divergence with the sandwiched R{\'e}nyi divergence.
\begin{proposition} \label{prop:ordering}
Let $\alpha\in[2,\infty)$. For any quantum state $\rho\in\mathcal{D(H)}$ and positive semi-definite operator $\sigma\in\mathcal{P(H)}$, we have
\begin{equation}\label{eq:14:32}
D_2^{(\alpha)}(\rho\|\sigma) \leq D_\alpha(\rho\|\sigma).
\end{equation}
In particular, $D_2^{(2)}(\rho\|\sigma) = D_2(\rho\|\sigma)$.
\end{proposition}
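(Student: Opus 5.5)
The plan is to reduce the claim to a Hölder-type interpolation after a change of variables that absorbs the sandwiching. Assume $\operatorname{supp}\rho\subseteq\operatorname{supp}\sigma$; otherwise both sides are $+\infty$. Inverse powers of $\sigma$ are taken on its support.

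Set $\gamma:=\frac{\alpha-1}{\alpha}$ and $Y:=\sigma^{-\gamma/2}\rho\,\sigma^{-\gamma/2}\ge 0$. Then $\rho=\sigma^{\gamma/2}Y\sigma^{\gamma/2}$, and $Q_\alpha(\rho\|\sigma)=\operatorname{Tr}Y^\alpha$.

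\textbf{Step 1: rewrite both quantities in terms of $Y$.} Substituting into the definition and using cyclicity, the powers of $\sigma$ combine as $\gamma/2+\gamma/2-\gamma=0$ between the two copies of $Y$, and as $\gamma-\frac1\alpha=\frac{\alpha-2}{\alpha}$ outside them. This gives
\begin{equation*}
\operatorname{Tr}\bigl(\rho\sigma^{\frac{1-\alpha}{\alpha}}\rho\sigma^{-\frac1\alpha}\bigr)=\operatorname{Tr}\bigl(Y^2\sigma^{\beta}\bigr),\qquad \beta:=\tfrac{\alpha-2}{\alpha}\in[0,1).
\end{equation*}
The normalization $\operatorname{Tr}\rho=1$ becomes $\operatorname{Tr}(Y\sigma^{\gamma})=1$.

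For $\alpha=2$ we have $\beta=0$, so the left side is $\operatorname{Tr}Y^2=Q_2(\rho\|\sigma)$. This gives the equality $D_2^{(2)}=D_2$ immediately.

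\textbf{Step 2: commutative heuristic.} Put $\theta:=\frac{\alpha-2}{\alpha-1}\in[0,1)$. Then $\beta=\theta\gamma$ and $2=\theta\cdot 1+(1-\theta)\alpha$. So in the commuting case Hölder gives
\begin{equation*}
\operatorname{Tr}(Y^2\sigma^\beta)\le\bigl(\operatorname{Tr}Y\sigma^\gamma\bigr)^{\theta}\bigl(\operatorname{Tr}Y^\alpha\bigr)^{1-\theta}=\bigl(\operatorname{Tr}Y^\alpha\bigr)^{\frac1{\alpha-1}}.
\end{equation*}
Taking $\log$ yields $D_2^{(\alpha)}(\rho\|\sigma)\le\frac1{\alpha-1}\log Q_\alpha(\rho\|\sigma)=D_\alpha(\rho\|\sigma)$. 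Note that this requires no normalization of $\sigma$; only $\operatorname{Tr}\rho=1$ is used.

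\textbf{Step 3: noncommutative interpolation (main obstacle).} I will prove the same inequality by Hadamard's three-lines theorem. On the strip $0\le\operatorname{Re}z\le1$, define
\begin{equation*}
F(z):=\operatorname{Tr}\Bigl(Y^{\frac{\alpha-(\alpha-1)z}{2}}\,\sigma^{\gamma z}\,Y^{\frac{\alpha-(\alpha-1)z}{2}}\Bigr),
\end{equation*}
with powers taken on supports; $F$ is bounded and analytic on the strip. At $z=\theta$ the $Y$-exponents sum to $\alpha-(\alpha-1)\theta=2$ and the $\sigma$-exponent is $\gamma\theta=\beta$, so $F(\theta)=\operatorname{Tr}(Y^2\sigma^\beta)$.

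On $\operatorname{Re}z=0$, the factors $\sigma^{i\gamma t}$ and $Y^{-i(\alpha-1)t/2}$ are partial isometries. Hölder then gives $|F(it)|\le\|Y^{\alpha/2}\|_2^2=\operatorname{Tr}Y^\alpha$.

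On $\operatorname{Re}z=1$, split $\sigma^{\gamma(1+it)}=\sigma^{\gamma/2}\sigma^{i\gamma t}\sigma^{\gamma/2}$ and apply Cauchy--Schwarz for the Hilbert--Schmidt inner product. Each factor has the form $\|\sigma^{\gamma/2}WY^{1/2}\|_2$, where $W$ is a product of imaginary powers. These imaginary powers commute with $\sigma^{\gamma/2}$ and with $Y^{1/2}$ respectively, so they can be moved to the outside, and each factor equals $\|\sigma^{\gamma/2}Y^{1/2}\|_2$. Hence
\begin{equation*}
|F(1+it)|\le\|\sigma^{\gamma/2}Y^{1/2}\|_2^2=\operatorname{Tr}(Y\sigma^\gamma)=1.
\end{equation*}
The three-lines theorem then yields $F(\theta)\le(\operatorname{Tr}Y^\alpha)^{1-\theta}\cdot1^\theta=(\operatorname{Tr}Y^\alpha)^{\frac1{\alpha-1}}$, which is the claim.

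The delicate points are checking that the imaginary powers can be absorbed as described, and handling kernels by restricting to $\operatorname{supp}\sigma$, or by replacing $\sigma$ with $\sigma+\varepsilon\mathbbm{1}$ and letting $\varepsilon\to0$.
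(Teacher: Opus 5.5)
Your proof is correct, but it takes a different route from the paper's. Both proofs start from the same reduction. With $Y=\sigma^{\frac{1-\alpha}{2\alpha}}\rho\,\sigma^{\frac{1-\alpha}{2\alpha}}$ (the paper calls it $X$), the claim becomes $\operatorname{Tr}Y^2\sigma^{\frac{\alpha-2}{\alpha}}\le(\operatorname{Tr}Y^\alpha)^{\frac1{\alpha-1}}$, given $\operatorname{Tr}Y\sigma^{\frac{\alpha-1}{\alpha}}=1$.

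The paper then writes $\operatorname{Tr}Y^2\sigma^{\frac{\alpha-2}{\alpha}}=\operatorname{Tr}Y^{\frac{\alpha}{\alpha-1}}\bigl(Y^{\frac{\alpha-2}{2(\alpha-1)}}\sigma^{\frac{\alpha-2}{\alpha}}Y^{\frac{\alpha-2}{2(\alpha-1)}}\bigr)$. Schatten--H\"older with exponents $\alpha-1$ and $\frac{\alpha-1}{\alpha-2}$ extracts the factor $(\operatorname{Tr}Y^\alpha)^{\frac1{\alpha-1}}$. The Araki--Lieb--Thirring inequality then bounds the remaining $\frac{\alpha-1}{\alpha-2}$-norm by reducing it to $\operatorname{Tr}Y\sigma^{\frac{\alpha-1}{\alpha}}=1$.

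You instead build one scalar analytic family whose two boundary lines carry exactly these two endpoint quantities. The three-lines theorem then gives $(\operatorname{Tr}Y^\alpha)^{1-\theta}\cdot 1^{\theta}$ in a single step. Your boundary estimates need only two facts:
\begin{itemize}
\item Cauchy--Schwarz for the Hilbert--Schmidt inner product;
\item imaginary powers are partial isometries that commute with the corresponding real powers.
\end{itemize}
Your support bookkeeping is sound: $\sigma^{\gamma/2}Y^{1/2}$ has range in $\operatorname{supp}\sigma$ and vanishes off $\operatorname{supp}Y$, so the partial isometries $\sigma^{i\gamma t}$ and $Y^{\pm i(\alpha-1)t/2}$ leave its Hilbert--Schmidt norm unchanged.

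In effect, you inline the complex-interpolation argument that underlies ALT, tailored to this specific functional. This makes your proof self-contained. It also explains where the exponent $\frac{1}{\alpha-1}$ comes from: it is $1-\theta$ at the interpolation point where the $Y$-exponent equals $2$, exactly as in your commutative H\"older heuristic. The paper's route is shorter if one is willing to use Schatten--H\"older and ALT as black boxes.
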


\begin{proof}
From the definition of $D_2^{(2)}(\rho\|\sigma)$, it is easy to check that $D_2^{(2)}(\rho\|\sigma) = D_2(\rho\|\sigma)$. Next, we prove~\eqref{eq:14:32} for $\alpha\in(2,\infty)$.
For notational convenience, set
\begin{equation*}
X:=
\sigma^{\frac{1-\alpha}{2\alpha}}
\rho
\sigma^{\frac{1-\alpha}{2\alpha}} .
\end{equation*}
Then
\begin{align*}
Q_\alpha(\rho\|\sigma)
=
\operatorname{Tr}X^\alpha,\qquad D_2^{(\alpha)}(\rho\|\sigma)
=\log \operatorname{Tr}
X^2\sigma^{\frac{\alpha-2}{\alpha}} .
\end{align*}
By cyclicity of the trace, we may rewrite
\begin{align*}
\operatorname{Tr}
X^2\sigma^{\frac{\alpha-2}{\alpha}}
=
\operatorname{Tr}
X^{\frac{\alpha}{\alpha-1}}
\left(
X^{\frac{\alpha-2}{2(\alpha-1)}}
\sigma^{\frac{\alpha-2}{\alpha}}
X^{\frac{\alpha-2}{2(\alpha-1)}}
\right).
\end{align*}
Applying the H{\"o}lder inequality with conjugate exponents
$\alpha-1$ and $\frac{\alpha-1}{\alpha-2}$, we obtain
\begin{align}\label{618:10}
&\operatorname{Tr}
X^2\sigma^{\frac{\alpha-2}{\alpha}} \nonumber\\
\le\;&
\left\|
X^{\frac{\alpha}{\alpha-1}}
\right\|_{\alpha-1}
\left\|
X^{\frac{\alpha-2}{2(\alpha-1)}}
\sigma^{\frac{\alpha-2}{\alpha}}
X^{\frac{\alpha-2}{2(\alpha-1)}}
\right\|_{\frac{\alpha-1}{\alpha-2}} \nonumber\\
=\;&
\left(\operatorname{Tr}X^\alpha\right)^{\frac1{\alpha-1}}
\left[
\operatorname{Tr}
\left(
X^{\frac{\alpha-2}{2(\alpha-1)}}
\sigma^{\frac{\alpha-2}{\alpha}}
X^{\frac{\alpha-2}{2(\alpha-1)}}
\right)^{\frac{\alpha-1}{\alpha-2}}
\right]^{\frac{\alpha-2}{\alpha-1}} .
\end{align}
It remains to show that the second factor is at most one. By the
Araki--Lieb--Thirring inequality~\cite{Araki1990inequality}, for positive operators $A,B$ and
$s\ge 1$,
\begin{equation*}
\operatorname{Tr}
\left(
B^{1/2}AB^{1/2}
\right)^s
\le
\operatorname{Tr}
B^{s/2}A^sB^{s/2}.
\end{equation*}
Applying this inequality with
$s=\frac{\alpha-1}{\alpha-2},
A=\sigma^{\frac{\alpha-2}{\alpha}},
B=X^{\frac{\alpha-2}{\alpha-1}}$, one obtains
\begin{align}\label{618:11}
\operatorname{Tr}
\left(
X^{\frac{\alpha-2}{2(\alpha-1)}}
\sigma^{\frac{\alpha-2}{\alpha}}
X^{\frac{\alpha-2}{2(\alpha-1)}}
\right)^{\frac{\alpha-1}{\alpha-2}}
\le&
\operatorname{Tr}
X^{1/2}
\sigma^{\frac{\alpha-1}{\alpha}}
X^{1/2} \nonumber\\
=&
\operatorname{Tr}
X\sigma^{\frac{\alpha-1}{\alpha}} = 1.
\end{align}
Consequently, the combination of~\eqref{618:10} and \eqref{618:11} yields
\begin{equation*}
D_2^{(\alpha)}(\rho\|\sigma)
= \log
\operatorname{Tr}
X^2\sigma^{\frac{\alpha-2}{\alpha}}
\le\log
\left(\operatorname{Tr}X^\alpha\right)^{\frac1{\alpha-1}}=
D_\alpha(\rho\|\sigma).
\end{equation*}
This completes the proof.
\end{proof}

\begin{proposition}
Let $\alpha\in(2,\infty)$. For any quantum state $\rho\in\mathcal{D(H)}$ and positive semi-definite operator $\sigma\in\mathcal{P(H)}$, we have
\begin{equation*}
D_2^{(\alpha)}(\rho\|\sigma)\geq D_2(\rho\|\sigma).
\end{equation*}
Moreover, the equality holds if and only if
$[\rho,\sigma]=0$. When $\alpha=2$,
\begin{equation*}
D_2^{(\alpha)}(\rho\|\sigma)= D_2(\rho\|\sigma).
\end{equation*}
\end{proposition}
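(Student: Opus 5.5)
We need to show $\operatorname{Tr}\rho\sigma^{\frac{1-\alpha}{\alpha}}\rho\sigma^{-\frac1\alpha}\ge \operatorname{Tr}\rho\sigma^{-\frac12}\rho\sigma^{-\frac12}$. Here $D_2(\rho\|\sigma)=\log\operatorname{Tr}(\sigma^{-1/4}\rho\sigma^{-1/4})^2=\log\operatorname{Tr}\rho\sigma^{-1/2}\rho\sigma^{-1/2}$. The case $\alpha=2$ was already noted in Proposition~\ref{prop:ordering}. Throughout we restrict to $\operatorname{supp}\sigma$; the case where the support condition fails is trivial since both sides are $+\infty$.

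\textbf{Spectral representation.} Write $\sigma=\sum_i \lambda_i |e_i\rangle\langle e_i|$ with $\lambda_i>0$, and set $r_{ij}=|\langle e_i|\rho|e_j\rangle|^2\ge0$. Note that $r_{ij}=r_{ji}$. Then
\begin{equation*}
\operatorname{Tr}\rho\sigma^{a}\rho\sigma^{b}=\sum_{i,j} r_{ij}\lambda_i^{a}\lambda_j^{b}.
\end{equation*}
Use the symmetry $r_{ij}=r_{ji}$ to symmetrize. Put $p=\frac{\alpha-1}{\alpha}$ and $q=\frac1\alpha$, so $p+q=1$. The mixed-order quantity becomes
\begin{equation*}
\frac12\sum_{i,j} r_{ij}\bigl(\lambda_i^{-p}\lambda_j^{-q}+\lambda_i^{-q}\lambda_j^{-p}\bigr),
\end{equation*}
while the order-two quantity is $\sum_{i,j} r_{ij}\lambda_i^{-1/2}\lambda_j^{-1/2}$.

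\textbf{Termwise comparison.} For each pair $(i,j)$, apply AM--GM:
\begin{equation*}
\tfrac12\bigl(\lambda_i^{-p}\lambda_j^{-q}+\lambda_i^{-q}\lambda_j^{-p}\bigr)\ge \sqrt{\lambda_i^{-(p+q)}\lambda_j^{-(p+q)}}=\lambda_i^{-1/2}\lambda_j^{-1/2}.
\end{equation*}
Equivalently, with $t=\lambda_i/\lambda_j$, this is the statement that $\cosh\bigl((p-\tfrac12)\ln t\bigr)\ge1$. Summing over $(i,j)$ with the nonnegative weights $r_{ij}$ gives the inequality. Taking logarithms, which is monotone, yields $D_2^{(\alpha)}\ge D_2$.

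\textbf{Equality case.} For $\alpha>2$ we have $p\neq q$, so termwise equality holds iff $\lambda_i=\lambda_j$. Hence equality in the sum holds iff $r_{ij}=0$ whenever $\lambda_i\neq\lambda_j$. This says exactly that $\rho$ has no matrix elements between distinct eigenspaces of $\sigma$, i.e.\ $\rho$ is block diagonal with respect to the spectral projections of $\sigma$, which is equivalent to $[\rho,\sigma]=0$. Conversely, if $[\rho,\sigma]=0$, both quantities reduce to $\operatorname{Tr}\rho^2\sigma^{-1}$.

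\textbf{Main obstacle.} The only delicate point is the support convention. When $\operatorname{supp}\rho\subseteq\operatorname{supp}\sigma$, commutation with $\sigma$ restricted to its support is equivalent to $[\rho,\sigma]=0$ on the whole space, so the characterization holds. The remaining care is correctly handling the degenerate eigenvalues of $\sigma$ in the block-diagonal argument.
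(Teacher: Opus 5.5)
Your proof is correct and is essentially the paper's argument: expand both traces in an eigenbasis of \(\sigma\), symmetrize in \((i,j)\), and compare termwise. Your AM--GM step is exactly the paper's bound \(\cosh\bigl((t-\tfrac12)\log\frac{\lambda_j}{\lambda_i}\bigr)\ge 1\) at \(t=\frac{\alpha-1}{\alpha}\), and your equality analysis (\(r_{ij}=0\) whenever \(\lambda_i\neq\lambda_j\)) is the same as the paper's.

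One small caution: the ``only if'' part must be read under \(\operatorname{supp}\rho\subseteq\operatorname{supp}\sigma\), as your last paragraph does. Otherwise both sides equal \(+\infty\) even for non-commuting pairs (e.g.\ \(\rho=|+\rangle\langle+|\), \(\sigma=|0\rangle\langle0|\)), so calling that case trivial is right only for the inequality.
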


\begin{proof}
When $\alpha=2$, the result can be easily checked. Now put $\alpha>2$. Set
\begin{equation*}
F(t):=\operatorname{Tr}\left(\rho\sigma^{-t}\rho\sigma^{t-1}\right).
\end{equation*}
We show that $F(t)$ attains its minimum at $t=1/2$. Let
$\sigma=\sum_i\lambda_i |i\rangle\langle i|$, where $\lambda_i>0$. Then
\begin{equation*}
F(t)=\sum_{i,j}|\langle i|\rho|j\rangle|^2
\lambda_i^{-t}\lambda_j^{t-1}.
\end{equation*}
By symmetrizing the indices $i$ and $j$, we obtain
\begin{align*}
F(t)
&=\frac{1}{2}\sum_{i,j}|\langle i|\rho|j\rangle|^2
\left(
\lambda_i^{-t}\lambda_j^{t-1}
+
\lambda_j^{-t}\lambda_i^{t-1}
\right)  \\
&=\sum_{i,j}|\langle i|\rho|j\rangle|^2
(\lambda_i\lambda_j)^{-1/2}
\cosh\left(
\left(t-\frac{1}{2}\right)
\log\frac{\lambda_j}{\lambda_i}
\right).
\end{align*}
Since the hyperbolic cosine function $\cosh(x)$ is even and achieves its global minimum uniquely at $x=0$, the function $g_{i,j}(t)$ is symmetric around $t=\frac{1}{2}$ and attains its global minimum at $t=\frac{1}{2}$. It follows that
\begin{equation*}
F(t)\geq F\left(\frac{1}{2}\right),
\qquad \forall t\in\mathbb{R}.
\end{equation*}
Taking $t=(\alpha-1)/\alpha$, we get
\begin{align*}
\operatorname{Tr}\left(
\rho\sigma^{-\frac{\alpha-1}{\alpha}}
\rho\sigma^{-\frac1\alpha}
\right)
=
F\left(\frac{\alpha-1}{\alpha}\right)
\geq
F\left(\frac{1}{2}\right)
=
\operatorname{Tr}\left(
\rho\sigma^{-\frac{1}{2}}
\rho\sigma^{-\frac{1}{2}}
\right).
\end{align*}
Taking logarithms gives
\begin{equation*}
D_2^{(\alpha)}(\rho\|\sigma)\geq D_2(\rho\|\sigma).
\end{equation*}

It remains to characterize equality. If $[\rho,\sigma]=0$, then $\rho$ has
no matrix elements between distinct eigenspaces of $\sigma$, so only terms with
$\lambda_i=\lambda_j$ survive. Hence equality holds.

Conversely, assume equality holds. Since each summand above is nonnegative after
subtracting its value at $t=1/2$, equality implies that, whenever
$\langle i|\rho|j\rangle\neq0$, we must have
\begin{equation*}
\left(t-\frac{1}{2}\right)
\log\frac{\lambda_j}{\lambda_i}=0.
\end{equation*}
Thus $\lambda_i=\lambda_j$. Therefore $\rho$ has no matrix elements between
distinct eigenspaces of $\sigma$, which is equivalent to
$[\rho,\sigma]=0$.
\end{proof}

With the preceding propositions, we can immediately obtain the following corollary, which establishes the additivity and ordering relations of the mixed-order order-two R{\'e}nyi conditional entropy and mutual information.
\begin{corollary}[Additivity and ordering relations]
\label{cor:additivity-ordering-information}
Let $\alpha\in[2,\infty)$, and let
$\rho_{XE}$ and $\tau_{X'E'}$ be two C-Q states. Then the
mixed-order order-two R{\'e}nyi conditional entropy is additive:
\begin{align*}
&H_2^{(\alpha)}
\left(
XX'\middle|EE'
\right)_{\rho_{XE}\otimes\tau_{X'E'}}
=
H_2^{(\alpha)}(X|E)_{\rho_{XE}}
+
H_2^{(\alpha)}(X'|E')_{\tau_{X'E'}},
\end{align*}
and the mixed-order order-two R{\'e}nyi mutual information is
additive. That is
\begin{align*}
&I_2^{(\alpha)}
\left(
XX':EE'
\right)_{\rho_{XE}\otimes\tau_{X'E'}}
=
I_2^{(\alpha)}(X:E)_{\rho_{XE}}
+
I_2^{(\alpha)}(X':E')_{\tau_{X'E'}}.
\end{align*}
Moreover, the following ordering relations hold.
\begin{align*}
H_\alpha(X|E)_{\rho_{XE}}
\leq
H_2^{(\alpha)}(X|E)_{\rho_{XE}}
\leq
H_2(X|E)_{\rho_{XE}},
\end{align*}
and
\begin{align*}
I_2(X:E)_{\rho_{XE}}
\leq
I_2^{(\alpha)}(X:E)_{\rho_{XE}}
\leq
I_\alpha(X:E)_{\rho_{XE}}.
\end{align*}
In particular, when $\alpha=2$, both inequalities reduce to
equalities.
\end{corollary}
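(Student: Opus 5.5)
The corollary follows by specializing the three preceding propositions to particular pairs $(\rho,\sigma)$. For additivity, note that for C-Q states the joint state is $\rho_{XE}\otimes\tau_{X'E'}$, which after reordering the systems to $XX'EE'$ is a C-Q state with classical part $XX'$. Its reference operators factor as
\[
\mathbbm{1}_{XX'}\otimes(\rho\otimes\tau)_{EE'}=(\mathbbm{1}_X\otimes\rho_E)\otimes(\mathbbm{1}_{X'}\otimes\tau_{E'})
\]
and
\[
(\rho\otimes\tau)_{XX'}\otimes(\rho\otimes\tau)_{EE'}=(\rho_X\otimes\rho_E)\otimes(\tau_{X'}\otimes\tau_{E'}),
\]
again up to the same reordering. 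The reordering is a unitary conjugation applied to both arguments, and the trace functional in \eqref{eq:mixed-order-Q} is invariant under joint unitary conjugation, since $(U\sigma U^\dagger)^\beta=U\sigma^\beta U^\dagger$. Proposition~\ref{prop:multiplicativity} then applies directly, with $\sigma_1=\mathbbm{1}_X\otimes\rho_E$ and $\sigma_2=\mathbbm{1}_{X'}\otimes\tau_{E'}$ (respectively the product-of-marginals operators), and gives additivity of $D_2^{(\alpha)}$. Negating yields additivity of $H_2^{(\alpha)}$, and the identity as stated yields additivity of $I_2^{(\alpha)}$.

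For the ordering relations I apply Proposition~\ref{prop:ordering} and the subsequent proposition (the bound $D_2^{(\alpha)}\ge D_2$) to two cases. The first is $\sigma=\mathbbm{1}_X\otimes\rho_E$, which is positive semidefinite; this is exactly why those propositions were stated for general $\sigma\in\mathcal P(\mathcal H)$ rather than for states. The second is $\sigma=\rho_X\otimes\rho_E$. The combined chain $D_2(\rho_{XE}\|\sigma)\le D_2^{(\alpha)}(\rho_{XE}\|\sigma)\le D_\alpha(\rho_{XE}\|\sigma)$ gives the mutual-information chain immediately. Multiplying by $-1$ gives $H_\alpha\le H_2^{(\alpha)}\le H_2$, where $H_2(X|E):=-D_2(\rho_{XE}\|\mathbbm{1}_X\otimes\rho_E)$ and $I_2:=D_2(\rho_{XE}\|\rho_X\otimes\rho_E)$ are the $\alpha=2$ cases of \eqref{def:sand-ce} and \eqref{def:sand-mi}. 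At $\alpha=2$, the identity $D_2^{(2)}=D_2$ from Proposition~\ref{prop:ordering} makes every inequality an equality.

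The only point needing care is support. The lower bound was proved assuming $\sigma$ has strictly positive eigenvalues, so I would restrict to $\operatorname{supp}\sigma$. The support condition $\operatorname{supp}\rho_{XE}\subseteq\operatorname{supp}\sigma$ holds in both cases: $\rho_E^x\le\rho_E/p(x)$ whenever $p(x)>0$, and $\operatorname{supp}\rho_{XE}\subseteq\operatorname{supp}\rho_X\otimes\operatorname{supp}\rho_E$. Negative powers are then interpreted on that support, and the earlier arguments go through unchanged.
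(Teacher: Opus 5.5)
Your proposal is correct and follows the same route as the paper, which gives no separate proof and simply says the corollary follows immediately from Proposition~\ref{prop:multiplicativity}, Proposition~\ref{prop:ordering} and the subsequent bound $D_2^{(\alpha)}\geq D_2$, applied with $\sigma=\mathbbm{1}_X\otimes\rho_E$ and $\sigma=\rho_X\otimes\rho_E$. Your remarks on the reordering of the systems (joint unitary invariance) and on the support inclusions $\operatorname{supp}\rho_{XE}\subseteq\operatorname{supp}\sigma$ make explicit details that the paper leaves implicit.
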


\section{Mathematical tools and novel estimates}\label{sec:math}
This section introduces several mathematical tools that will be used below. We make the convention that, throughout this paper, all matrix-valued random variables are assumed to be discrete
and finitely supported.

The first is the noncommutative Rosenthal inequality, which is established for von Neumann algebras in \cite{JungeXu2008} by Junge and Xu.
\begin{lemma}[Matrix Rosenthal inequality]
\label{lem:matrix-rosenthal}
Let \(Y_1,\ldots,Y_M\) be independent self-adjoint matrix-valued random variables, satisfying
\[
\mathbb E Y_i=0,
\qquad 1\leq i\leq M.
\]
Then for every $p\geq2$, there exists a constant $C_p<\infty$,
depending only on $p$, such that
\begin{align}
\mathbb E_{Y_1,\cdots, Y_M}
\operatorname{Tr}
\left|
\sum_{i=1}^M Y_i
\right|^p
\leq
C_p
\left[
\sum_{i=1}^M
\mathbb E\operatorname{Tr}|Y_i|^p
+
\operatorname{Tr}
\left(
\sum_{i=1}^M\mathbb E Y_i^2
\right)^{p/2}
\right].
\label{eq:matrix-rosenthal}
\end{align}
\end{lemma}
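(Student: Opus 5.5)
The matrix Rosenthal inequality is a finite-dimensional specialization of the noncommutative Rosenthal inequality of Junge and Xu \cite{JungeXu2008}. I plan to derive it by embedding the matrices into a suitable von Neumann algebra with a normal faithful trace, and not to reprove it from scratch.

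\textbf{Construction.} Let $(\Omega,\mathbb P)$ be the finite probability space carrying the finitely supported random variables $Y_1,\ldots,Y_M$. Set $\mathcal M:=L^\infty(\Omega)\bar\otimes M_d(\mathbb C)$, equipped with the trace $\tau:=\mathbb E\otimes\operatorname{Tr}$. Each $Y_i$ is then a self-adjoint element of $\mathcal M$, and $\tau(|Z|^p)=\mathbb E\operatorname{Tr}|Z|^p$ for any $Z\in\mathcal M$.

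Independence of the $Y_i$ gives independence in the Junge--Xu sense relative to the subalgebra $\mathcal N:=\mathbb C\mathbbm 1_\Omega\otimes M_d(\mathbb C)$, with conditional expectation $E_{\mathcal N}=\mathbb E\otimes\mathrm{id}$. Concretely, the algebras $\mathcal A_i:=L^\infty(\sigma(Y_i))\bar\otimes M_d$ all contain $\mathcal N$. For $a\in\mathcal A_i$ and $b$ in the algebra generated by the $\mathcal A_j$ with $j\ne i$, factorization of the classical expectation gives $E_{\mathcal N}(ab)=E_{\mathcal N}(a)E_{\mathcal N}(b)$.

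The mean-zero hypothesis $\mathbb E Y_i=0$ says exactly that $E_{\mathcal N}(Y_i)=0$.

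\textbf{Applying Junge--Xu.} For $p\ge2$, their theorem gives
\[
\Bigl\|\sum_i Y_i\Bigr\|_{L^p(\mathcal M)}\le C'_p\max\Bigl\{\Bigl(\sum_i\|Y_i\|_p^p\Bigr)^{1/p},\ \Bigl\|\Bigl(\sum_i E_{\mathcal N}(Y_i^2)\Bigr)^{1/2}\Bigr\|_{L^p(\mathcal N)}\Bigr\}.
\]
In the self-adjoint case the row and column square-function terms coincide, so only one square function appears.

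Next I identify the second term. Since $E_{\mathcal N}(Y_i^2)=\mathbb E Y_i^2\in M_d$, it equals $\bigl(\operatorname{Tr}(\sum_i\mathbb E Y_i^2)^{p/2}\bigr)^{1/p}$, because the restriction of $\tau$ to $\mathcal N$ is $\operatorname{Tr}$.

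\textbf{Passing to the stated form.} Raise both sides to the power $p$ and bound the maximum by the sum. This yields the displayed inequality with $C_p=(C_p')^p$, which depends only on $p$, and in particular not on $d$ or $M$.

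\textbf{Main obstacle.} The key step is checking that classical independence of matrix-valued variables fits the precise notion of independence in \cite{JungeXu2008}, namely order independence over $\mathcal N$ with the appropriate $\mathcal N$-bimodule structure. I expect this is stated explicitly there as a motivating example, and otherwise it is verified directly by the factorization argument above. A fallback is available if needed: a symmetrization step followed by the noncommutative Khintchine inequality plus a decoupling argument, as in the standard proof, gives the same bound.
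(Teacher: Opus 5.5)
Your proposal is correct and takes essentially the same route as the paper. You embed the $Y_i$ into $L_\infty(\Omega)\bar\otimes M_d(\mathbb C)$ with trace $\mathbb E\operatorname{Tr}$ and conditional expectation $\mathbb E\otimes\mathrm{id}$ onto the constants, check that classical independence gives Junge--Xu independence (the paper cites their Example~1.3, while you verify the factorization directly), and apply their Theorem~2.1(i) with coinciding row/column square functions. The only difference is cosmetic: you start from the max form and use $\max\{a,b\}^p\le a^p+b^p$ to get $C_p=(C_p')^p$, whereas the paper starts from the sum form and uses $(a+b)^p\le 2^{p-1}(a^p+b^p)$ to get $C_p=2^{p-1}K_p^p$.
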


\begin{proof}
If the right-hand side is infinite, there is nothing to prove. We
therefore assume that
\[
\mathbb E\operatorname{Tr}|Y_i|^p<\infty,
\qquad 1\le i\le M.
\]
For a discrete set $\mathcal{X}$, $Y_i$ are random variables on the probability space $(\mathcal{X},\mu_i)$ such that
$$ Y_i\,:\, \mathcal{X}\longrightarrow \mathcal{L}(\mathcal{H})=\mathcal{B}(\mathcal{H}).$$
Here $\mathcal{B}(\mathcal H)$ denotes all the bounded linear operators on $\mathcal{H}$. The equivalence between $\mathcal{L}(\mathcal{H})$ and $\mathcal{B}(\mathcal{H})$ follows from the fact that $\mathcal{H}$ is finite-dimensional. By writing $\mathcal{B}(\mathcal H)$, we emphasize that it is indeed a von Neumann algebra. If we consider the Hilbert space
$$ \mathcal{K}:=L_2(\mathcal{X}^M,\mu; \mathcal{H})\cong L_2(\mathcal{X}^M,\mu)\otimes \mathcal{H},\quad \mu:=\mu_1\times\cdots\times\mu_M, $$
then, for any $g\in \mathcal{K}$, we define
$$(\widetilde{Y_i}g)(x)=Y_i(x_i)g(x),\quad x=(x_1,\cdots,x_M)\in \mathcal{X}^M. $$
Since each $Y_i$ is finitely supported, it is bounded. Therefore, $\widetilde{Y}_i$ is a bounded operator on $\mathcal{K}$. In this sense, we can identify each $Y_i$ as an element in the von Neumann algebra $(\mathcal{M},\tau)$ where
\[
\mathcal M
:=
L_\infty(\mathcal{X}^M,\mu)\,\overline{\otimes}\,\mathcal{B}(\mathcal{H}),
\qquad \tau(\cdot)
:=
\mathbb E\Tr (\cdot).
\]
Here $\overline{\otimes}$ means the von Neumann algebra tensor product. Moreover, the \(\tau\)-preserving conditional expectation
\(\mathcal E:\mathcal M\to \mathcal{B}(\mathcal{H})\) is given by
\[\mathcal{E}(F)=\int_{\mathcal{X}^M}F(x)\,d\mu(x),\quad F\in(\mathcal{M},\tau)\]
and one can check that
\[
\mathcal E(Y_i)=\mathbb E Y_i,\quad 1\leq m\leq M.
\]

The classical independence of \(Y_1,\ldots,Y_M\) implies that
\(Y_1,\cdots, Y_M\) are independent with respect to \(\mathcal E\); see
\cite[Example~1.3]{JungeXu2008}. By the assumption, we also have
\[
\mathcal E(Y_i)=\mathbb E Y_i=0,\quad  \tau(|Y_i|^p)= \mathbb E \Tr |Y_i|^p<\infty,\quad 1\leq i \leq M.
\]
Hence the noncommutative Rosenthal inequality
\cite[Theorem~2.1(i)]{JungeXu2008} applies to
\(Y_1,\ldots,Y_M\) as elements of \(L_p(\mathcal M,\tau)\), which gives
\[
\left(
\mathbb E_{Y_1,\cdots, Y_M}\Tr
\left|\sum_{i=1}^M Y_i\right|^p
\right)^{1/p}
\leq
K_p
\left[
\left(
\sum_{i=1}^M\mathbb E\Tr|Y_i|^p
\right)^{1/p}
+
\left(
\Tr
\left(
\sum_{i=1}^M\mathbb EY_i^2
\right)^{p/2}
\right)^{1/p}
\right],
\]
where the constant \(K_p\) depends only on \(p\). Note that, since each \(Y_i\) is
self-adjoint, the conditional row and column square-function
norms appear in \cite[Theorem~2.1(i)]{JungeXu2008} coincide in the present case.

Then using the scalar inequality
\[
(a+b)^p\leq 2^{p-1}(a^p+b^p),\quad p\geq 2,
\]
one obtains the claimed inequality \eqref{eq:matrix-rosenthal} with \(C_p=2^{p-1}K_p^p\).
\end{proof}

\begin{remark}
We use the standard noncommutative $L_p$-space $L_p(\mathcal M,\tau)$ associated with the semifinite von Neumann algebra
$(\mathcal M,\tau)$; see, for instance, \cite{PX03} and \cite{Xu07} for the standard construction and basic properties.
\end{remark}

Another ingredient in establishing the reliability function is the following novel estimate.
\begin{theorem}
\label{lem:add-change-base}
Let $A,B\in \mathcal{P}(\mathcal{H})$ and set $H=A-B$.  For every real number $\alpha\in(2,\infty)$, there is a
constant $c_\alpha<\infty$, depending only on $\alpha$, such that
\begin{equation*}
 \Tr HA^{\alpha-2}H
 \leq c_\alpha\left(
 \Tr HB^{\alpha-2}H+\Tr|H|^\alpha
 \right).
\end{equation*}
\end{theorem}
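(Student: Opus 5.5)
The plan is to reduce the statement to a family of ``mixed'' trace terms and to control each of them by a weighted geometric mean of the two terms on the right-hand side. Set $p:=\alpha-2>0$ and write $\Tr HA^{\alpha-2}H=\bigl\|A^{p/2}H\bigr\|_2^2$. Both sides of the claim are homogeneous of degree $\alpha$ under $(A,B)\mapsto(\lambda A,\lambda B)$. It therefore suffices to show
\begin{equation*}
\Tr HA^{p}H\le c_\alpha\bigl(\Tr HB^pH+\Tr|H|^\alpha\bigr),
\end{equation*}
and for this I would aim at the interpolation-type estimate
\begin{equation}\label{eq:plan-mixed}
\bigl|\Tr H\,X_1X_2\cdots X_m\,H\bigr|\le \bigl(\Tr HB^pH\bigr)^{s/p}\bigl(\Tr|H|^\alpha\bigr)^{1-s/p}.
\end{equation}
Here each $X_k$ is a power of $B$ or of $|H|$ (possibly composed with the sign of $H$, so that $H$ itself is allowed), the total degree is $p$, and $s\in[0,p]$ is the total $B$-degree. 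Once \eqref{eq:plan-mixed} is available, Young's inequality $a^{s/p}b^{1-s/p}\le a+b$ finishes the argument.

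\emph{Step 1 (integer part).} Write $p=k+\theta$ with $k\in\mathbb N$ and $\theta\in[0,1)$, so that $A^p=A^{\theta/2}A^kA^{\theta/2}$. Expand $A^k=(B+H)^k$ into $2^k$ words in $B$ and $H$. Each word is then sandwiched between $HA^{\theta/2}$ and $A^{\theta/2}H$.

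\emph{Step 2 (fractional part).} For the fractional powers, use $A\le B+|H|$, which holds because $|H|-H=2H_-\ge0$, together with operator monotonicity of $t\mapsto t^{\theta/2}$. This gives $A^{\theta/2}\le (B+|H|)^{\theta/2}$. The operator inequality is only usable in the form $\Tr Z^*XZ\le\Tr Z^*YZ$ for $0\le X\le Y$, so I would rather expand $A^{\theta/2}$ with the integral representation
\begin{equation*}
t^{\theta/2}=c_\theta\int_0^\infty \frac{t}{t+\lambda}\,\lambda^{\theta/2-1}\,d\lambda ,
\end{equation*}
and use the resolvent identity $(A+\lambda)^{-1}=(B+\lambda)^{-1}-(B+\lambda)^{-1}H(A+\lambda)^{-1}$. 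This produces terms of the same mixed type as in Step 1, with resolvents of $B$ in place of powers of $B$.

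\emph{Step 3 (the key estimate).} Prove \eqref{eq:plan-mixed}, using the generalized H\"older inequality for Schatten norms. Each factor $B^{j}$ adjacent to $H$ is grouped as $\|B^{j/2}\,|H|^{\cdot}\|_{r}$, and the exponents $r$ are chosen so that Araki--Lieb--Thirring (as used in Proposition~\ref{prop:ordering}) reduces such norms to powers of $\Tr HB^pH=\|B^{p/2}H\|_2^2$ and $\|H\|_\alpha$. Alternatively, one can apply Hadamard's three-lines theorem to $z\mapsto \Tr H B^{pz}|H|^{p(1-z)}H$ on the strip $0\le\operatorname{Re}z\le1$. At $\operatorname{Re} z=1$ the function is bounded by $\Tr HB^pH$ and at $\operatorname{Re} z=0$ by $\Tr|H|^\alpha$, since the imaginary powers are unitary on the supports.

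I expect the main obstacle to be exactly Step 3 for words in which powers of $B$ and of $H$ alternate several times. The three-lines argument handles one split point cleanly, but a word such as $HBHB\cdots$ must first be regrouped via the Cauchy--Schwarz inequality $|\Tr X^*Y|\le\|X\|_2\|Y\|_2$ into two ``half-words''. Each half is then estimated by a single-split interpolation, and a careful choice of splitting point keeps the total $B$-degree in each half at most $p/2$.

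The constant $c_\alpha$ then collects $2^{k}$ times the constants coming from the integral representation in Step 2; all of these depend only on $\alpha$.
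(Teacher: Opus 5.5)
\textbf{Gap in Step 2 (non-integer $\alpha-2$).} Your Steps 1 and 3 are plausible when $\alpha-2$ is an integer. Then $\Tr HA^{\alpha-2}H=\Tr H(B+H)^{\alpha-2}H$ is a finite sum of words in $B$ and $H$, and each word can be bounded by H\"older, Araki--Lieb--Thirring and interpolation once the $H$-blocks are split suitably. The argument breaks down at Step 2, so every non-integer $\alpha$ is left open.

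The resolvent identity never removes $A$. The exact formula $A^{\theta/2}-B^{\theta/2}=c_\theta\int_0^\infty\lambda^{\theta/2}(B+\lambda)^{-1}H(A+\lambda)^{-1}\,d\lambda$ still contains $(A+\lambda)^{-1}$. Iterating it gives a Neumann series in $(B+\lambda)^{-1}H$, which diverges for small $\lambda$ in general (for $B=0$ this operator is $\lambda^{-1}H$). So you do not obtain finitely many terms ``of the same mixed type''.

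What you obtain instead are $\lambda$-integrals of resolvent products, not words in powers of $B$ and $|H|$. Your Step 3 estimate does not apply to them. Crude termwise bounds such as $\lambda^{\theta/2-2}\|H\|_\infty$ are not integrable at $\lambda=0$.

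Controlling $A^{\theta/2}-B^{\theta/2}$ in Schatten norms is a genuine operator-H\"older estimate: the Birman--Koplienko--Solomyak inequality (Lemma~\ref{BKS}). Your H\"older/ALT/three-lines toolkit does not reproduce it. Even granting it, the symmetric splitting $A^p=A^{\theta/2}A^kA^{\theta/2}$ produces terms like $\Tr HDB^kDH$ with $D=A^{\theta/2}-B^{\theta/2}$. There the power of $B$ is flanked by $D$'s rather than by $H$'s, so it cannot be absorbed into $\Tr HB^{\alpha-2}H$ by the mechanism of Step 3.

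\textbf{The missing idea: do not eliminate $A$.} Set $\beta=(\alpha-2)/2=m+\theta$ and $X=\|HA^\beta\|_2$. The paper writes $X\le\|HB^\beta\|_2+\|H(A^\beta-B^\beta)\|_2$ and decomposes
$H(A^\beta-B^\beta)=HA^m(A^\theta-B^\theta)+\sum_{k=0}^{m-1}HA^kHB^{\beta-1-k}$.
In this decomposition every power of $A$ or $B$ is adjacent to an $H$ and has exponent at most $\beta$. The fractional difference occurs only once, at the end, next to $HA^m$. The paper then bounds the pieces as follows.
\begin{itemize}
\item Complex interpolation gives $\|HA^t\|_{q_t}\le\|H\|_\alpha^{1-t/\beta}X^{t/\beta}$ for $0\le t\le\beta$, where $1/q_t=(1+t)/\alpha$, and similarly for $B$.
\item BKS bounds $\|A^\theta-B^\theta\|_{\alpha/\theta}$ by $C_\theta\|H\|_\alpha^\theta$.
\item Every power of $X$ appearing on the right is strictly smaller than $1$, so Young's inequality lets one absorb these terms into the left-hand side.
\end{itemize}
This self-referential absorption, together with BKS, is what handles the fractional part. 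Your proposal has no substitute for it.
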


\begin{proof}
For convenience, we denote
\begin{equation*}
 \beta=\frac{\alpha-2}{2},
 \quad
 h=\norm{H}_\alpha,
 \quad
 X=\norm{HA^\beta}_2,
 \quad
 Y=\norm{HB^\beta}_2,
 \quad
 Z=h^{\beta+1}.
\end{equation*}
Thus $\alpha=2(\beta+1)$ and
\begin{equation}\label{eq:add-XYZ-squares}
 X^2=\Tr HA^{\alpha-2}H,
 \qquad
 Y^2=\Tr HB^{\alpha-2}H,
 \qquad
 Z^2=\Tr|H|^\alpha.
\end{equation}
Our purpose is to prove $X^2 \leq c_\alpha(Y^2+Z^2)$.

For
$0\leq t\leq\beta$, define $q_t$ by
\begin{equation*}
 \frac1{q_t}
 =\frac{1-t/\beta}{\alpha}+\frac{t/\beta}{2}=\frac{t+1}{\alpha}.
\end{equation*}
Define $F(z)=HA^{\beta z}$. Thus for $y\in \mathbb{R}$,
$$ \Vert F(iy) \Vert_\alpha=\Vert HA^{i\beta y} \Vert_\alpha=\Vert H\Vert_\alpha=h,\quad \Vert F(1+iy) \Vert_2=\Vert HA^{\beta(1+i y)} \Vert_2=\Vert HA^{\beta}\Vert_2=X. $$
Complex interpolation (cf. Lemma~\ref{lem:complex-interpolation-schatten}) then gives
\begin{equation}\label{eq:add-interpolation-A}
 \norm{HA^t}_{q_t}
 \leq h^{1-t/\beta}X^{t/\beta}.
\end{equation}
Applying the same argument on $B$, one obtains
\begin{equation}\label{eq:add-interpolation-B}
 \norm{HB^t}_{q_t}
 \leq h^{1-t/\beta}Y^{t/\beta}.
\end{equation}
For singular $A$ or $B$, these estimates follow by replacing the operator
with $A+\varepsilon I$ or $B+\varepsilon I$ and then letting
$\varepsilon\downarrow0$.

Write $\beta=m+\theta$, where $m=\lfloor\beta\rfloor$ and
$0\leq\theta<1$.  We use the decomposition
\begin{align}
 H(A^\beta-B^\beta)
 &=HA^m(A^\theta-B^\theta)+H(A^m-B^m)B^\theta\notag\\
 &=HA^m(A^\theta-B^\theta)
   +\sum_{k=0}^{m-1}HA^kHB^{m-1-k+\theta}.
 \label{eq:add-power-decomposition}
\end{align}
When $m=0$ the sum is empty, and when $\theta=0$ the first term vanishes. We first consider the case when $ m\theta\neq 0$. 

The Birman--Koplienko--Solomyak inequality (see Lemma~\ref{BKS}, which is restated from
\cite[Theorem~3]{BKS75}; see also
\cite[Proposition~4.9]{BS89}) gives, for $0<\theta<1$,
\begin{equation}\label{eq:add-BKS}
 \norm{A^\theta-B^\theta}_{\alpha/\theta}
 \leq C_\theta\norm{A-B}_\alpha^\theta
 =C_\theta h^\theta.
\end{equation}
Combining \eqref{eq:add-interpolation-A} and
\eqref{eq:add-BKS}, applying H\"older's inequality with
\[
 \frac12=\frac1{q_m}+\frac{\theta}{\alpha}, \quad 1-\frac{m}{\beta}+\theta=\theta(\frac{1}{\beta}+1),
\]
we obtain
\begin{equation*}
 \norm{HA^m(A^\theta-B^\theta)}_2
 \leq \norm{HA^m}_{q_m}\norm{A^\theta-B^\theta}_{\alpha/\theta} \leq C_\theta h^{1-m/\beta}X^{m/\beta}h^{\theta}=C_\theta X^{m/\beta}Z^{\theta/\beta}.
\end{equation*}
For each term in the sum in \eqref{eq:add-power-decomposition}, put
$\ell_k=\beta-1-k=m-1-k+\theta$.  Then
\eqref{eq:add-interpolation-A} and \eqref{eq:add-interpolation-B}, together
with
\[
 \frac1{q_k}+\frac1{q_{\ell_k}}=\frac{1+k}{\alpha}+\frac{1+\ell_k}{\alpha}=\frac12,
\]
give
\begin{align}
 \norm{HA^kHB^{\ell_k}}_2\leq\norm{HA^k}_{q_k}\norm{HB^{\ell_k}}_{q_{\ell_k}}\leq
 X^{k/\beta}Y^{(\beta-1-k)/\beta}Z^{1/\beta}.
 \label{eq:add-integer-terms}
\end{align}
Here the three exponents on the right-hand side add up to $1$.

Since $HA^\beta=HB^\beta+H(A^\beta-B^\beta)$, we get 
$$ X=\Vert HA^{\beta}\Vert_2\leq \Vert HB^\beta\Vert_2+\Vert H(A^\beta-B^\beta)\Vert_2.$$
Then~\eqref{eq:add-power-decomposition}--\eqref{eq:add-integer-terms}
yield
\begin{align*}
 X\leq Y+C_\theta X^{m/\beta}Z^{\theta/\beta}+\sum_{k=0}^{m-1}
 X^{k/\beta}Y^{(\beta-1-k)/\beta}Z^{1/\beta},
\end{align*}
Note that
$$ X^{k/\beta}Y^{(\beta-1-k)/\beta}Z^{1/\beta}\leq X^{k/\beta}(Y+Z)^{1-k/\beta}.$$
Weighted Young's
inequality (cf. Lemma~\ref{w-young}) then gives, for each $k$ and $\varepsilon_k>0$,
\begin{equation*}
  X^{k/\beta}Y^{(\beta-1-k)/\beta}Z^{1/\beta}\leq\varepsilon_k X+C_{k,\varepsilon_k}(Y+Z),\quad  X^{m/\beta}Z^{\theta/\beta}\leq\varepsilon_m X+C_{m,\varepsilon_m}Z.
\end{equation*}
Choose $\varepsilon_0=\cdots=\varepsilon_{m-1}=C_{\theta}\varepsilon_m=\frac{1}{2(m+1)}$ . Then, we can take
$$ K_\alpha=C_{0,\varepsilon_0}+C_{1,\varepsilon_1}+\cdots+C_{m-1,\varepsilon_{m-1}}+C_\theta C_{m,\varepsilon_m}$$
 such that
$$ X\leq \frac{1}{2}X+K_\alpha (Y+Z).$$
Here $K_\alpha$ only depends on $\alpha$ because $(\beta,m,\theta)$ is uniquely determined by $\alpha$. Taking $c_\alpha=8K_\alpha^2$, this further yields
$$ X^2\leq c_\alpha (Y^2+Z^2).$$
Finally, the case $m=0$ ($\theta=0$) is handled in a similar way, with the sum (the fractional term) in \eqref{eq:add-power-decomposition} omitted. With the convention in \eqref{eq:add-XYZ-squares}, this proves the desired result.
\end{proof}

Throughout this paper, for positive semidefinite operators \(A\) and \(B\),
we use the notation
\begin{equation}
\label{eq:add-bregman-def}
\cB_{\alpha}(A,B)
:=
\operatorname{Tr} A^{\alpha}
-\operatorname{Tr} B^{\alpha}
-\alpha\operatorname{Tr}
B^{\alpha-1}(A-B).
\end{equation}
\begin{lemma}\label{lem:add-bregman}
Let $A,B\in \mathcal{P}(\mathcal{H})$ and set $H=A-B$.  For every real number $\alpha\in(2,\infty)$,
\begin{align}
 (\alpha-1)\Tr HB^{\alpha-2}H
 &\leq \cB_\alpha(A,B),
 \label{eq:add-bregman-lower}\\
 \cB_\alpha(A,B)
 &\leq C_\alpha\left(
       \Tr HB^{\alpha-2}H+\Tr|H|^\alpha
       \right),
 \label{eq:add-bregman-upper}
\end{align}
where $C_\alpha$ depends only on $\alpha$.
\end{lemma}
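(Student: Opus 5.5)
The plan is to write the Bregman divergence as an integral remainder along the segment $B_s:=B+sH$, $s\in[0,1]$, where every $B_s$ is positive semidefinite. With $f(s):=\Tr B_s^\alpha$ we have $f'(0)=\alpha\Tr B^{\alpha-1}H$, so Taylor's formula with integral remainder gives
\begin{equation*}
\cB_\alpha(A,B)=f(1)-f(0)-f'(0)=\int_0^1(1-s)f''(s)\,ds .
\end{equation*}
The second derivative is the second Fréchet derivative of $X\mapsto\Tr X^\alpha$ in the direction $H$. Writing $B_s=\sum_i\lambda_iP_i$ and applying the Daleckii--Krein formula, it becomes
\begin{equation*}
f''(s)=\alpha\sum_{i,j}f^{[1]}(\lambda_i,\lambda_j)\,\Tr P_iHP_jH,\qquad f^{[1]}(x,y)=\frac{x^{\alpha-1}-y^{\alpha-1}}{x-y}.
\end{equation*}
Each coefficient $\Tr P_iHP_jH$ is nonnegative. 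Both inequalities then reduce to comparing this divided difference with $x^{\alpha-2}$ and $y^{\alpha-2}$. If $B$ or $A$ is singular, I would first argue with $B+\varepsilon I$ and $A+\varepsilon I$ and then let $\varepsilon\downarrow0$.

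\textbf{Lower bound \eqref{eq:add-bregman-lower}.} Since $\alpha-1>1$, the mean value theorem and convexity give
\begin{equation*}
f^{[1]}(x,y)\ge (\alpha-1)\tfrac{x^{\alpha-2}+y^{\alpha-2}}{2}\cdot c
\end{equation*}
for a suitable constant $c$. This yields $f''(s)\ge c'\,\Tr HB_s^{\alpha-2}H$, but with the wrong constant and on $B_s$ rather than $B$, so this route is lossy. A cleaner route is to fix $B$ and use convexity of $\phi(X)=\Tr X^\alpha$ along the segment. Instead I would show
\begin{equation*}
g(s):=\cB_\alpha(B_s,B)-(\alpha-1)s^2\Tr HB^{\alpha-2}H
\end{equation*}
is nonnegative by proving $g''(s)\ge0$, since $g(0)=g'(0)=0$. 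Here $g''(s)=f''(s)-2(\alpha-1)\Tr HB^{\alpha-2}H$, so it remains to show $f''(s)\ge 2(\alpha-1)\Tr HB^{\alpha-2}H$. This does not hold pointwise in $s$, so I would modify the argument. Set $s=0$ and use convexity of $s\mapsto f'(s)$, which holds because $f'''\ge0$ for $\alpha\ge2$ (this needs checking). Then
\begin{equation*}
\cB_\alpha=\int_0^1\bigl(f'(s)-f'(0)\bigr)\,ds\ge\int_0^1 s f''(0)\,ds=\tfrac12 f''(0).
\end{equation*}
At $s=0$ the divided difference satisfies $f^{[1]}(x,y)\ge\tfrac{\alpha-1}{2}(x^{\alpha-2}+y^{\alpha-2})$. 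This follows from the Hermite--Hadamard inequality, because $t\mapsto t^{\alpha-2}$ is convex when $\alpha\ge3$; for $2<\alpha<3$ I would instead use the power-mean comparison $f^{[1]}(x,y)\ge(\alpha-1)\,(\text{mean})^{\alpha-2}$. Summing with the symmetric weights $\Tr P_iHP_jH$ then gives $f''(0)\ge\alpha(\alpha-1)\Tr HB^{\alpha-2}H$. Hence
\begin{equation*}
\cB_\alpha\ge\tfrac{\alpha(\alpha-1)}{2}\Tr HB^{\alpha-2}H\ge(\alpha-1)\Tr HB^{\alpha-2}H,
\end{equation*}
using $\alpha\ge2$. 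The two points I expect to be delicate are convexity of $f'$, i.e. $f'''\ge0$, for non-integer $\alpha\in(2,3)$, and the divided-difference lower bound in that same range. If either fails, the fallback is to prove the inequality with some constant $c_\alpha>0$, which is all the later applications need.

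\textbf{Upper bound \eqref{eq:add-bregman-upper}.} This is the easier direction. The divided difference satisfies the upper bound
\begin{equation*}
f^{[1]}(x,y)\le(\alpha-1)\max(x,y)^{\alpha-2}\le(\alpha-1)(x^{\alpha-2}+y^{\alpha-2}),
\end{equation*}
so
\begin{equation*}
f''(s)\le 2\alpha(\alpha-1)\Tr HB_s^{\alpha-2}H.
\end{equation*}
I then apply Theorem~\ref{lem:add-change-base} to the pair $(B_s,B)$, whose difference is $sH$. This gives
\begin{equation*}
s^2\Tr HB_s^{\alpha-2}H\le c_\alpha\bigl(s^2\Tr HB^{\alpha-2}H+s^\alpha\Tr|H|^\alpha\bigr),
\end{equation*}
and hence, for $s\in(0,1]$,
\begin{equation*}
\Tr HB_s^{\alpha-2}H\le c_\alpha\bigl(\Tr HB^{\alpha-2}H+\Tr|H|^\alpha\bigr).
\end{equation*}
Integrating against $(1-s)\,ds$ gives \eqref{eq:add-bregman-upper} with $C_\alpha=\alpha(\alpha-1)c_\alpha$. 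The main obstacle of the whole lemma is therefore the sharp constant $\alpha-1$ in \eqref{eq:add-bregman-lower}, i.e. the divided-difference comparison; the change of base point is already handled by Theorem~\ref{lem:add-change-base}.
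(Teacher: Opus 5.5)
\textbf{The lower bound \eqref{eq:add-bregman-lower} has a genuine gap; your upper bound is correct.} Both steps you use for the lower bound are false.

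First, $f'''\ge0$ fails already for commuting scalars. Take $\mathcal H=\mathbb C$, $B=1$, $A=0$. Then $f(s)=(1-s)^\alpha$ and $f'''(s)=-\alpha(\alpha-1)(\alpha-2)(1-s)^{\alpha-3}<0$. In this example $\cB_\alpha(A,B)=\alpha-1$ and $\Tr HB^{\alpha-2}H=1$. So your intermediate conclusion $\cB_\alpha\ge\frac{\alpha(\alpha-1)}{2}\Tr HB^{\alpha-2}H$ is false for every $\alpha>2$. The same example shows the constant $\alpha-1$ in the lemma is attained, so no argument that loses anything can work.

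Second, you apply Hermite--Hadamard in the wrong direction. Write $f^{[1]}(x,y)=(\alpha-1)\int_0^1(tx+(1-t)y)^{\alpha-2}\,dt$. For $\alpha\ge3$, convexity of $u\mapsto u^{\alpha-2}$ gives $f^{[1]}(x,y)\le\frac{\alpha-1}{2}(x^{\alpha-2}+y^{\alpha-2})$, not $\ge$. For example, $f^{[1]}(0,1)=1<\frac{\alpha-1}{2}$ when $\alpha>3$. The inequality you want holds only for $2<\alpha\le3$, where $u^{\alpha-2}$ is concave.

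In that range your path framework can be repaired. You get $f''(s)\ge\alpha(\alpha-1)\Tr HB_s^{\alpha-2}H$. Operator monotonicity of $t^{\alpha-2}$ together with $B_s\ge(1-s)B$ gives $\Tr HB_s^{\alpha-2}H\ge(1-s)^{\alpha-2}\Tr HB^{\alpha-2}H$. Then $\int_0^1(1-s)^{\alpha-1}ds=1/\alpha$ yields exactly $(\alpha-1)\Tr HB^{\alpha-2}H$. For $\alpha>3$ both ingredients fail. Your fallback with an unspecified $c_\alpha$ is neither the stated result nor actually proved.

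\textbf{The missing idea is to keep the base point fixed instead of moving along $B_s$.} Diagonalize $A=\sum_ia_i|e_i\rangle\langle e_i|$ and $B=\sum_jb_j|f_j\rangle\langle f_j|$, and set $w_{ij}=|\langle e_i,f_j\rangle|^2$, a doubly stochastic matrix. Then exactly
\begin{equation*}
\cB_\alpha(A,B)=\sum_{i,j}w_{ij}\bigl[a_i^\alpha-b_j^\alpha-\alpha b_j^{\alpha-1}(a_i-b_j)\bigr],\qquad
\Tr HB^{\alpha-2}H=\sum_{i,j}w_{ij}(a_i-b_j)^2b_j^{\alpha-2}.
\end{equation*}
The lower bound therefore reduces to the scalar inequality
\begin{equation*}
x^\alpha-y^\alpha-\alpha y^{\alpha-1}(x-y)\ge(\alpha-1)(x-y)^2y^{\alpha-2}.
\end{equation*}
This follows from the integral remainder $\alpha(\alpha-1)(x-y)^2\int_0^1(1-t)[(1-t)y+tx]^{\alpha-2}dt$ together with $(1-t)y+tx\ge(1-t)y$. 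It is your $(1-s)$ trick with $B$ held fixed, so it needs no operator monotonicity and works for all $\alpha>2$. This is the paper's argument.

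\textbf{Your upper bound takes a mildly different route from the paper's.} You bound $f''(s)\le2\alpha(\alpha-1)\Tr HB_s^{\alpha-2}H$ and apply Theorem~\ref{lem:add-change-base} to the pair $(B_s,B)$ for every $s$. This works because the $\Tr|H|^\alpha$ term carries $s^\alpha$ against $s^2$. The paper instead uses $\cB_\alpha(A,B)\le\cB_\alpha(A,B)+\cB_\alpha(B,A)$ and the same weights $w_{ij}$ to get $\alpha(\alpha-1)(\Tr HA^{\alpha-2}H+\Tr HB^{\alpha-2}H)$. It then invokes the theorem only once, at $A$. Both routes give a constant of order $\alpha(\alpha-1)c_\alpha$.
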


\begin{proof}
Take spectral decompositions
\[
 A=\sum_i a_i|e_i\rangle\langle e_i|,
 \qquad
 B=\sum_j b_j|f_j\rangle\langle f_j|,
\]
and put \(
 u_{ij}=\langle e_i,f_j\rangle,\,
 w_{ij}=|u_{ij}|^2
\).
The matrix $(w_{ij})$ is doubly stochastic.  Direct expansion gives the
two exact identities
\begin{align}
 \cB_\alpha(A,B)
 &=\sum_{i,j}w_{ij}
   \left[a_i^\alpha-b_j^\alpha
   -\alpha b_j^{\alpha-1}(a_i-b_j)\right],
 \label{eq:add-spectral-bregman}\\
 \Tr HB^{\alpha-2}H
 &=\sum_{i,j}w_{ij}(a_i-b_j)^2b_j^{\alpha-2}.
 \label{eq:add-spectral-kms}
\end{align}

For scalars $x,y\geq0$, we apply the following Taylor's integral formula
\begin{align*}
 x^\alpha-y^\alpha-\alpha y^{\alpha-1}(x-y)\notag
 =\alpha(\alpha-1)(x-y)^2
   \int_0^1(1-t)\bigl[y+t(x-y)\bigr]^{\alpha-2}\,dt.
\end{align*}
The quantity $y+t(x-y)$ is at least $(1-t)y$ and
$\int_0^1(1-t)^{\alpha-1}dt=1/\alpha$.  Hence we obtain
\begin{equation}\label{eq:add-scalar-lower}
 x^\alpha-y^\alpha-\alpha y^{\alpha-1}(x-y)
 \geq(\alpha-1)(x-y)^2y^{\alpha-2}.
\end{equation}
Applying \eqref{eq:add-scalar-lower} term by term to \eqref{eq:add-spectral-bregman} and comparing the result with \eqref{eq:add-spectral-kms}, this
proves \eqref{eq:add-bregman-lower}.

For the upper bound, the convexity of $T\mapsto\Tr T^\alpha$ gives
$$\cB_\alpha(A,B)\geq0,\quad \cB_\alpha(B,A)\geq0.$$
 Hence
\begin{align*}
 \cB_\alpha(A,B)\leq \cB_\alpha(A,B)+\cB_\alpha(B,A)=\alpha\sum_{i,j}w_{ij}
   (a_i-b_j)(a_i^{\alpha-1}-b_j^{\alpha-1}).
\end{align*}
For scalars $x,y\geq0$,
\[(x-y)(x^{\alpha-1}-y^{\alpha-1})=(\alpha-1)(x-y)^2\int_0^1\bigl((1-t)y+tx\bigr)^{\alpha-2}\,dt.\]
Since $\bigl((1-t)y+tx\bigr)^{\alpha-2}\leq x^{\alpha-2}+y^{\alpha-2}$, we have
\begin{equation*}
 (x-y)(x^{\alpha-1}-y^{\alpha-1})
 \leq(\alpha-1)(x-y)^2
       (x^{\alpha-2}+y^{\alpha-2}).
\end{equation*}
Consequently,
\begin{equation}\label{eq:add-sym-upper}
  \begin{aligned}
 \cB_\alpha(A,B)
 &\leq \alpha(\alpha-1)\sum_{i,j}w_{ij}
   (a_i-b_j)^2(a_i^{\alpha-2}+b_j^{\alpha-2})\\
 &=\alpha(\alpha-1)
 \left(\Tr HA^{\alpha-2}H+\Tr HB^{\alpha-2}H\right).
  \end{aligned}
\end{equation}
Applying Theorem~\ref{lem:add-change-base} to $\Tr HA^{\alpha-2}H$  in \eqref{eq:add-sym-upper} proves
\eqref{eq:add-bregman-upper}.
\end{proof}

\begin{proposition}\label{b-est}
Let \(\alpha\geq2\) and let \(A,B\geq0\). Then
\begin{equation}
\mathfrak B_\alpha(A,B)
\geq
2^{1-\alpha}\|A-B\|_\alpha^\alpha.
\label{eq:Bregman-uniform-convexity}
\end{equation}
\end{proposition}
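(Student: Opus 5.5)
The plan is to combine McCarthy's noncommutative Clarkson inequality with the first-order convexity inequality at $B$; this gives the constant $2^{1-\alpha}$ directly. Let $f(T):=\operatorname{Tr}|T|^\alpha$ on self-adjoint operators. It agrees with $\operatorname{Tr}T^\alpha$ on $\mathcal P(\mathcal H)$ and is convex, since it is the $\alpha$-th power of the Schatten $\alpha$-norm. Set $H=A-B$. The two ingredients are the following.

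\emph{First-order convexity at $B$.} For $\alpha\ge 2$ the map $f$ is continuously differentiable on self-adjoint operators, and its Fréchet derivative at $B\ge0$ in the direction $K$ is $\alpha\operatorname{Tr}B^{\alpha-1}K$. Indeed, $x\mapsto|x|^\alpha$ is $C^1$ with derivative $\alpha\,\mathrm{sgn}(x)|x|^{\alpha-1}$, which equals $\alpha x^{\alpha-1}$ for $x\ge0$, and the standard formula $\frac{d}{dt}\operatorname{Tr}g(B+tK)|_{t=0}=\operatorname{Tr}g'(B)K$ holds for $C^1$ functions $g$. Convexity then gives, for every self-adjoint $K$,
\begin{equation*}
f(B+K)\ \ge\ f(B)+\alpha\operatorname{Tr}B^{\alpha-1}K .
\end{equation*}
I will apply this with $K=H/2$, which gives $f\bigl(\tfrac{A+B}{2}\bigr)\ge \operatorname{Tr}B^\alpha+\tfrac{\alpha}{2}\operatorname{Tr}B^{\alpha-1}H$.

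\emph{Clarkson's inequality for Schatten classes.} By McCarthy, for $p\ge2$ and any operators $S,T$,
\begin{equation*}
\Bigl\|\tfrac{S+T}{2}\Bigr\|_p^p+\Bigl\|\tfrac{S-T}{2}\Bigr\|_p^p\ \le\ \tfrac12\bigl(\|S\|_p^p+\|T\|_p^p\bigr).
\end{equation*}
With $S=A$, $T=B$ and $p=\alpha$, this yields
\begin{equation*}
f\bigl(\tfrac{A+B}{2}\bigr)\ \le\ \tfrac12\bigl(\operatorname{Tr}A^\alpha+\operatorname{Tr}B^\alpha\bigr)-2^{-\alpha}\|H\|_\alpha^\alpha .
\end{equation*}

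Chaining the two bounds and cancelling $\tfrac12\operatorname{Tr}B^\alpha$ gives
\begin{equation*}
\tfrac12\Bigl(\operatorname{Tr}A^\alpha-\operatorname{Tr}B^\alpha-\alpha\operatorname{Tr}B^{\alpha-1}H\Bigr)\ \ge\ 2^{-\alpha}\|H\|_\alpha^\alpha,
\end{equation*}
that is, $\mathfrak B_\alpha(A,B)\ge 2^{1-\alpha}\|A-B\|_\alpha^\alpha$, which is \eqref{eq:Bregman-uniform-convexity}. For $\alpha=2$ the argument is the same; there Clarkson's inequality is the parallelogram identity and actually holds with equality.

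The main obstacle is justifying Clarkson's inequality in the noncommutative setting. I would cite McCarthy's theorem for the Schatten classes $C_p$, $p\ge2$. If a self-contained argument is wanted, I would instead derive it from the Hanner-type/2-uniform convexity inequalities of Ball--Carlen--Lieb, which imply it for $p\ge 2$. The differentiability claim is routine, since $|x|^\alpha$ is $C^1$ for $\alpha>1$. As a fallback, one can bypass the derivative altogether: $t\mapsto f(B+tH)$ is convex with right derivative $\alpha\operatorname{Tr}B^{\alpha-1}H$ at $0$, and that is all the first ingredient uses.
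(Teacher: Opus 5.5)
Your proposal is correct and is essentially the paper's proof: the Clarkson--McCarthy inequality at the midpoint $\frac{A+B}{2}$ gives $\frac12(\operatorname{Tr}A^\alpha+\operatorname{Tr}B^\alpha)-\operatorname{Tr}\bigl(\frac{A+B}{2}\bigr)^\alpha\ge 2^{-\alpha}\|A-B\|_\alpha^\alpha$. The first-order convexity bound at $B$ with increment $\frac{A-B}{2}$, which the paper invokes as Klein's inequality, then closes the argument with the same constant $2^{1-\alpha}$.
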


\begin{proof}
The Clarkson--McCarthy inequality (cf. Lemma~\ref{lem:CM}) gives
\[
\|X+Y\|_\alpha^\alpha+\|X-Y\|_\alpha^\alpha
\geq
2\left(
\|X\|_\alpha^\alpha+\|Y\|_\alpha^\alpha
\right).
\]
Taking
\[
X=\frac{A+B}{2},
\qquad
Y=\frac{A-B}{2},
\]
we obtain
\begin{equation}
\frac{\operatorname{Tr}|A|^\alpha+\operatorname{Tr}|B|^\alpha}{2}
-\operatorname{Tr}\left|\frac{A+B}{2}\right|^\alpha
\geq
2^{-\alpha}\|A-B\|_\alpha^\alpha.
\label{eq:Clarkson-midpoint-remainder}
\end{equation}
By Klein's inequality (see \cite[Eric Carlen's part, Theorem 2.11]{carlennote}) for the convex function
$t\mapsto t^\alpha$,
\[
\operatorname{Tr}\left(\frac{A+B}{2}\right)^\alpha
\geq
\operatorname{Tr}B^\alpha
+
\frac{\alpha}{2}
\operatorname{Tr}B^{\alpha-1}(A-B).
\]
Combining this with \eqref{eq:Clarkson-midpoint-remainder} proves \eqref{eq:Bregman-uniform-convexity}.
\end{proof}

The following is another estimate needed below.
\begin{lemma}\label{lem:add-positive-bridge}
Let $\mathcal A\geq0$ be a matrix-valued random variable. For every real number $\alpha\geq2$,
\begin{equation*}
 \Tr (\E \mathcal A^2)^{\alpha/2}
 \leq
 \left[\left(\Tr (\E \mathcal A^2)(\E \mathcal A)^{\alpha-2}\right)
 \left(\E\Tr \mathcal A^\alpha\right)\right]^{1/2}.
\end{equation*}
\end{lemma}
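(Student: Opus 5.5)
The plan is to insert the operator $W:=T^{-\gamma}ST^{-\gamma}$ between the two quantities on the right. The left-hand side will then be bounded by complex interpolation in terms of $W$, and $W$ itself will be controlled through joint convexity of the sandwiched Rényi trace functional.

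Set
\[
S:=\E\mathcal A^2,\qquad T:=\E\mathcal A,\qquad \beta:=\tfrac{\alpha-2}{2},\qquad \gamma:=\tfrac{\alpha-2}{2(\alpha-1)},\qquad W:=T^{-\gamma}ST^{-\gamma}.
\]
For $\alpha=2$ the claim is trivial, since both sides equal $\Tr S$, so assume $\alpha>2$. We first assume $T$ is invertible. Otherwise, replace $\mathcal A$ by $\mathcal A+\varepsilon\mathbbm 1$ and let $\varepsilon\downarrow0$; all quantities involved are continuous in $\varepsilon$.

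\emph{Step 1: rewrite the three quantities through $W$.} Since $S=(W^{1/2}T^\gamma)^\dagger(W^{1/2}T^\gamma)$, we have
\[
\Tr S^{\alpha/2}=\|W^{1/2}T^\gamma\|_\alpha^\alpha .
\]
With $\delta:=\gamma+\beta=\frac{\alpha(\alpha-2)}{2(\alpha-1)}$, one checks
\[
\Tr S\,T^{\alpha-2}=\|W^{1/2}T^{\delta}\|_2^2,\qquad \Tr W^{\alpha-1}=\|W^{1/2}\|_{2(\alpha-1)}^{2(\alpha-1)}.
\]

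\emph{Step 2: complex interpolation.} Apply Lemma~\ref{lem:complex-interpolation-schatten} to $F(z)=W^{1/2}T^{\delta z}$, exactly as in the proof of Theorem~\ref{lem:add-change-base}.
\begin{itemize}
\item On $\operatorname{Re}z=0$, $F(iy)=W^{1/2}T^{i\delta y}$, whose $L_{2(\alpha-1)}$-norm is $\|W^{1/2}\|_{2(\alpha-1)}$ because $T^{i\delta y}$ is unitary.
\item On $\operatorname{Re}z=1$, the $L_2$-norm is $\|W^{1/2}T^\delta\|_2$.
\end{itemize}
At $\theta=\gamma/\delta=1/\alpha$ the exponent matches, since
\[
\frac{1-1/\alpha}{2(\alpha-1)}+\frac{1/\alpha}{2}=\frac1\alpha .
\]
Hence
\[
\|W^{1/2}T^\gamma\|_\alpha\le \|W^{1/2}\|_{2(\alpha-1)}^{1-1/\alpha}\,\|W^{1/2}T^\delta\|_2^{1/\alpha}.
\]
Raising to the power $\alpha$ and using Step 1 gives
\[
\Tr S^{\alpha/2}\le \bigl(\Tr W^{\alpha-1}\bigr)^{1/2}\bigl(\Tr S\,T^{\alpha-2}\bigr)^{1/2}.
\]

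\emph{Step 3: bound $\Tr W^{\alpha-1}\le \E\Tr\mathcal A^\alpha$.} Note that
\[
\Tr W^{\alpha-1}=\Tr\bigl(T^{\frac{1-p}{2p}}ST^{\frac{1-p}{2p}}\bigr)^p=Q_p(S\|T),\qquad p=\alpha-1>1 .
\]
This is the sandwiched Rényi trace functional. It is jointly convex in $(S,T)$ for $p>1$ by the Frank--Lieb convexity theorem (in the Lieb/Ando trace-functional family discussed in the introduction). Since $(S,T)=\E(\mathcal A^2,\mathcal A)$ is a finite convex combination, Jensen's inequality gives
\[
Q_p(S\|T)\le \E\,Q_p(\mathcal A^2\|\mathcal A).
\]
Because $\mathcal A^2$ and $\mathcal A$ commute, each sample gives
\[
Q_p(\mathcal A^2\|\mathcal A)=\Tr\mathcal A^{(2-2\gamma)(\alpha-1)}=\Tr\mathcal A^{\alpha},
\]
using $(2-2\gamma)(\alpha-1)=2(\alpha-1)-(\alpha-2)=\alpha$. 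For samples where $\mathcal A$ is singular, the support of $\mathcal A^2$ lies in that of $\mathcal A$, so the value is still $\Tr\mathcal A^\alpha$; alternatively, the $\varepsilon$-regularization covers this case. Combining Step 3 with Step 2 yields the lemma.

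The main obstacle is Step 3. The naive Cauchy--Schwarz splitting leads instead to the Petz-type functional $\Tr S^{\alpha-1}T^{2-\alpha}$, which is jointly convex only for $\alpha\le3$. The choice of $W$ is made precisely so that the sandwiched functional appears, since that one is jointly convex for all $p>1$. The remaining care is the exponent bookkeeping in the interpolation and the regularization for singular $T$.
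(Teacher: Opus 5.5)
\textbf{Verdict: correct, by a genuinely different route from the paper.} I checked your exponent bookkeeping: $\delta=\gamma+\beta$, $\theta=\gamma/\delta=1/\alpha$, $p_\theta=\alpha$, $\frac{1-p}{2p}=-\gamma$ for $p=\alpha-1$, and $(2-2\gamma)(\alpha-1)=\alpha$. All of it is right. The Frank--Lieb joint convexity of $Q_p$ for $p>1$ is exactly the input you need.

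\textbf{How the paper argues.} For this lemma the paper uses no interpolation and no joint-convexity theorem. It dilates the distribution into $\mathscr A=\bigoplus_x A_x$ and sets $T=\mathscr A^{1/2}J$, so that $T^*T=\E\mathcal A$ and $T^*\mathscr A T=\E\mathcal A^2$. Araki--Lieb--Thirring then gives the intermediate bound
\[
\Tr(\E\mathcal A^2)^{\alpha/2}\le \E\Tr \mathcal A^{\alpha/2+1}(\E\mathcal A)^{\alpha/2-1}.
\]
It then splits the right-hand side by Cauchy--Schwarz for the inner product $(X,Y)\mapsto\E\Tr X^*Y$, with $X=\mathcal A^{\alpha/2}$ and $Y=\mathcal A(\E\mathcal A)^{\alpha/2-1}$. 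The expectation is already outside before the splitting, so each sample contributes $\Tr A_x^\alpha$ directly and no convexity is needed. This is how the paper sidesteps the Petz-type obstruction you point out.

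\textbf{How your route differs.} You first isolate a deterministic inequality,
\[
\Tr S^{\alpha/2}\le Q_{\alpha-1}(S\|T)^{1/2}\,(\Tr S T^{\alpha-2})^{1/2},
\]
valid for arbitrary $S\ge0$ and $T>0$. All the probabilistic content then goes into a single Jensen step for the jointly convex sandwiched functional.
\begin{itemize}
\item Cost: heavier machinery, namely complex interpolation plus Frank--Lieb, against the paper's Araki--Lieb--Thirring plus Cauchy--Schwarz. The paper's argument is fully elementary given a standard trace inequality.
\item Gain: a cleaner separation and a reusable intermediate quantity. The bound is in terms of $Q_{\alpha-1}(\E\mathcal A^2\|\E\mathcal A)$, which your Step~3 shows is never larger than $\E\Tr\mathcal A^\alpha$.
\end{itemize}

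\textbf{One small tidy-up.} You assume $T$ invertible but still allow singular samples, which forces the support discussion. It is cleaner to always regularize $\mathcal A\mapsto\mathcal A+\varepsilon\mathbbm{1}$, so that $T$ and every sample are invertible. Every term in the final inequality is continuous in $\varepsilon$ for $\alpha>2$, so the limit is harmless.
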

\begin{proof}
Write the distribution of $\mathcal A$ as $P(\mathcal{A}=A_x)=w_x$ and put
\[
 \mathscr A=\bigoplus_x A_x,
 \qquad
 J\xi=(\sqrt{w_x}\,\xi)_x,
 \qquad
 T=\mathscr A^{1/2}J.
\]
Then
\begin{equation}\label{eq:add-block-identities}
 T^*T=\E \mathcal A,
 \qquad
 T^*\mathscr A T=\E \mathcal A^2,
 \qquad
 T^*\mathscr A^qT=\E \mathcal A^{q+1}.
\end{equation}
For $Y\geq0$ and $q\geq1$, the
Araki--Lieb--Thirring inequality~\cite{Araki1990inequality} gives
\begin{align*}
 \Tr(T^*YT)^q=\Tr(Y^{1/2}TT^*Y^{1/2})^q\notag\leq\Tr Y^q(TT^*)^q
 =\Tr (T^*T)^{q-1}T^*Y^qT.
\end{align*}
Here we used
$(TT^*)^q=T(T^*T)^{q-1}T^*$ on the support of $TT^*$. In fact, using the polar decomposition $T=U|T|$, we get
\[
T^*T=|T|^2,
\quad
TT^*=U|T|^2U^*,\quad
(TT^*)^q
=
\left(U|T|^2U^*\right)^q
=
U|T|^{2q}U^*.
\]
Moreover,
\[
\begin{aligned}
T(T^*T)^{q-1}T^*=U|T|\,(|T|^2)^{q-1}|T|U^*=U|T|\,|T|^{2q-2}|T|U^*=U|T|^{2q}U^*.
\end{aligned}
\]
Hence
\((TT^*)^q=T(T^*T)^{q-1}T^*\).

Put $Y=\mathscr A$ and $q=\alpha/2\geq 1$. By
\eqref{eq:add-block-identities},
\begin{equation}\label{eq:add-alt-applied}
 \Tr (\E \mathcal A^2)^q
 \leq\Tr (\E\mathcal{A}^{q+1})(\E\mathcal{A})^{q-1}=\E \Tr \mathcal{A}^{q+1}(\E\mathcal{A})^{q-1}.
\end{equation}
Cauchy--Schwarz inequality then gives
\begin{align}
 \E\Tr \mathcal A^{q+1}(\E\mathcal{A})^{q-1}
 &=\E\Tr \mathcal A^{q} \mathcal{A}(\E\mathcal{A})^{q-1}\notag\\
 &\leq
 \left(\E\Tr \mathcal A^{2q}\right)^{1/2}
 \left(\E\Tr (\E\mathcal{A})^{q-1}\mathcal A^2(\E\mathcal{A})^{q-1}\right)^{1/2}\notag\\
 &=\left[
   \left(\E\Tr \mathcal A^\alpha\right)
    \left(\Tr (\E \mathcal A^2)(\E \mathcal A)^{\alpha-2}\right)
   \right]^{1/2}.
 \label{eq:add-bridge-cs}
\end{align}
Combining \eqref{eq:add-alt-applied} and \eqref{eq:add-bridge-cs} proves
the desired result.
\end{proof}

\section{Reliability Function of Quantum Soft Covering}\label{sec:sc}

In this section, we derive the reliability function of quantum soft covering for i.i.d. random codebooks under the sandwiched R{\'e}nyi divergence for orders $\alpha \in [2,\infty)$. To the best of our knowledge, this is the first work to provide an exact characterization of the reliability function in the quantum soft covering setting.
\subsection{Problem Statement and Main Results}

Let $\mathcal{N}:\mathcal{X}\to \mathcal{D}(E)$ be a C-Q channel with $\mathcal{N}(x)= \rho_E^x$.
Given an input distribution \(P_X\), the corresponding output is then given by the marginal state
\begin{equation*}
\rho_E=\sum_{x\in\mathcal{X}}P_X(x)\rho_E^x=\mathbb E_X\rho_E^{X}.
\end{equation*}
The goal of quantum soft covering is
to approximate the marginal state at the channel output,
given access to the C-Q channel $\mathcal{N}:x\mapsto\rho_{E}^{x}$
and sampling from the prior distribution $P_{X}$.

To that end, we consider a random codebook $\mathcal{{C}}=\{X(m)\}_{m=1}^{M}$
of size $M$, where its codewords $\{X(1),\cdot\cdot\cdot,X(M)\}$
are independently generated according to $P_{X}$. Then, the average
state induced by the random codebook $\mathcal{{C}}$ is

\begin{equation*}
\rho_{E}^{\mathcal{{C}}}=\frac{1}{M}\sum_{x\in\mathcal{{C}}}\rho_{E}^{x}=\frac{1}{M}\sum_{m=1}^{M}\rho_{E}^{f_{\mathcal{C}}(m)},
\end{equation*}
where we set $f_{\mathcal{C}}(m)=X(m).$
In contrast to previous works on the quantum soft covering, we take
the order-$\alpha$ sandwiched Rényi divergence
\begin{equation*}
D_\alpha(\rho_{\mathcal{C}E}\|\rho_{\mathcal{C}}\otimes\rho_{E}):=\frac{1}{\alpha-1}\log\mathbb{E_{\mathcal{C}}}Q_{\alpha}(\rho_{E}^{\mathcal{C}}\|\rho_{E})
\end{equation*}
to measure the discrepancy between the codebook-induced state $\rho_{E}^{\mathcal{{C}}}$
and the true marginal state $\rho_{E}$.

For the memoryless extension $\mathcal{N}^{\otimes n}: \mathcal{X}^n \to \mathcal{D}(E^{\otimes n})$, the channel acts as
\begin{equation*}
\mathcal{N}^{\otimes n}(x^n)=\rho_{E^n}^{x^n}
=\bigotimes_{i=1}^n \rho_E^{x_i}.
\end{equation*}
Given the product input distribution $P_X^{\otimes n}$, the corresponding output marginal state is
\begin{equation*}
\rho_{E^n}=\sum_{x^n\in\mathcal{X}^n} P_X^{\otimes n}(x^n)\,\rho_{E^n}^{x^n}.
\end{equation*}
We consider a random codebook $\mathcal{C}_n=\{X^n(m)\}_{m=1}^{M=2^{nR}}$,
where each codeword $X^n(m)$ is independently drawn according to $P_X^{\otimes n}$ and the non-negative number $R$ is the codebook rate. The induced average output state is
\begin{equation*}
\rho_{E^n}^{\mathcal{C}_n}
=\frac{1}{M}\sum_{m=1}^{M}\rho_{E^n}^{X^n(m)}.
\end{equation*}
The performance of quantum soft covering in the $n$-shot setting is then characterized by
\begin{equation*}
D_\alpha(\rho_{\mathcal{C}_n E^n}\|\rho_{\mathcal{C}_n}\otimes \rho_{E^n})
=\frac{1}{\alpha-1}\log \mathbb{E}_{\mathcal{C}_n}
\left[ Q_\alpha(\rho_{E^n}^{\mathcal{C}_n}\|\rho_{E^n})\right].
\end{equation*}
The reliability function of quantum soft covering for i.i.d. random codebooks is defined as
\begin{equation*}
E_{\rm sc}^{(\alpha)}(\rho_{XE},R):=\liminf_{n\to\infty} -\frac{1}{n} \log D_{\alpha} (\rho_{\mathcal{C}_nE^n}\|\rho_{\mathcal{C}_n}\otimes \rho_E^{\otimes n}).
\end{equation*}

\begin{theorem}\label{thm:ref-fun}
Let $\alpha\in[2,\infty)$, $R\geq0$ and $\rho_{XE}=\sum_{x\in\mathcal{ X}}P_{X}(x)|x\rangle\langle x|\otimes\rho_{E}^{x}$
be a C-Q state. When $R> I_\alpha(X:E)_{\rho_{XE}}$, we have
\begin{align*}
E_{\rm sc}^{(\alpha)}(\rho_{XE},R)= \min\{\gamma'(\alpha),\gamma(\alpha-1)\},
\end{align*}
where $\gamma(s):=s(R-I_{1+s}(X:E)_{\rho_{XE}})$ and $\gamma'(s):=R-I_{2}^{(s)}(X:E)_{\rho_{XE}}$.
\end{theorem}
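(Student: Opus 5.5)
The plan is to reduce the exponent of $D_\alpha$ to the exponent of $\mathbb E_{\mathcal C_n}\bigl[Q_\alpha(\rho^{\mathcal C_n}_{E^n}\|\rho_{E^n})-1\bigr]$. Since $\log(1+u)\asymp u$ for small $u$, it suffices to show that this quantity decays with rate $\min\{\gamma'(\alpha),\gamma(\alpha-1)\}$. Fix $n$ and write $\sigma=\rho_{E^n}$. Put $A=\sigma^{\frac{1-\alpha}{2\alpha}}\rho^{\mathcal C_n}_{E^n}\sigma^{\frac{1-\alpha}{2\alpha}}$ and $B=\mathbb E A=\sigma^{1/\alpha}$. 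Then $\operatorname{Tr}A^\alpha-1=\mathfrak B_\alpha(A,B)+\alpha\operatorname{Tr}B^{\alpha-1}(A-B)$, and the last term equals $\alpha\operatorname{Tr}\sigma^{(\alpha-1)/\alpha}(A-B)$, whose expectation is zero. Hence $\mathbb E[Q_\alpha-1]=\mathbb E\,\mathfrak B_\alpha(A,B)$. With $H=A-B=\frac1M\sum_m Y_m$, where $Y_m=\sigma^{\frac{1-\alpha}{2\alpha}}\rho^{X^n(m)}_{E^n}\sigma^{\frac{1-\alpha}{2\alpha}}-B$ are i.i.d. and centered, Lemma~\ref{lem:add-bregman} gives the sandwich
\[
(\alpha-1)\,\mathbb E\operatorname{Tr}HB^{\alpha-2}H\le \mathbb E\,\mathfrak B_\alpha(A,B)\le C_\alpha\bigl(\mathbb E\operatorname{Tr}HB^{\alpha-2}H+\mathbb E\operatorname{Tr}|H|^\alpha\bigr).
\]
Proposition~\ref{b-est} gives the additional lower bound $2^{1-\alpha}\mathbb E\|H\|_\alpha^\alpha$. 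The case $\alpha=2$ is treated directly, since then $\mathfrak B_2=\operatorname{Tr}H^2$.

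\textbf{The quadratic term.} By independence, $\mathbb E\operatorname{Tr}HB^{\alpha-2}H=\frac1M\mathbb E\operatorname{Tr}Y_1B^{\alpha-2}Y_1$. Expanding gives
\[
\frac1M\Bigl(\mathbb E_{X^n}\operatorname{Tr}\rho^{X^n}\sigma^{\frac{1-\alpha}{\alpha}}\rho^{X^n}\sigma^{-\frac1\alpha}-1\Bigr).
\]
The expectation here is exactly $\exp D_2^{(\alpha)}(\rho_{X^nE^n}\|\rho_{X^n}\otimes\rho_{E^n})$. By Corollary~\ref{cor:additivity-ordering-information} it equals $2^{nI_2^{(\alpha)}}$, so this term is $\asymp 2^{-n(R-I_2^{(\alpha)})}=2^{-n\gamma'(\alpha)}$. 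The correction $-1/M$ is negligible because $I_2^{(\alpha)}>0$ in nontrivial cases; if $I_2^{(\alpha)}=0$, the state is product and everything vanishes.

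\textbf{The $\alpha$-th moment.} For the upper bound I apply Lemma~\ref{lem:matrix-rosenthal} with $p=\alpha$:
\[
\mathbb E\operatorname{Tr}|H|^\alpha\le C_\alpha M^{-\alpha}\Bigl[M\,\mathbb E\operatorname{Tr}|Y_1|^\alpha+\operatorname{Tr}(M\,\mathbb E Y_1^2)^{\alpha/2}\Bigr].
\]
For the first term I use $|Y_1|^\alpha\lesssim(\text{first part})^\alpha+B^\alpha$. This gives $\mathbb E\operatorname{Tr}|Y_1|^\alpha\lesssim 2^{n(\alpha-1)I_\alpha}$, so the first term is of order $2^{-n(\alpha-1)(R-I_\alpha)}=2^{-n\gamma(\alpha-1)}$. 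For the second term I bound $\mathbb E Y_1^2\le\mathbb E\mathcal A^2$ with $\mathcal A=\sigma^{\frac{1-\alpha}{2\alpha}}\rho^{X^n}\sigma^{\frac{1-\alpha}{2\alpha}}$, and then apply Lemma~\ref{lem:add-positive-bridge}:
\[
\operatorname{Tr}(\mathbb E\mathcal A^2)^{\alpha/2}\le\bigl(2^{nI_2^{(\alpha)}}\cdot2^{n(\alpha-1)I_\alpha}\bigr)^{1/2}.
\]
So the second term is at most $M^{-\alpha/2}$ times this, which is the geometric mean of $(M^{-1}2^{nI_2^{(\alpha)}})^{\alpha/2}$-type quantities. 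I will rewrite it as $\bigl(2^{-n\gamma'}\bigr)^{1/2}\bigl(2^{-n\gamma}\bigr)^{1/2}M^{-(\alpha-2)/2}$. Since $M\ge1$, this is at most $\max\{2^{-n\gamma'},2^{-n\gamma}\}$. This yields achievability of $\min\{\gamma'(\alpha),\gamma(\alpha-1)\}$.

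\textbf{Converse.} The quadratic lower bound already gives exponent $\le\gamma'(\alpha)$. For $\gamma(\alpha-1)$, I use Proposition~\ref{b-est} together with the event that a single codeword is atypical. Because $\operatorname{Tr}|H|^\alpha\ge$ (a contribution from one term), I lower bound $\mathbb E\|H\|_\alpha^\alpha$ by conditioning: codeword $1$ takes a sequence $x^n$ while the remaining average is close to $B$. Then $\|H\|_\alpha\gtrsim M^{-1}\|\mathcal A_{x^n}\|_\alpha-\|\text{rest}\|_\alpha$. Summing over $x^n$ of a type achieving the dominant term in $\mathbb E\operatorname{Tr}\mathcal A^\alpha=2^{n(\alpha-1)I_\alpha}$ (via a method-of-types / pinching argument so that the dominant contribution is concentrated), this gives $\mathbb E\|H\|_\alpha^\alpha\gtrsim M^{1-\alpha}2^{n(\alpha-1)I_\alpha}$ up to subexponential factors. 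I expect this converse step to be the main obstacle. It requires controlling the cross term with the other $M-1$ codewords, for which I will use $\mathbb E\|\text{rest}\|^\alpha_\alpha\to0$ from the achievability bound and a triangle inequality. It also requires handling the non-commutativity of $\sigma$ with $\rho^{x^n}$, for which I would first try pinching with respect to $\sigma$, at polynomial cost.
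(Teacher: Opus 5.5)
\textbf{Gap in the $\gamma(\alpha-1)$ converse.} Your achievability argument and the $\gamma'(\alpha)$ half of the converse follow the paper's own route: the Bregman reduction, Lemma~\ref{lem:add-bregman}, Rosenthal combined with Lemma~\ref{lem:add-positive-bridge}, and the quadratic lower bound. There is one small slip there. The quantity $M^{-\alpha/2}\bigl(2^{nI_2^{(\alpha)}}2^{n(\alpha-1)I_\alpha}\bigr)^{1/2}$ equals $\bigl(2^{-n\gamma'(\alpha)}2^{-n\gamma(\alpha-1)}\bigr)^{1/2}$ exactly, with no extra factor $M^{-(\alpha-2)/2}$; your bound by the maximum still holds.

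The real problem is the $\gamma(\alpha-1)$ converse. It is needed on the interval of rates just above $I_\alpha$ where $\gamma(\alpha-1)<\gamma'(\alpha)$. Let $a:=\|\mathcal A_{x^n}\|_\alpha$ on the dominant type class $T^*$.
\begin{itemize}
\item Conditioning on codeword $1$ falling in $T^*$ gives at most order $\Pr[T^*]\,M^{-\alpha}a^\alpha$. That is a factor $M$ short of the target $M^{1-\alpha}\Pr[T^*]a^\alpha\approx2^{-n\gamma(\alpha-1)}$.
\item Recovering that factor by summing over $m$ needs the events ``codeword $m$ lies in $T^*$'' to be essentially disjoint, i.e.\ $M\Pr[T^*]\lesssim1$.
\item Your triangle-inequality step fails for the same reason. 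You need $\|\frac1M\sum_{m\ge2}Y_m\|_\alpha\ll M^{-1}a$ with non-negligible probability. In this regime the rest genuinely has $\mathbb E\|\frac1M\sum_{m\ge2}Y_m\|_\alpha^\alpha\approx2^{-n\gamma(\alpha-1)}$, which exceeds $(M^{-1}a)^\alpha$ by roughly $M\Pr[T^*]$.
\end{itemize}
Nothing in your sketch rules out $M\Pr[T^*]\gg1$. Take weakly correlated classical examples, i.e.\ $\epsilon$-perturbations of a product. The tilted type there has $D(P^*\|P_X)=O(\epsilon^4)$, while $R\approx I_\alpha=\Theta(\epsilon^2)$. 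So $M\Pr[T^*]\gg1$ on precisely the interval where this branch is active. Types and pinching do not fix this: the $\alpha$-th moment has to be collected from all $M$ summands simultaneously.

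\textbf{The missing idea.} Apply uniform convexity increment by increment. The paper telescopes along the martingale $T_k=\overline A+\frac1M\sum_{m\le k}H_{X_m}$, where every $T_k\ge0$. Conditional centering kills the linear terms, so by Proposition~\ref{b-est}
$\mathbb E\,Q_\alpha-1=\sum_{k=1}^M\mathbb E\,\mathfrak B_\alpha(T_k,T_{k-1})\ge2^{1-\alpha}M^{1-\alpha}V_\alpha$.
Proposition~\ref{centered-moment-comparison} then gives $V_\alpha\ge2^{1-\alpha}2^{n(\alpha-1)I_\alpha}-1$. The $-1$ costs only $M^{1-\alpha}=2^{-n(\alpha-1)R}$, which is negligible since $I_\alpha>0$.

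Alternatively, start from your own bound $2^{1-\alpha}\mathbb E\|H\|_\alpha^\alpha$. Symmetrize with an independent copy and Rademacher signs, then iterate the lower Clarkson--McCarthy inequality (Lemma~\ref{lem:CM}) as in the proof of Proposition~\ref{prop:Bernoulli}. This gives $\mathbb E\|\sum_mY_m\|_\alpha^\alpha\ge2^{-\alpha}\sum_m\mathbb E\|Y_m\|_\alpha^\alpha$. Neither route needs types, pinching or cross-term control.
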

\begin{remark}
We do not consider the case when $\rho_{XE}=\rho_{X}\otimes\rho_E$. In this case, the induced average output state is exactly the target state.
\end{remark}
\begin{remark}
At \(\alpha=2\), a direct collision calculation gives the exact identity
\begin{equation*}
\mathbb E_{\mathcal C}
Q_2\left(\rho_E^{\mathcal C}\middle\|\rho_E\right)
=
1+\frac{2^{I_2(X:E)_{\rho_{XE}}}-1}{M}.
\end{equation*}
Hence
\begin{equation*}
E_{\rm sc}^{(2)}(\rho_{XE},R)= R-I_2(X:E)_{\rho_{XE}}=\gamma(1)=\gamma'(2).
\end{equation*}
In the following, we establish the case $\alpha\in(2,\infty)$.
\end{remark}

\subsection{Reliability function for $\alpha\in(2,\infty)$}
In this subsection, we establish the reliability function of quantum soft covering for i.i.d. random codebooks under the sandwiched Rényi divergence of order $\alpha \in (2,\infty)$.

Throughout this subsection, we fix \(\alpha\in(2,\infty)\) and consider
the classical-quantum state
\begin{equation*}
\rho_{XE}
=
\sum_{x\in\mathcal X}
P_X(x)\,|x\rangle\langle x|\otimes\rho_E^x,
\end{equation*}
with marginal state
\begin{equation*}
\rho_E=\sum_{x\in\mathcal X}P_X(x)\rho_E^x.
\end{equation*}
We use the following notation throughout. All operators associated with
the quantum system \(E\) are understood to act on
\(\operatorname{supp}\rho_E\). Consequently, \(\rho_E\) is strictly
positive on the underlying Hilbert space. Define
\begin{equation*}
A_x
:=
\rho_E^{\frac{1-\alpha}{2\alpha}}
\rho_E^x
\rho_E^{\frac{1-\alpha}{2\alpha}},
\qquad
\overline A
:=
\rho_E^{1/\alpha},
\end{equation*}
and introduce the centered operators
\begin{equation*}
H_x:=A_x-\overline A.
\end{equation*}
By construction,
\begin{equation}
\mathbb E_X H_X
=
\sum_{x\in\mathcal X}P_X(x)H_x
=
0.
\label{eq:centered-Hx}
\end{equation}
Moreover,
\begin{align*}
&\sum_{x\in\mathcal X}
P_X(x)
\operatorname{Tr}
H_x\rho_E^{\frac{\alpha-2}{\alpha}}H_x
\nonumber\\
&\quad=
\sum_{x\in\mathcal X}
P_X(x)
\operatorname{Tr}
A_x\rho_E^{\frac{\alpha-2}{\alpha}}A_x
-
\operatorname{Tr}\rho_E
\nonumber\\
&\quad=
\mathbb E_X
\operatorname{Tr}
A_X^2\rho_E^{\frac{\alpha-2}{\alpha}}
-1
\nonumber\\
&\quad=
\exp\!\left\{
I_2^{(\alpha)}(X:E)_{\rho_{XE}}
\right\}
-1.
\end{align*}
We shall also use the following centered moment:
\begin{equation*}
V_\alpha(X:E)_{\rho_{XE}}
:=
\sum_{x\in\mathcal X}
P_X(x)\|H_x\|_\alpha^\alpha
=
\mathbb E_X\operatorname{Tr}|H_X|^\alpha
\ge 0.
\end{equation*}

Let
\(\mathcal C=\{X_1,\ldots,X_M\}\) be an i.i.d. random codebook whose
codewords \(X_1,\ldots,X_M\) are drawn according to \(P_X\). Define
\begin{equation*}
S_{\mathcal C}
:=
\frac{1}{M}\sum_{m=1}^M A_{X_m},
\qquad
\Delta_{\mathcal C}
:=
S_{\mathcal C}-\rho_E^{1/\alpha}
=
\frac{1}{M}\sum_{m=1}^M H_{X_m}.
\end{equation*}
It follows from \eqref{eq:centered-Hx} that
\begin{equation*}
\mathbb E_{\mathcal C}\Delta_{\mathcal C}=0.
\end{equation*}

\begin{proposition}\label{centered-moment-comparison}
The centered moment satisfies
\begin{align*}
\left(2^{1-\alpha}\exp\{(\alpha-1)I_\alpha(X:E)_{\rho_{XE}}\}-1\right)_{+}
&\leq V_\alpha(X:E)_{\rho_{XE}}\notag\\
&\leq 2^{\alpha}\exp\{(\alpha-1)I_\alpha(X:E)_{\rho_{XE}}\}.
\end{align*}
\end{proposition}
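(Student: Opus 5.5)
The plan is to compare $\|H_x\|_\alpha$ with $\|A_x\|_\alpha$ and $\|\overline A\|_\alpha$ via the triangle inequality in the Schatten $\alpha$-norm, and then average over $x$. First I record the two identities that make this work:
\[
\mathbb E_X\operatorname{Tr}A_X^\alpha=\sum_x P_X(x)\operatorname{Tr}\bigl(\rho_E^{\frac{1-\alpha}{2\alpha}}\rho_E^x\rho_E^{\frac{1-\alpha}{2\alpha}}\bigr)^\alpha=\exp\{(\alpha-1)I_\alpha(X:E)_{\rho_{XE}}\}
\]
by \eqref{def:sand-mi}, and $\operatorname{Tr}\overline A^\alpha=\operatorname{Tr}\rho_E=1$. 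Write $Q:=\exp\{(\alpha-1)I_\alpha(X:E)_{\rho_{XE}}\}$. Note $Q\ge1$, since $I_\alpha\geq0$; alternatively this follows from convexity of $\operatorname{Tr}T^\alpha$ and $\mathbb E_X A_X=\overline A$.

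\emph{Upper bound.} Since $\|H_x\|_\alpha\le\|A_x\|_\alpha+\|\overline A\|_\alpha$, the scalar inequality $(a+b)^\alpha\le 2^{\alpha-1}(a^\alpha+b^\alpha)$ gives
\[
\|H_x\|_\alpha^\alpha\le 2^{\alpha-1}\bigl(\operatorname{Tr}A_x^\alpha+1\bigr).
\]
Averaging over $P_X$ yields $V_\alpha\le 2^{\alpha-1}(Q+1)\le 2^{\alpha}Q$, where the last step uses $Q\ge1$.

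\emph{Lower bound.} Here I reverse the roles: $A_x=H_x+\overline A$, so
\[
\operatorname{Tr}A_x^\alpha=\|A_x\|_\alpha^\alpha\le 2^{\alpha-1}\bigl(\|H_x\|_\alpha^\alpha+1\bigr).
\]
Averaging gives $Q\le 2^{\alpha-1}(V_\alpha+1)$, that is, $V_\alpha\ge 2^{1-\alpha}Q-1$. Combined with the trivial bound $V_\alpha\ge0$, this gives the positive-part statement.

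No step is a real obstacle. The only points needing care are verifying the identity for $\mathbb E_X\operatorname{Tr}A_X^\alpha$ (the normalization by $P_X(x)$ sits outside the power, as in \eqref{def:sand-mi}) and justifying $Q\ge 1$ for the final simplification of the upper bound. Proposition~\ref{b-est} could be used as an alternative route to the lower bound, but the triangle inequality argument above suffices.
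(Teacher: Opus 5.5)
Your proposal is correct and follows the paper's proof essentially step for step. For the upper bound you use the Schatten triangle inequality together with $(a+b)^\alpha\le 2^{\alpha-1}(a^\alpha+b^\alpha)$, $\operatorname{Tr}\overline A^\alpha=1$ and $Q\ge1$; for the lower bound you apply the same inequality to $A_x=H_x+\overline A$ and then take the positive part using $V_\alpha\ge 0$.
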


\begin{proof}
For the upper bound on $V_\alpha(X:E)_{\rho_{XE}}$, the Schatten norm triangle
inequality and the scalar inequality \(
(a+b)^\alpha
\leq
2^{\alpha-1}(a^\alpha+b^\alpha)
\) give
\begin{align*}
\Vert H_x \Vert_\alpha^\alpha=\left\|A_x-\bar{A}\right\|_\alpha^\alpha
\leq
\left(
\left\|A_x\right\|_\alpha
+
\left\|\bar{A}\right\|_\alpha
\right)^\alpha\leq
2^{\alpha-1}
\left(
\left\|A_x\right\|_\alpha^\alpha
+
\left\|\bar{A}\right\|_\alpha^\alpha
\right).
\end{align*}
Since
\(
\left\|\bar{A}\right\|_\alpha^\alpha
=
\operatorname{Tr}\bar{A}^{\alpha}
=1\), taking expectation over $x$ yields
\begin{align}\label{eq:centered-moment-upper}
V_\alpha(X:E)_{\rho_{XE}}
\leq&
2^{\alpha-1}(\exp\{(\alpha-1)I_\alpha(X:E)_{\rho_{XE}}\}+1)\nonumber\\
\leq&2^{\alpha}\exp\{(\alpha-1)I_\alpha(X:E)_{\rho_{XE}}\}.
\end{align}
In the last inequality we use the fact that $\exp\{(\alpha-1)I_\alpha(X:E)_{\rho_{XE}}\}\geq 1$.

For the lower bound, the triangle inequality gives
\begin{equation*}
\left\|A_x\right\|_\alpha
\leq
\left\|A_x-\bar{A}\right\|_\alpha
+
\left\|\bar{A}\right\|_\alpha.
\end{equation*}
Applying the same scalar inequality, we obtain
\begin{equation*}
\left\|A_x\right\|_\alpha^\alpha
\leq
2^{\alpha-1}
\left(
\left\|A_x-\bar{A}\right\|_\alpha^\alpha
+
\left\|\bar{A}\right\|_\alpha^\alpha
\right),
\end{equation*}
or equivalently,
\begin{equation*}
\left\|A_x-\bar{A}\right\|_\alpha^\alpha
\geq
2^{1-\alpha}\left\|A_x\right\|_\alpha^\alpha
-
\left\|\bar{A}\right\|_\alpha^\alpha.
\end{equation*}
Taking expectation on $x$ and using
$\left\|\bar{A}\right\|_\alpha^\alpha=1$, we obtain
\begin{equation*}
V_\alpha(X:E)_{\rho_{XE}}
\geq
2^{1-\alpha}\exp\{(\alpha-1)I_\alpha(X:E)_{\rho_{XE}}\}-1.
\end{equation*}
Since $V_\alpha(X:E)_{\rho_{XE}}\geq0$, this implies
\begin{equation}\label{lower-V}
V_\alpha(X:E)_{\rho_{XE}}
\geq
\left(2^{1-\alpha}\exp\{(\alpha-1)I_\alpha(X:E)_{\rho_{XE}}\}-1\right)_{+}.
\end{equation}
Combining \eqref{eq:centered-moment-upper} and \eqref{lower-V} completes the proof.
\end{proof}

\begin{proposition}\label{prop:two-sided-one-shot-detailed}
Let $\alpha>2$, $R\geq0$ and $\rho_{XE}$ be a C-Q state. Let $M=2^R$ and \(\mathcal C=\{X_1,\cdots,X_M\}\) be a random codebook whose codewords are sampled independently according to \(P_X\). There exists a constant \(C_\alpha<\infty\), depending only on
\(\alpha\), such that
\begin{align}
 Q_\alpha(\rho_{\mathcal CE}\Vert
          \rho_{\mathcal C}\otimes\rho_E)-1\leq C_\alpha\left(2^{-\gamma^\prime(\alpha)}+2^{-\gamma(\alpha-1)}
 \right)
 \label{eq:add-one-shot-upper}
\end{align}
and
\begin{align*}
 Q_\alpha(\rho_{\mathcal CE}\Vert
          \rho_{\mathcal C}\otimes\rho_E)\!-\!1\!\geq\!
 \max\Big\{
 (\alpha-1)(2^{-\gamma^\prime(\alpha)}-2^{-R}),
2^{2-2\alpha-\gamma(\alpha-1)}-2^{-(\alpha-1)(R+1)}
 \Big\},
\end{align*}
where $\gamma(s):=s(R-I_{1+s}(X:E)_{\rho_{XE}})$ and $\gamma'(s):=R-I_{2}^{(s)}(X:E)_{\rho_{XE}}$.
\end{proposition}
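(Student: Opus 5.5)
The plan is to reduce everything to the centered sum $\Delta_{\mathcal C}$ and the Bregman functional $\cB_\alpha$ of \eqref{eq:add-bregman-def}. First I check that $Q_\alpha(\rho_{\mathcal CE}\Vert\rho_{\mathcal C}\otimes\rho_E)=\E_{\mathcal C}\Tr S_{\mathcal C}^\alpha$, which is conjugation of $\rho_E^{\mathcal C}$ by $\rho_E^{\frac{1-\alpha}{2\alpha}}$. Since $\Tr\overline A^\alpha=\Tr\rho_E=1$ and $\E_{\mathcal C}\Delta_{\mathcal C}=0$, the linear term $\alpha\,\E\Tr\overline A^{\alpha-1}\Delta_{\mathcal C}$ vanishes. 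This gives the exact identity
\[
Q_\alpha(\rho_{\mathcal CE}\Vert\rho_{\mathcal C}\otimes\rho_E)-1=\E_{\mathcal C}\,\cB_\alpha(S_{\mathcal C},\overline A).
\]
I also record the second-moment identity. By independence and \eqref{eq:centered-Hx} the cross terms vanish, and the computation preceding Proposition~\ref{centered-moment-comparison} gives
\[
\E_{\mathcal C}\Tr\Delta_{\mathcal C}\rho_E^{\frac{\alpha-2}{\alpha}}\Delta_{\mathcal C}
=\frac{1}{M}\bigl(2^{I_2^{(\alpha)}(X:E)}-1\bigr)=2^{-\gamma'(\alpha)}-2^{-R}.
\]

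For the upper bound \eqref{eq:add-one-shot-upper}, I apply \eqref{eq:add-bregman-upper} of Lemma~\ref{lem:add-bregman} with $A=S_{\mathcal C}$ and $B=\overline A$. The quadratic term is bounded by $2^{-\gamma'(\alpha)}$ using the identity above. For $\E\Tr|\Delta_{\mathcal C}|^\alpha$, I apply Lemma~\ref{lem:matrix-rosenthal} to $Y_m=H_{X_m}/M$ and treat the two Rosenthal terms separately.

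The $p$-th moment term is $M^{1-\alpha}V_\alpha$. By Proposition~\ref{centered-moment-comparison} it is at most $2^\alpha 2^{-\gamma(\alpha-1)}$.

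The square-function term $M^{-\alpha/2}\Tr(\E H_X^2)^{\alpha/2}$ is the main obstacle, because it mixes both exponents. I would proceed in three steps.
\begin{itemize}
\item[(a)] Write $\E H_X^2=\E A_X^2-\overline A^2\le\E A_X^2$. Since $t\mapsto t^{\alpha/2}$ is increasing, $T\mapsto\Tr T^{\alpha/2}$ is monotone, so the term is at most $M^{-\alpha/2}\Tr(\E A_X^2)^{\alpha/2}$.
\item[(b)] Apply Lemma~\ref{lem:add-positive-bridge} with $\mathcal A=A_X$, using $\E A_X=\overline A$ and $\overline A^{\alpha-2}=\rho_E^{\frac{\alpha-2}{\alpha}}$. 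This gives
\[
\Tr(\E A_X^2)^{\alpha/2}\le\bigl[2^{I_2^{(\alpha)}}\cdot 2^{(\alpha-1)I_\alpha}\bigr]^{1/2}.
\]
\item[(c)] Multiplying by $M^{-\alpha/2}=(M^{-1}M^{1-\alpha})^{1/2}$ turns the bound into $(2^{-\gamma'(\alpha)}2^{-\gamma(\alpha-1)})^{1/2}$. By AM--GM this is at most $2^{-\gamma'(\alpha)}+2^{-\gamma(\alpha-1)}$.
\end{itemize}
Collecting the constants gives $C_\alpha$.

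For the first lower bound, I apply \eqref{eq:add-bregman-lower} and the second-moment identity, which give $(\alpha-1)(2^{-\gamma'(\alpha)}-2^{-R})$ directly.

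For the second lower bound, I telescope over codewords. Set $T_k:=\overline A+\frac1M\sum_{m\le k}H_{X_m}$. Each $T_k$ is positive semidefinite, because it equals $\frac1M\sum_{m\le k}A_{X_m}+(1-\frac kM)\overline A$. Conditioning on $X_1,\dots,X_{k-1}$ and using $\E H_{X_k}=0$ removes the linear terms, so
\[
\E\Tr T_k^\alpha-\E\Tr T_{k-1}^\alpha=\E\,\cB_\alpha(T_k,T_{k-1})\ge 2^{1-\alpha}M^{-\alpha}V_\alpha,
\]
where the inequality is Proposition~\ref{b-est}. Summing over $k=1,\dots,M$ gives $\E\Tr S_{\mathcal C}^\alpha-1\ge 2^{1-\alpha}M^{1-\alpha}V_\alpha$. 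Inserting the lower bound of Proposition~\ref{centered-moment-comparison} then yields
\[
2^{1-\alpha}M^{1-\alpha}\bigl(2^{1-\alpha}2^{(\alpha-1)I_\alpha}-1\bigr)
=2^{2-2\alpha-\gamma(\alpha-1)}-2^{-(\alpha-1)(R+1)}.
\]
This is exactly the second entry of the maximum.
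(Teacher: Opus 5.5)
Your proposal is correct and follows essentially the same route as the paper. It uses the identity $Q_\alpha-1=\E_{\mathcal C}\cB_\alpha(S_{\mathcal C},\overline A)$, then Lemma~\ref{lem:add-bregman} with the matrix Rosenthal inequality for the upper bound (the square-function term handled via $\E H_X^2\le\E A_X^2$, Lemma~\ref{lem:add-positive-bridge} and AM--GM), and the quadratic lower bound plus the martingale telescoping through Propositions~\ref{b-est} and~\ref{centered-moment-comparison} for the two lower bounds, with the same constants and exponents as the paper.
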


\begin{proof}
  Since $\E_{\mathcal C}\Delta_{\mathcal C}=0$ and
$\Tr\bar A^\alpha=1$,
\begin{align}
 Q_\alpha(\rho_{\mathcal CE}\Vert\rho_{\mathcal C}\otimes\rho_E)-1
 =\E_{\mathcal C} \Tr S_{\mathcal C}^{\alpha}-\Tr\bar A^\alpha=\E_{\mathcal C}
   \cB_\alpha(S_{\mathcal C},\bar A).
 \label{eq:add-Q-bregman}
\end{align}
Here, $\cB_\alpha$ is defined as in \eqref{eq:add-bregman-def}. We then divide the proof into upper bound and lower bound.

\noindent\textbf{\emph{Upper bound:}}
Lemma~\ref{lem:add-bregman} therefore implies that there is a
constant $K_\alpha<\infty$, depending only on $\alpha$, such that
\begin{align}
 Q_\alpha(\rho_{\mathcal CE}\Vert\rho_{\mathcal C}\otimes\rho_E)-1
 \leq K_\alpha\left[
 \E_{\mathcal C}\Tr\Delta_{\mathcal C}
                 \bar A^{\alpha-2}\Delta_{\mathcal C}
 +\E_{\mathcal C}\Tr|\Delta_{\mathcal C}|^\alpha
 \right].
 \label{eq:add-upper-start}
\end{align}
Since \(X_1,\ldots,X_M\) are independent and
\(\mathbb{E}_{X} H_X=0\), we have
\begin{align*}
\mathbb{E}_{X_i,X_j}
\Tr
  H_{X_i}
  \bar{A}^{\alpha-2}
  H_{X_j}
=
\Tr \mathbb{E}_{X_i}
  H_{X_i}
  \overline{A}^{\alpha-2}
  \mathbb{E}_{X_j}H_{X_j}
=0,\quad i\neq j.
\end{align*}
Hence
\begin{align*}
\mathbb{E}_{\mathcal{C}}
\Tr\Delta_{\mathcal{C}}\bar{A}^{\alpha-2}\Delta_{\mathcal{C}}
=&\frac{1}{M^2}\sum_{i,j=1}^M\mathbb{E}_{\mathcal{C}}\Tr H_{X_i}\bar{A}^{\alpha-2}H_{X_j} \nonumber\\
=&\frac{1}{M^2}\sum_{i=1}^M\mathbb{E}_{X_i}\operatorname{Tr}H_{X_i}\bar{A}^{\alpha-2}H_{X_i}
\end{align*}
Since $X_1,\cdots,X_M$ are identically distributed, we get
\begin{align*}
\frac{1}{M^2}
\sum_{i=1}^M
\mathbb{E}_{X_i}
\operatorname{Tr}
  H_{X_i}
  \bar{A}^{\alpha-2}
  H_{X_i}
=
\frac{1}{M}
\mathbb{E}_X
\operatorname{Tr}
  H_X
  \bar{A}^{\alpha-2}
  H_X.
\end{align*}
Hence we obtain
\begin{align}\label{eq:add-kms-average}
 \E_{\mathcal C}\Tr\Delta_{\mathcal C}
 \bar A^{\alpha-2}\Delta_{\mathcal C}=&\frac{1}{M}
\mathbb{E}_X
\operatorname{Tr}
  H_X
  \bar{A}^{\alpha-2}
  H_X \nonumber\\
  =&\frac{1}{M}(\exp{I_2^{(\alpha)}(X:E)_{\rho_{XE}}}-1)\leq 2^{-\gamma^\prime(\alpha)}.
\end{align}

For the other item in~\eqref{eq:add-upper-start}, writing
$$\Delta_\mathcal{C}=\sum_{i=1}^{M}\frac{H_{X_i}}{M}$$
and applying the matrix Rosenthal inequality
Lemma~\ref{lem:matrix-rosenthal} to $\frac{1}{M}H_{X_m}$ gives that there exists a finite constant \(T_\alpha\), depending only on \(\alpha\), such that
\begin{equation}\label{eq:2}
 \E_{\mathcal C}\Tr|\Delta_{\mathcal C}|^\alpha
 \leq T_\alpha\left[\frac{V_\alpha(X:E)_{\rho_{XE}}}{M^{\alpha-1}}+
 \frac{\Tr(\E_X H_X^2)^{\alpha/2}}{M^{\alpha/2}}
 \right].
\end{equation}
Since $\E_X H_X^2=\E_X A_X^2-\bar A^2$, we get
$$0\leq \E_X H_X^2\leq \E_X A_X^2.$$
Eigenvalue monotonicity yields
\begin{equation*}
 \Tr (\E_X H_X^2)^{\alpha/2}\leq\Tr (\E_X A_X^2)^{\alpha/2}.
\end{equation*}
Lemma~\ref{lem:add-positive-bridge} then gives
\begin{equation*}
\begin{aligned}
 \Tr (\E_X A_X^2)^{\alpha/2}&\leq\sqrt{
 \left(\Tr \E_X A_X^2\bar A^{\alpha-2}\right)
 \left(\E_X\Tr A_X^\alpha\right)}\\
 &=\sqrt{
 \exp\{I_2^{(\alpha)}(X:E)_{\rho_{XE}}\} \exp\{(\alpha-1)I_{\alpha}(X:E)_{\rho_{XE}}\}}.
\end{aligned}
\end{equation*}
This further yields
\begin{equation}\label{est-2}
\begin{aligned}
\frac{\Tr (\E_X H_X^2)^{\alpha/2}}{M^{\alpha/2}}&\leq \frac{\sqrt{
 \exp\{I_2^{(\alpha)}(X:E)_{\rho_{XE}}\} \exp{\{(\alpha-1)I_{\alpha}(X:E)_{\rho_{XE}}}\}}}{M^{\alpha/2}}\\
 &= \sqrt{2^{-\gamma^\prime(\alpha)}\cdot 2^{-\gamma(\alpha-1)}}\\
 &\leq 2^{-\gamma^\prime(\alpha)}+ 2^{-\gamma(\alpha-1)}
\end{aligned}
\end{equation}
Finally, by Proposition~\ref{centered-moment-comparison}, we get
\begin{align}\label{est-3}
\frac{V_\alpha(X:E)_{\rho_{XE}}}{M^{\alpha-1}}\leq \frac{2^\alpha\exp{\{(\alpha-1)I_{\alpha}(X:E)_{\rho_{XE}}\}}}{M^{\alpha-1}}=2^\alpha \cdot 2^{-\gamma(\alpha-1)}.
\end{align}
The combination of \eqref{eq:2}, \eqref{est-2} and \eqref{est-3} yields
\begin{align}\label{est-4}
 \E_{\mathcal C}\Tr|\Delta_{\mathcal C}|^\alpha
\leq& T_\alpha\left[2^\alpha \cdot 2^{-\gamma(\alpha-1)}+
2^{-\gamma^\prime(\alpha)}+ 2^{-\gamma(\alpha-1)}
\right] \nb\\
\leq&(2^\alpha+1)T_\alpha\left[2^{-\gamma(\alpha-1)}+
2^{-\gamma^\prime(\alpha)}\right].
\end{align}
Substituting of \eqref{eq:add-kms-average},
and \eqref{est-4} into
\eqref{eq:add-upper-start} proves 
\begin{align}
Q_\alpha(\rho_{\mathcal CE}\Vert\rho_{\mathcal C}\otimes\rho_E)-1
\leq& K_\alpha\left[
2^{-\gamma^\prime(\alpha)}
+(2^\alpha+1)T_\alpha\left(2^{-\gamma(\alpha-1)}+
2^{-\gamma^\prime(\alpha)}\right)\right] \nb\\
\leq&K_\alpha\left((2^\alpha+1)T_\alpha+1\right)\left[2^{-\gamma(\alpha-1)}+
2^{-\gamma^\prime(\alpha)}\right].
\end{align}
Setting $
C_\alpha:=K_\alpha\bigl((2^\alpha+1)T_\alpha+1\bigr),$
which depends only on \(\alpha\), yields
\eqref{eq:add-one-shot-upper}.

\noindent\textbf{\emph{Lower bound:}}
For the first lower bound, applying the lower bound of
Lemma~\ref{lem:add-bregman} in \eqref{eq:add-Q-bregman} with
\eqref{eq:add-kms-average}, we get
\begin{equation*}
 Q_\alpha(\rho_{\mathcal CE}\Vert\rho_{\mathcal C}\otimes\rho_E)-1\geq\frac{\alpha-1}{M}(\exp\{I^{(\alpha)}_2(X:E)_{\rho_{XE}}\}-1)=(\alpha-1)(2^{-\gamma^\prime(\alpha)}-2^{-R}).
\end{equation*}

To prove another lower bound, for \(k=0,\ldots,M\), define
\[
T_k
:=
\bar A+\frac1M\sum_{m=1}^kH_{X_m}
=
\frac1M\sum_{m=1}^kA_{X_m}
+
\left(1-\frac{k}{M}\right)\bar A.
\]
Then
\[
T_0=\bar A,\qquad
T_M=S_{\mathcal C},\qquad
T_k\geq0.
\]
Let
\(
\mathcal F_k=\sigma(X_1,\ldots,X_k)
\). Since
\[
\mathbb E[T_{k}\mid\mathcal F_{k-1}]=E\left[T_{k-1}+\frac1M H_{X_k}\Big\vert\mathcal F_{k-1}\right]=T_{k-1},
\]
Thus, $\{T_k\}_{k=0}^{M}$ is a martingale with respect to $\{\mathcal{F}_k\}_{k=0}^{M}$. Hence
$$ \mathbb E[T_{k-1}^{\alpha-1}(T_{k}-T_{k-1})\mid\mathcal F_{k-1}]=T_{k-1}^{\alpha-1} \mathbb E\left[\frac{1}{M} H_{X_k} \Big\vert \mathcal F_{k-1}\right]=0.$$
Since
$$ \cB_\alpha(T_k,T_{k-1})=\Tr T_k^{\alpha}-\Tr T_{k-1}^\alpha-\alpha \Tr T_{k-1}^{\alpha-1}(T_{k}-T_{k-1}),$$
we get that
\begin{align*}
&\mathbb E\left[
\operatorname{Tr}T_k^\alpha
-\operatorname{Tr}T_{k-1}^\alpha
\mid\mathcal F_{k-1}
\right]=
\mathbb E\left[
\mathfrak B_\alpha(T_k,T_{k-1})
\mid\mathcal F_{k-1}
\right].
\end{align*}
The lower bound estimate in Proposition~\ref{b-est} gives
\begin{align*}
&\mathbb E\left[
\operatorname{Tr}T_k^\alpha
-\operatorname{Tr}T_{k-1}^\alpha
\mid\mathcal F_{k-1}
\right]\geq
2^{1-\alpha}
\mathbb E\left[
\|T_k-T_{k-1}\|_\alpha^\alpha
\mid\mathcal F_{k-1}
\right]=
\frac{2^{1-\alpha}}{M^\alpha}
V_\alpha(X:E)_{\rho_{XE}}.
\end{align*}
Summing over \(k=1,\ldots,M\), we obtain
\begin{align*}
Q_\alpha(\rho_{\mathcal{C}E}\Vert \rho_\mathcal{C}\otimes\rho_E)-1=&
\mathbb E\operatorname{Tr}T_M^\alpha
-\operatorname{Tr}T_0^\alpha \nonumber\\
\geq&
\frac{2^{1-\alpha}}{M^{\alpha-1}}
V_\alpha(X:E)_{\rho_{XE}}\nonumber\\
\geq&2^{2-2\alpha-\gamma(\alpha-1)}-2^{-(\alpha-1)(R+1)},
\end{align*}
where the last inequality follows from Proposition~\ref{centered-moment-comparison}.
Combining the two lower bounds proves the desired result.
\end{proof}

From the above one-shot bound, we establish Theorem~\ref{thm:ref-fun}.
\begin{proof}[Proof of Theorem~\ref{thm:ref-fun}:]
For any $n\in\mathbb{N}$,
let $\mathcal{C}_{n}:=\{ X^{n}(1),...,X^{n}(2^{nR})\} $ be
an i.i.d. random codebook,
where each $X^{n}(m)$ is independently drawn from $P_{X}^{\otimes n}$.
Applying Proposition~\ref{prop:two-sided-one-shot-detailed} with the substitutions $\mathcal{C} \leftarrow \mathcal{C}_n$ and $\rho_{XE}\leftarrow \rho_{XE}^{\otimes n}$, and making use of the additivity of sandwiched R{\'e}nyi mutual information and the mixed-order order-two R{\'e}nyi mutual information, we get
\begin{align*}
Q_\alpha(\rho_{\mathcal{C}_{n}E^{n}}\|\rho_{\mathcal{C}_{n}}\otimes\rho_{E}^{\otimes n})-1 \leq C_\alpha(2^{-n\gamma^\prime(\alpha)}+2^{-n\gamma(\alpha-1)}).
\end{align*}
and
\begin{align*}
&Q_\alpha(\rho_{\mathcal{C}_{n}E^{n}}\|\rho_{\mathcal{C}_{n}}\otimes\rho_{E}^{\otimes n})-1 \nb\\
\ge&  \max\left\{
(\alpha-1)(2^{-n\gamma^\prime(\alpha)}-2^{-nR}),
2^{2-2\alpha-n\gamma(\alpha-1)}-2^{-(\alpha-1)(nR+1)}
\right\}.
\end{align*}
In both directions we use Proposition~\ref{centered-moment-comparison} and
\(\log(1+f(n))\sim f(n)/\ln 2\) as \(f(n)\searrow0\). If the rate $R$ satisfies $R>  I_{\alpha}(X:E)_{\rho_{XE}}$, we have
\begin{align*}
&\liminf_{n\to\infty}-\frac{1}{n}\log D_{\alpha}(\rho_{\mathcal{C}_{n}E^{n}}\|\rho_{\mathcal{C}_{n}}\otimes\rho_{E}^{\otimes n})\nb\\
\geq& \min\{\gamma'(\alpha),\gamma(\alpha-1)\}
\end{align*}
and
\begin{align}
\limsup_{n\to\infty}-\frac{1}{n}\log D_\alpha(\rho_{\mathcal{C}_{n}E^{n}}\|\rho_{\mathcal{C}_{n}}\otimes\rho_{E}^{\otimes n})
\leq  \min\{\gamma'(\alpha),\gamma(\alpha-1)\},\label{eq3}
\end{align}
where Equation~\eqref{eq3} follows from the fact that 
\begin{align*}
\gamma(\alpha-1)=&(\alpha-1)(R-I_\alpha(X:E)_{\rho_{XE}})<(\alpha-1)R, \nonumber\\
\gamma'(\alpha)=&R-I_{2}^{(\alpha)}(X:E)_{\rho_{XE}}<R. \nonumber
\end{align*}
By the definition of reliability function of quantum soft covering for i.i.d. random code, the desired result follows.
\end{proof}

\section{Reliability Function of Quantum Privacy Amplification}\label{sec:pa}
In this section, we derive the reliability function of quantum privacy amplification via the random binning function under the sandwiched R{\'e}nyi divergence for orders $\alpha \in [2,\infty)$.
\subsection{Problem Statement and Main Result}
Let $\rho_{XE}=\sum_{x\in\mathcal{X}} P_X(x)\, |x\rangle\langle x| \otimes \rho_E^x$
be a C-Q state.
The goal of quantum privacy amplification is to extract from $X$ a key that is approximately uniform and independent of the quantum system $E$. A random function
\[
F:\mathcal X\to\mathcal Z=\{1,2,\cdots,M\}
\]
is called a random binning function if the random variables
$\{F(x)\}_{x\in\mathcal X}$ are independent and uniformly distributed
over $\mathcal Z$. Equivalently, for any $x\in\mathcal{X}$ and $z\in\mathcal{Z}$,
\begin{equation*}
{\rm Pr}_{F}\{F(x)=z\} = \frac{1}{M},
\end{equation*}
and for any finite subset $\{x_1,x_2,\cdots,x_k\}\subset\mathcal{X}$, the random variables $F(x_1),\cdots,F(x_k)$ are mutually independent.
The C-Q state induced by the random binning function $F$ is
\begin{equation*}
\mathcal{R}_{F}(\rho_{XE})
=\sum_{x\in\mathcal{X}} P_X(x)
|F(x)\rangle\langle F(x)| \otimes \rho_{E}^{x}.
\end{equation*}
Note that $\mathcal{R}_{F}(\rho_{XE})$ is a random state due to the randomness of $F$.
The ideal target state is $
\frac{\mathbbm{1}_{\mathcal{Z}}}{|\mathcal{Z}|}\otimes \rho_E$ with $
\rho_E=\sum_{x\in\mathcal{X}}P_X(x)\rho_E^x.$

To quantify the performance of quantum privacy amplification, we measure the discrepancy via the sandwiched Rényi divergence of order $\alpha$. That is
\begin{equation*}
\frac{1}{\alpha-1}\log \mathbbm{E}_F Q_\alpha\left(\mathcal{R}_{F}(\rho_{XE})\Big\|\frac{\mathbbm{1}_{\mathcal{Z}}}{|\mathcal{Z}|}\otimes \rho_E\right).
\end{equation*}

We consider the $n$-shot extension of quantum privacy amplification. Let $
F_n:\mathcal{X}^n \to \mathcal{Z}_n=\{1,2,\cdots,2^{nR}\}$ be the random binning function, where $R\ge 0$ is the key rate. The state induced by $F_n$ is
\begin{equation*}
\mathcal{R}_{F_n}(\rho_{XE}^{\otimes n})
=\sum_{x^n\in\mathcal{X}^n} P_X^{\otimes n}(x^n)\,
|F_n(x^n)\rangle\langle F_n(x^n)| \otimes \rho_{E^n}^{x^n},
\end{equation*}
where $\rho_{E^n}^{x^n}=\bigotimes_{i=1}^n \rho_E^{x_i}$.
The performance in the $n$-shot setting is characterized by
\begin{equation*}
\frac{1}{\alpha-1}\log \mathbbm{E}_{F_n} Q_\alpha\left(\mathcal{R}_{F_n}(\rho_{XE}^{\otimes n})\Big\|\frac{\mathbbm{1}_{\mathcal{Z}_n}}{|\mathcal{Z}_n|}\otimes \rho_E^{\otimes n}\right).
\end{equation*}
The reliability function of quantum privacy amplification is defined as
\begin{align*}
E_{\rm pa}^{(\alpha)}(\rho_{XE},R)
:=&\liminf_{n\to\infty}
-\frac{1}{n}\log\left\{
\frac{1}{\alpha-1}\log \mathbb{E}_{F_n} Q_\alpha\left(\mathcal{R}_{F_n}(\rho_{XE}^{\otimes n})\Big\|\frac{\mathbbm{1}_{\mathcal{Z}_n}}{|\mathcal{Z}_n|}\otimes \rho_E^{\otimes n}\right)\right\}.
\end{align*}

\begin{theorem}\label{thm:10}
Let $\alpha\geq2$ and $\rho_{XE}=\sum_{x\in\mathcal{X}} P_X(x)\, |x\rangle\langle x| \otimes \rho_E^x$
be a C-Q state. When $0<R<H_\alpha(X|E)_{\rho_{XE}}$, we have
\begin{equation*}
E_{\rm pa}^{(\alpha)}(\rho_{XE},R)=\min\{\eta'(\alpha),\eta(\alpha-1)\},
\end{equation*}
where $\eta(t):=t(H_{1+t}\left(X|E\right)_{\rho_{XE}}-R)$ and $\eta'(t):=H_{2}^{(t)}(X|E)_{\rho_{XE}}-R$.
\end{theorem}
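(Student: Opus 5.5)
The plan is to mirror the soft-covering argument (Proposition~\ref{prop:two-sided-one-shot-detailed} followed by the proof of Theorem~\ref{thm:ref-fun}). The work is to rewrite the privacy-amplification fidelity as a sum of Bregman divergences of independent, centered sums. Fix $\alpha>2$; the case $\alpha=2$ follows from an exact collision computation, as in the remark after Theorem~\ref{thm:ref-fun}. As before, restrict everything to $\operatorname{supp}\rho_E$.

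\textbf{Reduction.} Define
\[
A_x:=P_X(x)\,\rho_E^{\frac{1-\alpha}{2\alpha}}\rho_E^x\rho_E^{\frac{1-\alpha}{2\alpha}},
\qquad\text{so that}\qquad \sum_x A_x=\bar A=\rho_E^{1/\alpha}.
\]
For each bin $z\in\mathcal Z$ set
\[
T_z:=\sum_x \mathbbm 1\{F(x)=z\}A_x .
\]
Since $\sigma=\frac{\mathbbm 1_{\mathcal Z}}{M}\otimes\rho_E$ is block diagonal and so is $\mathcal R_F(\rho_{XE})$, one gets
\[
Q_\alpha\Bigl(\mathcal R_F(\rho_{XE})\Big\|\tfrac{\mathbbm 1_{\mathcal Z}}{M}\otimes\rho_E\Bigr)=M^{\alpha-1}\sum_z\operatorname{Tr}T_z^\alpha .
\]
Each $T_z$ has mean $\bar A/M$, and $M^{\alpha-1}\cdot M\operatorname{Tr}(\bar A/M)^\alpha=1$. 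Hence
\[
\mathbb E_FQ_\alpha-1=M^{\alpha-1}\sum_z\mathbb E_F\,\cB_\alpha(T_z,\bar A/M),
\]
which is the analogue of \eqref{eq:add-Q-bregman}. For fixed $z$, the deviation $\Delta_z=T_z-\bar A/M=\sum_x Y_x$ with $Y_x:=(\mathbbm 1\{F(x)=z\}-\tfrac1M)A_x$ is a sum of independent, centered, self-adjoint terms indexed by $x\in\mathcal X$. This is exactly the setting of Lemma~\ref{lem:matrix-rosenthal}.

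\textbf{Upper bound.} Apply Lemma~\ref{lem:add-bregman} with $B=\bar A/M$. The quadratic term is computed exactly using independence:
\[
\mathbb E\operatorname{Tr}\Delta_z(\bar A/M)^{\alpha-2}\Delta_z=\tfrac1M\bigl(1-\tfrac1M\bigr)M^{2-\alpha}\sum_x\operatorname{Tr}A_x\rho_E^{\frac{\alpha-2}{\alpha}}A_x .
\]
Here $\sum_x\operatorname{Tr}A_x\rho_E^{\frac{\alpha-2}{\alpha}}A_x=\exp\{-H_2^{(\alpha)}(X|E)_{\rho_{XE}}\}$ by cyclicity. Multiplying by $M^{\alpha-1}$ and summing over the $M$ bins gives $(1-\tfrac1M)M\,2^{-H_2^{(\alpha)}}\le 2^{-\eta'(\alpha)}$.

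For the $\alpha$-th moment, Lemma~\ref{lem:matrix-rosenthal} gives two contributions.
\begin{itemize}
\item The first is $\sum_x\mathbb E\operatorname{Tr}|Y_x|^\alpha\le\frac{1}{M}\sum_x\operatorname{Tr}A_x^\alpha=\frac1M 2^{-(\alpha-1)H_\alpha(X|E)}$. After the factor $M^{\alpha-1}\cdot M$ this becomes $2^{-\eta(\alpha-1)}$.
\item The second is $\operatorname{Tr}\bigl(\sum_x\mathbb EY_x^2\bigr)^{\alpha/2}\le M^{-\alpha/2}\operatorname{Tr}\bigl(\sum_xA_x^2\bigr)^{\alpha/2}$.
\end{itemize}
I would bound the second contribution with Lemma~\ref{lem:add-positive-bridge}, applied to $\mathcal A$ uniform on $\{A_x\}$ (both sides are homogeneous of the same degree in the normalization, so sums may replace expectations). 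This yields the geometric mean $\sqrt{2^{-\eta'(\alpha)}2^{-\eta(\alpha-1)}}\le 2^{-\eta'(\alpha)}+2^{-\eta(\alpha-1)}$ after the prefactors. Altogether,
\[
\mathbb E_FQ_\alpha-1\le C_\alpha\bigl(2^{-\eta'(\alpha)}+2^{-\eta(\alpha-1)}\bigr).
\]

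\textbf{Lower bounds.} The first lower bound, $(\alpha-1)(1-\tfrac1M)2^{-\eta'(\alpha)}$, comes directly from \eqref{eq:add-bregman-lower} and the exact quadratic computation. For the second, enumerate $\mathcal X=\{x_1,x_2,\dots\}$ and run the martingale $T^{(k)}_z=\bar A/M+\sum_{j\le k}Y_{x_j}$, exactly as in Proposition~\ref{prop:two-sided-one-shot-detailed}. Proposition~\ref{b-est} gives increments of at least
\[
2^{1-\alpha}\,\mathbb E\|Y_x\|_\alpha^\alpha\ge 2^{1-\alpha}\tfrac1M\bigl(1-\tfrac1M\bigr)^\alpha\operatorname{Tr}A_x^\alpha .
\]
Summing over $x$ and $z$ and multiplying by $M^{\alpha-1}$ yields $c_\alpha 2^{-\eta(\alpha-1)}$ once $M\ge2$, which holds since $R>0$. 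No analogue of Proposition~\ref{centered-moment-comparison} is needed, because the centering here is by a scalar factor.

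\textbf{Asymptotics and the expected obstacle.} Replace $\rho_{XE}$ by $\rho_{XE}^{\otimes n}$ and $M$ by $2^{nR}$, and use the additivity of $H_\alpha$ and of $H_2^{(\alpha)}$ from Corollary~\ref{cor:additivity-ordering-information}. Then use $\log(1+f)\sim f/\ln2$ as $f\to0$. The condition $R<H_\alpha\le H_2^{(\alpha)}$ ensures both exponents are positive, which closes the argument as in the proof of Theorem~\ref{thm:ref-fun}. The main obstacle I anticipate is the Rosenthal square-function term: Lemma~\ref{lem:add-positive-bridge} must be applied to a weighted sum rather than a probability expectation, and the powers of $M$ have to land exactly on $\sqrt{2^{-\eta'(\alpha)}2^{-\eta(\alpha-1)}}$. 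The other delicate point is that the number of summands $|\mathcal X|^n$ grows with $n$; this is harmless only because the constant in Lemma~\ref{lem:matrix-rosenthal} depends on $\alpha$ alone.
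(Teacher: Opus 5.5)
Your proposal is correct and follows essentially the same route as the paper: the block reduction to $\mathbb E_FQ_\alpha-1=M^{\alpha}\bigl[\mathbb E\operatorname{Tr}S^\alpha-\operatorname{Tr}(\mathbb ES)^\alpha\bigr]$ (you sum over all bins rather than fixing one by symmetry), the upper bound via Lemma~\ref{lem:add-bregman}, the matrix Rosenthal inequality and Lemma~\ref{lem:add-positive-bridge} applied to the unnormalized sum, the exact quadratic term giving $2^{-\eta'(\alpha)}$, and the asymptotic step, all as in Proposition~\ref{prop:Bernoulli} and Theorem~\ref{thm:one-shot}. The only deviation is that you obtain the $\sum_x\operatorname{Tr}A_x^\alpha$ lower bound through the martingale argument with Proposition~\ref{b-est} rather than through the Rademacher/Clarkson--McCarthy symmetrization; the remark after Proposition~\ref{prop:Bernoulli} notes that these two arguments are interchangeable, and yours gives the same constant $2^{1-2\alpha}$ once $M\ge2$.
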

\begin{remark}
At \(\alpha=2\), a direct collision calculation gives the exact
identity
\begin{equation*}
\mathbb E_FQ_2\left(
\mathcal R_F(\rho_{XE})
\middle\|
\frac{\mathbbm{1}_{\mathcal{Z}}}{|\mathcal{Z}|}\otimes\rho_E
\right)
=
1+(M-1)2^{-H_2(X|E)_\rho}.
\end{equation*}
Hence
\begin{equation*}
E_\mathrm{pa}^{(2)}(\rho_{XE},R)
=H_2(X|E)_\rho-R=\eta(1)=\eta'(2).
\end{equation*}
\end{remark}
\subsection{Reliability Function for $\alpha\in(2,\infty)$}
We next prove the reliability function for $\alpha\in(2,\infty)$. To this end, we first establish the following proposition.

\begin{proposition}\label{prop:Bernoulli}
Let \(\alpha\in(2,\infty)\), \(A_1,\cdots,A_N\ge0\), and
\(\overline A=\sum_iA_i\).  Let
\(\xi_1,\cdots,\xi_N\) be independent Bernoulli random variables with
\(\Pr\{\xi_i=1\}=q\in(0,1)\).  Define
\begin{equation*}
S:=\sum_{i=1}^N\xi_iA_i,\qquad
B:=\mathbb E S=q\overline A, \qquad\Xi_\alpha
:=
\sum_{i=1}^N\operatorname{Tr} A_i^2\overline A^{\alpha-2},
\qquad
\Theta_\alpha
:=
\sum_{i=1}^N\operatorname{Tr} A_i^\alpha.
\end{equation*}
There exists a constant \(C_\alpha<\infty\), depending only on
\(\alpha\), such that
\begin{align}
&\mathbb E\operatorname{Tr} S^\alpha-\operatorname{Tr} B^\alpha \nonumber\\
\le&
C_\alpha\left[
q^{\alpha-1}(1-q)\Xi_\alpha
+
[q(1-q)]^{\alpha/2}
\sqrt{\Xi_\alpha\Theta_\alpha}
+
(q(1-q)^\alpha+(1-q)q^\alpha)\Theta_\alpha
\right]
\label{eq:Bernoulli-upper}
\end{align}
and
\begin{align}
\mathbb E\operatorname{Tr} S^\alpha-\operatorname{Tr} B^\alpha
\ge
\max\left\{
(\alpha-1)q^{\alpha-1}(1-q)\Xi_\alpha,\,
2^{-\alpha}(q(1-q)^\alpha+(1-q)q^\alpha)\Theta_\alpha
\right\}.
\label{eq:Bernoulli-lower}
\end{align}
\end{proposition}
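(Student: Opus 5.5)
Write $S-B=\sum_i(\xi_i-q)A_i=:\Delta$, a sum of independent centered self-adjoint terms $Y_i:=(\xi_i-q)A_i$. Since $\mathbb E\Delta=0$, we have, exactly as in \eqref{eq:add-Q-bregman},
\[
\mathbb E\operatorname{Tr}S^\alpha-\operatorname{Tr}B^\alpha=\mathbb E\,\cB_\alpha(S,B).
\]
The plan is to mirror the proof of Proposition~\ref{prop:two-sided-one-shot-detailed}, with the codebook average replaced by the Bernoulli sum.

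\textbf{Upper bound.} By Lemma~\ref{lem:add-bregman},
\[
\mathbb E\,\cB_\alpha(S,B)\le C_\alpha\bigl(\mathbb E\operatorname{Tr}\Delta B^{\alpha-2}\Delta+\mathbb E\operatorname{Tr}|\Delta|^\alpha\bigr).
\]
\begin{enumerate}
\item By independence and centering, the cross terms vanish. Using $\mathbb E(\xi_i-q)^2=q(1-q)$ and $B^{\alpha-2}=q^{\alpha-2}\overline A^{\alpha-2}$, the first term equals exactly $q^{\alpha-1}(1-q)\Xi_\alpha$.
\item For the second term, apply Lemma~\ref{lem:matrix-rosenthal}. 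The first Rosenthal piece is $\sum_i\mathbb E|\xi_i-q|^\alpha\operatorname{Tr}A_i^\alpha=(q(1-q)^\alpha+(1-q)q^\alpha)\Theta_\alpha$.
\item The second Rosenthal piece is $[q(1-q)]^{\alpha/2}\operatorname{Tr}\bigl(\sum_iA_i^2\bigr)^{\alpha/2}$.
\item To bound $\operatorname{Tr}(\sum_iA_i^2)^{\alpha/2}$, apply Lemma~\ref{lem:add-positive-bridge} to the random matrix $\mathcal A$ uniform on $\{NA_i\}$. Then $\mathbb E\mathcal A=\overline A$, $\mathbb E\mathcal A^2=N\sum_iA_i^2$, and $\mathbb E\operatorname{Tr}\mathcal A^\alpha=N^{\alpha-1}\Theta_\alpha$.
\item The lemma gives $N^{\alpha/2}\operatorname{Tr}(\sum A_i^2)^{\alpha/2}\le[N\Xi_\alpha\cdot N^{\alpha-1}\Theta_\alpha]^{1/2}$. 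The powers of $N$ cancel, leaving $\operatorname{Tr}(\sum A_i^2)^{\alpha/2}\le\sqrt{\Xi_\alpha\Theta_\alpha}$.
\item Collecting the three contributions gives \eqref{eq:Bernoulli-upper}.
\end{enumerate}

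\textbf{Lower bound, first term.} Apply \eqref{eq:add-bregman-lower} pointwise to $\cB_\alpha(S,B)$, with $H=\Delta$. Taking expectations and using step 1 gives $(\alpha-1)q^{\alpha-1}(1-q)\Xi_\alpha$.

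\textbf{Lower bound, second term.} Use the martingale telescoping argument of the soft-covering lower bound.
\begin{enumerate}
\item Set $T_k:=B+\sum_{i\le k}(\xi_i-q)A_i$. Then $T_k=\sum_{i\le k}\xi_iA_i+q\sum_{i>k}A_i\ge0$, so Proposition~\ref{b-est} applies.
\item $(T_k)$ is a martingale, so $\mathbb E[\operatorname{Tr}T_k^\alpha-\operatorname{Tr}T_{k-1}^\alpha\mid\mathcal F_{k-1}]=\mathbb E[\cB_\alpha(T_k,T_{k-1})\mid\mathcal F_{k-1}]$.
\item Proposition~\ref{b-est} bounds this below by $2^{1-\alpha}\mathbb E|\xi_k-q|^\alpha\operatorname{Tr}A_k^\alpha$.
\item Summing over $k$ yields $2^{1-\alpha}(q(1-q)^\alpha+(1-q)q^\alpha)\Theta_\alpha$, which is at least the stated $2^{-\alpha}$ version.
\end{enumerate}

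\textbf{Main obstacle.} The delicate step is the Rosenthal square-function term. The bound must produce exactly the geometric mean $\sqrt{\Xi_\alpha\Theta_\alpha}$ with no stray dimension or $N$ factors, and this depends on the rescaling $\mathcal A=NA_I$ in Lemma~\ref{lem:add-positive-bridge}. The remaining steps are direct transcriptions of the earlier arguments, with Theorem~\ref{lem:add-change-base} entering only through Lemma~\ref{lem:add-bregman}.
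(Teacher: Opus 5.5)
Your proof is correct. The upper bound and the first lower bound coincide with the paper's proof. The paper invokes Lemma~\ref{lem:add-positive-bridge} for $\operatorname{Tr}(\sum_iA_i^2)^{\alpha/2}$ without comment; your rescaling $\mathcal A=NA_I$ with $I$ uniform is a clean way to put that step into the lemma's probabilistic form.

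For the $\Theta_\alpha$ lower bound you take a different route. The paper symmetrizes in three steps:
\begin{enumerate}
\item It applies the Clarkson--McCarthy inequality (Lemma~\ref{lem:CM}) to $S$ and an independent copy $S'$, giving $\mathbb E\operatorname{Tr}S^\alpha-\operatorname{Tr}B^\alpha\ge2^{-\alpha}\mathbb E\|\Delta-\Delta'\|_\alpha^\alpha$.
\item It inserts Rademacher signs into $D_i=Y_i-Y_i'$ and iterates the lower Clarkson--McCarthy inequality to get $\mathbb E\|\sum_iD_i\|_\alpha^\alpha\ge\sum_i\mathbb E\|D_i\|_\alpha^\alpha$.
\item It removes the copy by Jensen's inequality.
\end{enumerate}

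You instead telescope along the martingale $T_k=\sum_{i\le k}\xi_iA_i+q\sum_{i>k}A_i$ and apply Proposition~\ref{b-est} to each increment. This is shorter, reuses exactly the machinery of Proposition~\ref{prop:two-sided-one-shot-detailed}, and gives the better constant $2^{1-\alpha}$. Its one structural requirement is that every intermediate $T_k$ be positive semidefinite, so that Proposition~\ref{b-est} (via Klein's inequality for $t\mapsto t^\alpha$) applies; you verified this.

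The symmetrization proof needs positivity only of $S$ and $B$. It is essentially a general Schatten-norm lower Rosenthal-type estimate for sums of independent centered terms. The paper keeps it to showcase that technique, and its remark after the proof notes that your martingale route works as well.
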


\begin{proof}
Define
\begin{equation*}
Y_i:=(\xi_i-q)A_i,\qquad
\Delta:=S-B=\sum_{i=1}^NY_i.
\end{equation*}
The operators \(\{Y_i\}_{i=1}^N\) are independent, self-adjoint, and centered.
Since \(\mathbb E\Delta=0\), we have
\begin{align}
\mathbb E\operatorname{Tr} S^\alpha-\operatorname{Tr} B^\alpha
&=
\mathbb E\mathfrak B_\alpha(S,B).
\label{eq:gap-Bregman}
\end{align}
Here $\cB_\alpha$ follows the definition in \eqref{eq:add-bregman-def}.

\noindent\textbf{\emph{Upper bound:}}
By Lemma~\ref{lem:add-bregman}, there exists a constant $C^\prime_\alpha$ depending only on $\alpha$ such that
\begin{align}
\mathbb E\operatorname{Tr} S^\alpha-\operatorname{Tr} B^\alpha
\le
C_\alpha'\left[
\mathbb E\operatorname{Tr}\Delta B^{\alpha-2}\Delta
+
\mathbb E\operatorname{Tr}|\Delta|^\alpha
\right].
\label{eq:gap-upper-Bregman}
\end{align}
Independence and centering make all cross terms vanish.  Therefore
\begin{align}
\mathbb E\operatorname{Tr}\Delta B^{\alpha-2}\Delta
&=
\sum_{i=1}^N
\mathbb E\operatorname{Tr} Y_iB^{\alpha-2}Y_i
\notag\\
&=
q(1-q)q^{\alpha-2}
\sum_{i=1}^N
\operatorname{Tr} A_i^2\overline A^{\alpha-2}
\notag\\
&=
q^{\alpha-1}(1-q)\Xi_\alpha.
\label{eq:quadratic-exact}
\end{align}
Applying the non-commutative Rosenthal inequality (cf. Lemma~\ref{lem:matrix-rosenthal}), there exists a constant $C_\alpha''$ depending only on $\alpha$ such that
\begin{align}
\mathbb E\operatorname{Tr}|\Delta|^\alpha
\le
C_\alpha''\left[
\operatorname{Tr}\left(\sum_i\mathbb E Y_i^2\right)^{\alpha/2}
+
\sum_i\mathbb E\operatorname{Tr}|Y_i|^\alpha
\right].
\label{eq:Rosenthal-Delta}
\end{align}
We first estimate the first term in brackets of~\eqref{eq:Rosenthal-Delta}.
Lemma~\ref{lem:add-positive-bridge} yields
\begin{align*}
\operatorname{Tr} \left(\sum_iA_i^2\right)^{\alpha/2}
\leq& \sqrt{\operatorname{Tr} \sum_iA_i^2\overline A^{\alpha-2} \left(\operatorname{Tr}\sum_iA_i^\alpha \right)} \nonumber\\
=&\sqrt{\Xi_\alpha\Theta_\alpha}.
\end{align*}
Consequently,
\begin{equation*}
\operatorname{Tr}\left(\sum_i\mathbb E Y_i^2\right)^{\alpha/2} =  \operatorname{Tr}\left(q(1-q)\sum_iA_i^2\right)^{\alpha/2}
\le
[q(1-q)]^{\alpha/2}
\sqrt{\Xi_\alpha\Theta_\alpha}.
\end{equation*}
The last term in brackets of~\eqref{eq:Rosenthal-Delta} satisfies
\begin{align}
\sum_i\mathbb E\operatorname{Tr}|Y_i|^\alpha
&=
\left[
q(1-q)^\alpha+(1-q)q^\alpha
\right]
\sum_i\operatorname{Tr} A_i^\alpha
\notag\\
&=
( q(1-q)^\alpha+(1-q)q^\alpha)\Theta_\alpha.
\label{eq:jump-exact}
\end{align}
Therefore, we obtain
\begin{equation}
\mathbb E\operatorname{Tr}|\Delta|^\alpha
\le
C_\alpha''\left[
[q(1-q)]^{\alpha/2}
\sqrt{\Xi_\alpha\Theta_\alpha}
+
( q(1-q)^\alpha+(1-q)q^\alpha)\Theta_\alpha
\right]. \label{eq:square-function-final}
\end{equation}
Substitution of
\eqref{eq:quadratic-exact} and
\eqref{eq:square-function-final} into
\eqref{eq:gap-upper-Bregman} establishes that there exists a constant $C_\alpha$ depending only on $\alpha$ such that
\eqref{eq:Bernoulli-upper} is valid.

\noindent\textbf{\emph{Lower bound:}}
We prove the two lower bounds in \eqref{eq:Bernoulli-lower}
separately.

First, the lower estimate in Lemma~\ref{lem:add-bregman} gives
\begin{align*}
\mathfrak B_\alpha(B+\Delta,B)
\ge
(\alpha-1)\operatorname{Tr}
\Delta B^{\alpha-2}\Delta.
\end{align*}
Taking expectations and using \eqref{eq:gap-Bregman} and
\eqref{eq:quadratic-exact}, we obtain
\begin{align}
\mathbb E\operatorname{Tr}S^\alpha-\operatorname{Tr}B^\alpha
&=
\mathbb E\mathfrak B_\alpha(B+\Delta,B)
\notag\\
&\ge
(\alpha-1)
\mathbb E\operatorname{Tr}\Delta B^{\alpha-2}\Delta
\notag\\
&=
(\alpha-1)q^{\alpha-1}(1-q)\Xi_\alpha.
\label{eq:lower-quadratic}
\end{align}
This proves the first lower bound.

We next prove another lower bound involving \(\Theta_\alpha\).
Let
\[
S'=B+\Delta'
\]
be an independent copy of \(S=B+\Delta\). In particular,
\(S\) and \(S'\) are positive, \(\Delta\) and \(\Delta'\) are
independent and identically distributed, and
\[
\mathbb E\Delta=\mathbb E\Delta'=0.
\]
For \(p\ge2\), the upper Clarkson--McCarthy inequality (cf. Lemma~\ref{lem:CM}) states that
\begin{equation}
\left\|\frac{U+V}{2}\right\|_p^p
+
\left\|\frac{U-V}{2}\right\|_p^p
\le
\frac{1}{2}
\left(
\|U\|_p^p+\|V\|_p^p
\right).
\label{eq:Clarkson-upper}
\end{equation}
Applying \eqref{eq:Clarkson-upper} with \(U=S\), \(V=S'\), and
\(p=\alpha\), and observing that
\[
\frac{S+S'}{2}
=
B+\frac{\Delta+\Delta'}{2},
\qquad
\frac{S-S'}{2}
=
\frac{\Delta-\Delta'}{2},
\]
we obtain
\begin{align*}
\frac{1}{2}
\left(
\|S\|_\alpha^\alpha+\|S'\|_\alpha^\alpha
\right)
\ge
\left\|
B+\frac{\Delta+\Delta'}{2}
\right\|_\alpha^\alpha
+
2^{-\alpha}
\|\Delta-\Delta'\|_\alpha^\alpha.
\end{align*}
Since \(S\) and \(S'\) are identically distributed, taking
expectations of both sides gives
\begin{align}
\mathbb E\|S\|_\alpha^\alpha
\ge
\mathbb E
\left\|
B+\frac{\Delta+\Delta'}{2}
\right\|_\alpha^\alpha
+
2^{-\alpha}
\mathbb E\|\Delta-\Delta'\|_\alpha^\alpha.
\label{eq:Clarkson-expectation}
\end{align}
The map \(X\mapsto\|X\|_\alpha^\alpha\) is convex. Hence, by
Jensen's inequality,
\begin{align}
\mathbb E
\left\|
B+\frac{\Delta+\Delta'}{2}
\right\|_\alpha^\alpha
&\ge
\left\|
\mathbb E\left[
B+\frac{\Delta+\Delta'}{2}
\right]
\right\|_\alpha^\alpha
\notag\\
&=
\|B\|_\alpha^\alpha
=
\operatorname{Tr}B^\alpha.
\label{eq:Jensen-midpoint}
\end{align}
Moreover, since \(S\ge0\),
\[
\|S\|_\alpha^\alpha=\operatorname{Tr}S^\alpha.
\]
Therefore, \eqref{eq:Clarkson-expectation} and
\eqref{eq:Jensen-midpoint} imply the symmetrization estimate
\begin{align}
\mathbb E\operatorname{Tr}S^\alpha-\operatorname{Tr}B^\alpha
\ge
2^{-\alpha}
\mathbb E\|\Delta-\Delta'\|_\alpha^\alpha.
\label{eq:Clarkson-gap}
\end{align}
It remains to find a lower bound for the moment on the right-hand
side of \eqref{eq:Clarkson-gap}. Let \(Y_1',\cdots,Y_N'\) be
independent copies of \(Y_1,\cdots,Y_N\), so that
\[
\Delta'=\sum_{i=1}^N Y_i',
\]
and define
\begin{equation*}
D_i:=Y_i-Y_i'.
\end{equation*}
Then
\[
\Delta-\Delta'=\sum_{i=1}^N D_i.
\]
The random operators \(D_1,\cdots,D_N\) are independent.
Furthermore, each \(D_i\) is symmetric in distribution because
\(Y_i\) and \(Y_i'\) are identically distributed:
\[
D_i=Y_i-Y_i'
\stackrel{\mathrm d}{=}
Y_i'-Y_i=-D_i.
\]
Let \(\varepsilon_1,\cdots,\varepsilon_N\) be independent
Rademacher random variables, independent of all the \(D_i\)'s.
The symmetry and independence of the \(D_i\)'s imply
\[
(D_1,\cdots,D_N)
\stackrel{\mathrm d}{=}
(\varepsilon_1D_1,\cdots,\varepsilon_ND_N).
\]
Consequently,
\begin{align}
\mathbb E\|\Delta-\Delta'\|_\alpha^\alpha=
\mathbb E_D
\left\|
\sum_{i=1}^N D_i
\right\|_\alpha^\alpha=
\mathbb E_D\mathbb E_\varepsilon
\left\|
\sum_{i=1}^N\varepsilon_iD_i
\right\|_\alpha^\alpha.
\label{eq:sign-insertion}
\end{align}
For \(p\ge2\), the lower Clarkson--McCarthy inequality (cf. Lemma~\ref{lem:CM}) states that
\begin{equation}
\|U+V\|_p^p+\|U-V\|_p^p
\ge
2\left(\|U\|_p^p+\|V\|_p^p\right).
\label{eq:Clarkson-lower}
\end{equation}
For completeness, we now explain how this inequality gives
\begin{equation}
\mathbb E_\varepsilon
\left\|
\sum_{i=1}^N\varepsilon_iD_i
\right\|_\alpha^\alpha
\ge
\sum_{i=1}^N\|D_i\|_\alpha^\alpha.
\label{eq:Clarkson-induction}
\end{equation}
Fix \(D_1,\cdots,D_N\), and set
\[
R_{k-1}:=\sum_{i=1}^{k-1}\varepsilon_iD_i.
\]
Conditioning on \(\varepsilon_1,\cdots,\varepsilon_{k-1}\) and
averaging only over \(\varepsilon_k\), we have
\begin{align}
\mathbb E_{\varepsilon_k}
\|R_{k-1}+\varepsilon_kD_k\|_\alpha^\alpha
&=
\frac{1}{2}
\left(
\|R_{k-1}+D_k\|_\alpha^\alpha
+
\|R_{k-1}-D_k\|_\alpha^\alpha
\right)
\notag\\
&\ge
\|R_{k-1}\|_\alpha^\alpha+\|D_k\|_\alpha^\alpha,
\label{eq:Clarkson-one-step}
\end{align}
where the last inequality follows from
\eqref{eq:Clarkson-lower}. Taking expectations over
\(\varepsilon_1,\cdots,\varepsilon_{k-1}\) and iterating
\eqref{eq:Clarkson-one-step} for \(k=2,\cdots,N\) proves
\eqref{eq:Clarkson-induction}.
Combining \eqref{eq:sign-insertion} and
\eqref{eq:Clarkson-induction}, we obtain
\begin{align*}
\mathbb E\|\Delta-\Delta'\|_\alpha^\alpha
\ge
\sum_{i=1}^N\mathbb E\|D_i\|_\alpha^\alpha.
\end{align*}
For each fixed value of \(Y_i\), the convexity of
\(X\mapsto\|X\|_\alpha^\alpha\) and conditional Jensen's
inequality give
\begin{align*}
\mathbb E_{Y_i'}
\|Y_i-Y_i'\|_\alpha^\alpha\ge
\left\|
Y_i-\mathbb E Y_i'
\right\|_\alpha^\alpha
=
\|Y_i\|_\alpha^\alpha,
\end{align*}
because \(\mathbb EY_i'=0\). Taking the expectation over \(Y_i\)
and summing over \(i\) yields
\begin{align*}
\sum_{i=1}^N\mathbb E\|D_i\|_\alpha^\alpha\ge
\sum_{i=1}^N\mathbb E\|Y_i\|_\alpha^\alpha
=
\left[
q(1-q)^\alpha+(1-q)q^\alpha
\right]\Theta_\alpha,
\end{align*}
where the last equality follows from \eqref{eq:jump-exact}.
Thus,
\begin{align}
\mathbb E\|\Delta-\Delta'\|_\alpha^\alpha
\ge
\left[
q(1-q)^\alpha+(1-q)q^\alpha
\right]\Theta_\alpha.
\label{eq:symmetrized-final}
\end{align}
Substituting \eqref{eq:symmetrized-final} into
\eqref{eq:Clarkson-gap}, we conclude that
\begin{align}
\mathbb E\operatorname{Tr}S^\alpha-\operatorname{Tr}B^\alpha
\ge
2^{-\alpha}
\left[
q(1-q)^\alpha+(1-q)q^\alpha
\right]\Theta_\alpha.
\label{eq:lower-jump}
\end{align}
Combining \eqref{eq:lower-quadratic} and \eqref{eq:lower-jump}
proves
\begin{align*}
\mathbb E\operatorname{Tr}S^\alpha-\operatorname{Tr}B^\alpha
\ge
\max\left\{
(\alpha-1)q^{\alpha-1}(1-q)\Xi_\alpha,\,
2^{-\alpha}
\left[
q(1-q)^\alpha+(1-q)q^\alpha
\right]\Theta_\alpha
\right\},
\end{align*}
as claimed.
\end{proof}
\begin{remark}
In the preceding proof, we established the lower bound~\eqref{eq:lower-jump} using Rademacher randomization. Alternatively, the same bound can be derived by adapting the martingale argument used in Proposition~\ref{prop:two-sided-one-shot-detailed}. Conversely, the Rademacher argument employed here can also be adapted to prove that proposition. We retain both proofs to highlight these two complementary techniques.

\end{remark}

Throughout the remainder of this subsection, for the classical-quantum
state
\[
\rho_{XE}
:=
\sum_{x\in\mathcal{X}}
P_X(x)\,|x\rangle\langle x|\otimes\rho_E^x,
\]
we use the following notation
\begin{equation}
A_x
:=
P_X(x)
\rho_E^{\frac{1-\alpha}{2\alpha}}
\rho_E^x
\rho_E^{\frac{1-\alpha}{2\alpha}},
\qquad
\overline A
:=
\sum_x A_x
=
\rho_E^{1/\alpha}.
\label{eq:Ax-PA}
\end{equation}
With this notation, we have
\begin{align}
\sum_x \operatorname{Tr} A_x^\alpha
&=
2^{-(\alpha-1)H_\alpha(X|E)_\rho},
\label{eq:Theta-entropy}\\
\sum_x \operatorname{Tr} A_x^2\overline A^{\alpha-2}
&=
2^{-H_2^{(\alpha)}(X|E)_\rho}.
\label{eq:Xi-entropy}
\end{align}

\begin{theorem}\label{thm:one-shot}
Let \(\alpha\in(2,\infty)\), $M=2^R\geq2$ and \(F:\mathcal{X}\to\mathcal{Z}=\{1,2,\cdots,M\}\) be the random binning.
There exists a constant \(C_\alpha<\infty\), depending only on
\(\alpha\), such that
\begin{align}
\mathbb E_FQ_\alpha\left(
\mathcal R_F(\rho_{XE})
\middle\|
\frac{\mathbbm{1}_\mathcal{Z}}{|\mathcal{Z}|}\otimes\rho_E
\right)-1
\le
C_\alpha\Big[
2^{-\eta'(\alpha)}
+2^{-\frac{1}{2}\eta'(\alpha)
-\frac{1}{2}\eta(\alpha-1)}
+2^{1-\eta(\alpha-1)}
\Big]
\label{eq:PA-upper}
\end{align}
and
\begin{align}
\mathbb E_FQ_\alpha\left(
\mathcal R_F(\rho_{XE})
\middle\|
\frac{\mathbbm{1}_\mathcal{Z}}{|\mathcal{Z}|}\otimes\rho_E
\right)-1
\ge&\
\max\Big\{
2^{-1}
2^{-\eta'(\alpha)},
2^{1-2\alpha}
2^{-\eta(\alpha-1)}
\Big\},
\label{eq:PA-lower}
\end{align}
where $\eta(t):=t(H_{1+t}\left(X|E\right)_{\rho_{XE}}-R)$ and $\eta'(t):=H_{2}^{(t)}(X|E)_{\rho_{XE}}-R$.
\end{theorem}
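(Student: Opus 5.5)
The plan is to reduce the privacy amplification quantity to Proposition~\ref{prop:Bernoulli}, applied bin by bin. Write the induced state as $\mathcal R_F(\rho_{XE})=\sum_{z}|z\rangle\langle z|\otimes\sum_{x:F(x)=z}P_X(x)\rho_E^x$. Since the reference state is $\frac{\mathbbm 1_{\mathcal Z}}{M}\otimes\rho_E$, its order-$\alpha$ fidelity factorizes over the blocks $z$. A direct computation with the notation \eqref{eq:Ax-PA} gives
\[
Q_\alpha\Bigl(\mathcal R_F(\rho_{XE})\Big\|\tfrac{\mathbbm 1_{\mathcal Z}}{M}\otimes\rho_E\Bigr)=M^{\alpha-1}\sum_{z=1}^M\operatorname{Tr}S_z^\alpha,\qquad S_z:=\sum_x \mathbf 1\{F(x)=z\}A_x .
\]
For fixed $z$, the indicators $\xi_x=\mathbf 1\{F(x)=z\}$ are independent Bernoulli variables with $q=1/M$, and $\mathbb E S_z=B=\overline A/M$. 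Because $\operatorname{Tr}\overline A^\alpha=\operatorname{Tr}\rho_E=1$, we have $M^{\alpha-1}\cdot M\operatorname{Tr}B^\alpha=1$. Taking expectations,
\[
\mathbb E_FQ_\alpha-1=M^{\alpha}\bigl(\mathbb E\operatorname{Tr}S_z^\alpha-\operatorname{Tr}B^\alpha\bigr).
\]
This holds for any fixed $z$, since all bins are identically distributed.

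Next I would apply Proposition~\ref{prop:Bernoulli} with $q=1/M$, using \eqref{eq:Theta-entropy} and \eqref{eq:Xi-entropy}, i.e.\ $\Xi_\alpha=2^{-H_2^{(\alpha)}}$ and $\Theta_\alpha=2^{-(\alpha-1)H_\alpha}$. I then multiply each term by $M^\alpha=2^{\alpha R}$:
\begin{itemize}
\item the quadratic term gives $M^\alpha q^{\alpha-1}(1-q)\Xi_\alpha=(1-\tfrac1M)M\,\Xi_\alpha=(1-\tfrac1M)2^{-\eta'(\alpha)}$;
\item the cross term gives $M^\alpha[q(1-q)]^{\alpha/2}\sqrt{\Xi_\alpha\Theta_\alpha}\le M^{\alpha/2}\sqrt{\Xi_\alpha\Theta_\alpha}=2^{-\frac12\eta'(\alpha)-\frac12\eta(\alpha-1)}$, using $\frac{\alpha}{2}R=\frac12 R+\frac12(\alpha-1)R$;
\item the jump term gives $M^\alpha\bigl(q(1-q)^\alpha+(1-q)q^\alpha\bigr)\Theta_\alpha=\bigl((1-q)^\alpha+(1-q)q^{\alpha-1}\bigr)M^{\alpha-1}\Theta_\alpha$. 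The prefactor lies in $[2^{-\alpha},2]$ for $q\le\frac12$, and $M^{\alpha-1}\Theta_\alpha=2^{-\eta(\alpha-1)}$.
\end{itemize}
Together these give \eqref{eq:PA-upper}, with the factor $2$ absorbed into $2^{1-\eta(\alpha-1)}$.

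For the lower bound \eqref{eq:PA-lower}, I would use \eqref{eq:Bernoulli-lower}. Since $M\ge2$, the quadratic term satisfies $(\alpha-1)(1-\frac1M)2^{-\eta'(\alpha)}\ge\frac12 2^{-\eta'(\alpha)}$. For the jump term, $(1-q)^\alpha\ge 2^{-\alpha}$ when $q\le\frac12$, so it is at least $2^{-\alpha}\cdot2^{-\alpha}2^{-\eta(\alpha-1)}$. This is weaker than the stated $2^{1-2\alpha}$ by a factor of $2$. To close that gap I would add the $(1-q)q^{\alpha-1}$ contribution, or redo the estimate: either sharpen the bound $(1-q)^\alpha\ge2^{-\alpha}$, or rerun the Clarkson step carefully. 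If neither recovers it exactly, the constant is immaterial for the exponent.

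The main obstacle I expect is bookkeeping rather than analysis. The factorization identity must be checked carefully, in particular that the sandwich by $(\frac{\mathbbm 1}{M}\otimes\rho_E)^{\frac{1-\alpha}{2\alpha}}$ produces exactly $M^{\alpha-1}$ and $A_x$ including $P_X(x)$. The exponent arithmetic turning powers of $M$ into $\eta,\eta'$ is the other place errors could enter. Beyond that, all the heavy lifting (the Rosenthal and Bregman estimates) is already packaged in Proposition~\ref{prop:Bernoulli}.
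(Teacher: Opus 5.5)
Your proposal follows the paper's proof essentially step by step. Both use the blockwise identity $Q_\alpha=M^{\alpha-1}\sum_z\operatorname{Tr}S_z^\alpha$, reduce by bin symmetry to a single Bernoulli-$q$ sum with $q=1/M$, and insert $\Xi_\alpha=2^{-H_2^{(\alpha)}}$ and $\Theta_\alpha=2^{-(\alpha-1)H_\alpha}$ into Proposition~\ref{prop:Bernoulli}, bounding each term the same way.

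The factor-of-$2$ loss you flag in the jump lower bound is not real, and your first suggested fix closes it. Keep the $(1-q)q^{\alpha-1}$ term and use convexity of $t\mapsto t^{\alpha-1}$ together with $1-q\ge\tfrac12$:
\[
(1-q)\bigl[(1-q)^{\alpha-1}+q^{\alpha-1}\bigr]\ge \tfrac12\cdot 2^{2-\alpha}=2^{1-\alpha}.
\]
This is exactly the paper's estimate $\frac{(M-1)^\alpha+M-1}{M}\ge 2^{1-\alpha}M^{\alpha-1}$, and it gives the stated constant $2^{1-2\alpha}$.
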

\begin{proof}
For each \(z\in\{1,2,\cdots,M\}\), set
\begin{equation*}
S_z(F):=\sum_{x:F(x)=z}A_x.
\end{equation*}
Blockwise functional calculus gives the exact identity
\begin{align*}
Q_\alpha\left(
\mathcal R_F(\rho_{XE})
\middle\|
\frac{\mathbbm{1}_\mathcal{Z}}{|\mathcal{Z}|}\otimes\rho_E
\right)
=
M^{\alpha-1}
\sum_{z=1}^M\operatorname{Tr} [S_z(F)]^\alpha.
\end{align*}
Fix \(z=1\), put
\begin{equation*}
\xi_x:=1_{\{F(x)=1\}},
\qquad
S:=\sum_x\xi_xA_x.
\end{equation*}
The random binning implies that the \(\xi_x\)'s are independent
Bernoulli-\(q\) variables.  By symmetry of the \(M\) bins, we have
\begin{align*}
\mathbb E_FQ_\alpha\left(
\mathcal R_F(\rho_{XE})
\middle\|
\frac{\mathbbm{1}_\mathcal{Z}}{|\mathcal{Z}|}\otimes\rho_E
\right)
&=
M^\alpha\mathbb E\operatorname{Tr} S^\alpha.
\end{align*}
Since \(\mathbb E S=\frac{1}{M}\rho_E^{1/\alpha}\) and
\(\operatorname{Tr}\rho_E=1\),
\begin{equation*}
\Tr (\mathbb E  S)^\alpha=M^{-\alpha}.
\end{equation*}
It follows that
\begin{align*}
\mathbb E_FQ_\alpha\left(
\mathcal R_F(\rho_{XE})
\middle\|
\frac{\mathbbm{1}_\mathcal{Z}}{|\mathcal{Z}|}\otimes\rho_E
\right)-1
=
M^{\alpha}
\left[
\mathbb E\operatorname{Tr} S^\alpha-\operatorname{Tr}(\mathbb{E}S)^\alpha
\right].
\end{align*}
Next we will apply Proposition~\ref{prop:Bernoulli}.  First,
\begin{equation*}
M^{\alpha}M^{1-\alpha}(1-M^{-1})
=
M(1-M^{-1})
=
M-1\leq M.
\end{equation*}
Second,
\begin{equation*}
M^{\alpha}[M^{-1}(1-M^{-1})]^{\alpha/2}
=
(M-1)^{\alpha/2}\leq M^{\alpha/2}.
\end{equation*}
Third,
\begin{align*}
M^{\alpha}(M^{-1}(1-M^{-1})^\alpha+(1-M^{-1})M^{-\alpha})
=
\frac{(M-1)^\alpha+M-1}{M},
\end{align*}
and
\begin{equation*}
2^{1-\alpha}M^{\alpha-1}\leq\frac{(M-1)^\alpha+M-1}{M}\leq 2M^{\alpha-1}
\end{equation*}
Fourth,
\begin{equation*}
M^{\alpha}(\alpha-1)M^{1-\alpha}(1-M^{-1})=(\alpha-1)(M-1)\geq \frac{M}{2}.
\end{equation*}
From Proposition~\ref{prop:Bernoulli}, substituting \eqref{eq:Theta-entropy} and
\eqref{eq:Xi-entropy} proves \eqref{eq:PA-upper} and
\eqref{eq:PA-lower}.
\end{proof}

\begin{proof}[Proof of Theorem~\ref{thm:10}:]
For any $n\in\mathbb{N}$,
let $F_n:\mathcal{X}^n \to \mathcal{Z}_n=\{1,2,\cdots,2^{nR}\}$ be
the random binning function. The operators in \eqref{eq:Ax-PA}
tensorize
\begin{equation*}
A_{x^n}
=
A_{x_1}\otimes\cdots\otimes A_{x_n},
\qquad
\overline A_n=\overline A^{\otimes n}.
\end{equation*}
Making use of the additivity of sandwiched R{\'e}nyi conditional entropy and the mixed-order order-two R{\'e}nyi conditional entropy, we get
\begin{align*}
\sum_{x^n}\operatorname{Tr} A_{x^n}^\alpha
=&
\left(\sum_x\operatorname{Tr} A_x^\alpha\right)^n
=
2^{-n(\alpha-1)H_\alpha(X|E)_{\rho_{XE}}},
\\
\sum_{x^n}
\operatorname{Tr} A_{x^n}^2\overline A_n^{\alpha-2}
=&
\left(\sum_x\operatorname{Tr} A_x^2\overline A^{\alpha-2}\right)^n
=
2^{-nH_2^{(\alpha)}(X|E)_{\rho_{XE}}}.
\end{align*}
Applying Theorem~\ref{thm:one-shot} with
\(F=F_n\) and \(\rho_{XE}=\rho_{XE}^{\otimes n}\), we find that there
exists a finite constant \(C_\alpha\), depending only on \(\alpha\), such that
\begin{align}
&\mathbb E_{F_n}
Q_\alpha\left(
\mathcal R_{F_n}(\rho_{XE}^{\otimes n})
\middle\|
\frac{\mathbbm{1}_{\mathcal{Z}_n}}{|\mathcal{Z}_n|}\otimes\rho_E^{\otimes n}
\right)-1 \nonumber\\
\le&
C_\alpha\Big[
2^{-n\eta'(\alpha)}
+
2^{-\frac{1}{2}n\eta'(\alpha)
-\frac{1}{2}n\eta(\alpha-1)}
+2^{1-n\eta(\alpha-1)}
\Big] \nonumber\\
\leq&3C_\alpha\max\left\{2^{-n\eta'(\alpha)},2^{-n\eta(\alpha-1)}\right\},
\label{eq:Delta-upper-exp}
\end{align}
where the last inequality follows from 
\begin{equation*}
2^{-\frac{1}{2}n\eta'(\alpha)
-\frac{1}{2}n\eta(\alpha-1)}\leq \max\left\{2^{-n\eta'(\alpha)},2^{-n\eta(\alpha-1)}\right\}.
\end{equation*}
The one-shot lower bound \eqref{eq:PA-lower} gives
\begin{align}
&\mathbb E_{F_n}
Q_\alpha\left(
\mathcal R_{F_n}(\rho_{XE}^{\otimes n})
\middle\|
\frac{\mathbbm{1}_{\mathcal{Z}_n}}{|\mathcal{Z}_n|}\otimes\rho_E^{\otimes n}
\right)-1 \nonumber\\
\ge&
\max\Big\{
2^{-1}\cdot
2^{-n\eta'(\alpha)},
2^{1-2\alpha}\cdot
2^{-n\eta(\alpha-1)}
\Big\}.\label{eq:Delta-lower-exp}
\end{align}
Since $H_{2}^{(\alpha)}(X|E)_{\rho_{XE}}\geq H_\alpha(X|E)_{\rho_{XE}}$ (cf. Corollary~\ref{cor:additivity-ordering-information}), we have
\begin{equation}\label{key}
\eta'(\alpha)>0, \quad \eta(\alpha-1)>0.
\end{equation}
The combination of \eqref{eq:Delta-upper-exp} and \eqref{eq:Delta-lower-exp} yields
\begin{align*}
&\lim_{n\to\infty}
-\frac{1}{n}\log \frac{1}{\alpha-1}\log\mathbb E_{F_n}
Q_\alpha\left(
\mathcal R_{F_n}(\rho_{XE}^{\otimes n})
\middle\|
\frac{\mathbbm{1}_{\mathcal{Z}_n}}{|\mathcal{Z}_n|}\otimes\rho_E^{\otimes n}
\right)\nonumber\\
=&\lim_{n\to\infty}
-\frac{1}{n}\log \max\Big\{
2^{-n\eta'(\alpha)},
2^{-n\eta(\alpha-1)}
\Big\} \nonumber\\
=&\min\{\eta'(\alpha),\eta(\alpha-1)\},
\end{align*}
where the first equation follows from \eqref{key} and the fact that
\(\log(1+f(n))\sim f(n)/\ln 2\) as \(f(n)\searrow0\). This completes the proof.
\end{proof}

\section{Conclusion and Discussion}\label{sec:dis}
In this work, using the associated mixed-order order-two mutual information and conditional entropy, we have characterized the reliability functions of quantum soft covering and quantum privacy amplification under the sandwiched R{\'e}nyi divergence of order $\alpha\in[2,\infty)$. Our results provide new operational meanings for both the sandwiched and mixed-order R{\'e}nyi information quantities. 

The sandwiched R{\'e}nyi quantities govern the higher-order contribution, whereas the mixed-order order-two quantities capture the quadratic fluctuation arising from random codebook sampling and random binning. Accounting for the interplay between these two contributions enables us to derive matching achievability and converse bounds and hence obtain the exact reliability functions. In particular, this yields, to the best of our knowledge, the first exact characterization of the reliability function for quantum soft covering.

We conclude by outlining several open problems.
\begin{enumerate}
\item  First, our analysis is restricted to the sandwiched Rényi divergence with orders $\alpha\in[2,\infty)$, while the reliability functions for $\alpha\in(0,2)$ remain unknown. For quantum soft covering, the method developed in \cite{LYH2023tight} yields a candidate reliability-function expression that exhibits a critical rate. This is in sharp contrast to the reliability function of classical soft covering, which has no critical rate~\cite{YuTan2018renyi}. An exact characterization over $\alpha\in(0,2)$ is therefore needed to determine whether this critical-rate phenomenon is intrinsic to noncommutativity or merely an artifact of the existing proof technique.

\item Second, classical-quantum channel resolvability can be viewed as a generalization of quantum soft covering. Reference \cite{hayashi2025resolvability} established the resolvability of classical-quantum channels under the trace distance. It is also important to determine the corresponding resolvability of classical-quantum channels under the sandwiched Rényi divergence of order $\alpha\in[2,\infty)$. Beyond the corresponding resolvability, it is also natural to investigate and characterize the exact reliability function in this regime.

\item Moreover, another important direction is to determine the exact reliability functions for quantum soft covering and quantum privacy amplification under the trace distance. Such characterizations would complement our results for the sandwiched Rényi divergence and further clarify the relationship between these two distinguishability measures at the level of error exponents. However, quite different techniques may be needed in the trace distance setting.

\item Finally, it would be interesting to explore applications of the mixed-order order-two Rényi divergence beyond quantum soft covering and privacy amplification. A closely related direction is to seek further generalizations of Lieb's trace functional and investigate their fundamental analytic properties. Such extensions may lead to new Rényi information quantities and reveal additional operational interpretations in quantum information theory.

\end{enumerate}

\appendix
\section{Auxiliary Lemmas}
This appendix contains several technical lemmas that are used in the proofs. 
\begin{lemma}[Complex interpolation theorem]
\label{lem:complex-interpolation-schatten}
Let
\[
\mathbb S:=\{z\in\mathbb C:0\leq \operatorname{Re}z\leq1\},
\]
and let \(F:\mathbb S\to\mathcal L(\mathcal H)\) be bounded and
continuous on \(\mathbb S\), and analytic in its interior. Let
\(1\leq p_0,p_1\leq\infty\), and define \(p_\theta\) by
\[
\frac{1}{p_\theta}
=
\frac{1-\theta}{p_0}
+
\frac{\theta}{p_1},
\qquad 0\leq\theta\leq1.
\]
Then
\[
\left\|F(\theta)\right\|_{p_\theta}
\leq
\left(
\sup_{y\in\mathbb R}\left\|F(iy)\right\|_{p_0}
\right)^{1-\theta}
\left(
\sup_{y\in\mathbb R}\left\|F(1+iy)\right\|_{p_1}
\right)^\theta.
\]
\end{lemma}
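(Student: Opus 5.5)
This is the standard Riesz--Thorin/Hadamard three-lines interpolation for Schatten classes, and I would prove it by duality together with the scalar three-lines lemma.

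\emph{Reduction.} Fix \(\theta\in(0,1)\); the endpoint cases \(\theta=0,1\) are immediate. Set
\[
M_0:=\sup_y\|F(iy)\|_{p_0},\qquad M_1:=\sup_y\|F(1+iy)\|_{p_1}.
\]
We may assume \(M_0,M_1>0\). Replacing \(F(z)\) by \(M_0^{z-1}M_1^{-z}F(z)\), which is still bounded and analytic on \(\mathbb S\), it suffices to show \(\|F(\theta)\|_{p_\theta}\le1\) when \(M_0=M_1=1\).

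\emph{Duality and the test family.} Let \(p_\theta'\) be the conjugate exponent of \(p_\theta\). Since \(\|F(\theta)\|_{p_\theta}=\sup|\Tr(F(\theta)G)|\) over \(\|G\|_{p_\theta'}\le1\), and finite rank suffices in finite dimension, it is enough to bound \(|\Tr(F(\theta)G)|\) for such \(G\). By homogeneity we may take \(\|G\|_{p_\theta'}=1\), and we write its polar decomposition \(G=U|G|\). For \(1<p_\theta'<\infty\), define
\[
G(z):=U|G|^{p_\theta'\left(\frac{1-z}{p_0'}+\frac{z}{p_1'}\right)},
\]
with powers taken on \(\operatorname{supp}|G|\), so that \(G(\theta)=G\). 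The cases where some \(p_j'\) equals \(1\) or \(\infty\) are handled with the usual conventions.

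\emph{Boundary estimates.} On \(\operatorname{Re}z=0\) the imaginary part of the exponent contributes only a unitary factor, so
\[
\|G(iy)\|_{p_0'}=\bigl\||G|^{p_\theta'/p_0'}\bigr\|_{p_0'}=\|G\|_{p_\theta'}^{p_\theta'/p_0'}=1.
\]
The same computation gives \(\|G(1+iy)\|_{p_1'}=1\) on \(\operatorname{Re}z=1\).

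\emph{Three-lines argument.} Let \(\phi(z):=\Tr(F(z)G(z))\). This function is bounded and continuous on \(\mathbb S\) and analytic in the interior, because \(G(z)\) is a finite sum of exponentials \(\lambda^{cz}\) times fixed matrices and is therefore entire and bounded on the strip. H\"older's inequality on the two boundary lines gives \(|\phi|\le M_0\cdot1\) and \(|\phi|\le M_1\cdot1\), both at most \(1\) after the normalization. Hadamard's three-lines lemma then yields \(|\phi(\theta)|\le1\). Taking the supremum over \(G\) gives \(\|F(\theta)\|_{p_\theta}\le1\), and undoing the normalization proves the claimed bound.

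\emph{Main obstacle.} The only delicate points are the endpoint cases \(p_j=\infty\) or \(p_\theta'\in\{1,\infty\}\) and the possible singularity of \(|G|\). For singular \(|G|\), I would restrict to its support, or replace \(|G|\) by \(|G|+\varepsilon\) and pass to the limit. For the endpoints, I would note that finite dimensionality makes all norms continuous in the exponent, so those cases follow by approximating \(p_j\) by finite exponents. Alternatively, one can simply cite the standard result (e.g., Simon's \emph{Trace Ideals} or \cite{PX03}).
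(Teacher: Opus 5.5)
Your proof is correct, but it takes a different route from the paper. The paper does not argue from scratch. It invokes the abstract complex interpolation theorem (Bergh--L\"ofstr\"om) to bound $F(\theta)$ in the Calder\'on norm of $[S_{p_0},S_{p_1}]_\theta$. It then cites Kosaki's theorem that this interpolation space coincides isometrically with $S_{p_\theta}$.

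Your argument is the classical Riesz--Thorin proof specialized to Schatten classes: H\"older duality, the test family $G(z)=U|G|^{p_\theta'((1-z)/p_0'+z/p_1')}$, and the scalar three-lines lemma. In effect it proves directly the only half of Kosaki's identification that is needed, namely the contractive inclusion $[S_{p_0},S_{p_1}]_\theta\hookrightarrow S_{p_\theta}$.

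\emph{What the paper's route buys.} It is shorter and places the lemma in a general framework, since Kosaki's result holds for noncommutative $L_p$ over arbitrary von Neumann algebras. However, it uses machinery much heavier than finite dimensions require. Also, the Bergh--L\"ofstr\"om class of admissible functions formally requires decay at infinity on the boundary lines. The lemma's hypothesis of mere boundedness meets this only after a routine damping modification, e.g.\ multiplying by $e^{\delta(z-\theta)^2}$ and letting $\delta\to0$.

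\emph{What your route buys.} It is self-contained and uses only trace duality and the three-lines lemma. It applies to bounded $F$ exactly as stated, because the three-lines lemma needs only boundedness on the strip.

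The delicate points you flag are all routine. The case $M_j=0$ is handled by normalizing with $M_j+\varepsilon$ and letting $\varepsilon\to0$. The degenerate exponent $p_\theta'=\infty$, which for $\theta\in(0,1)$ forces $p_0=p_1=1$, is handled by taking $G(z)\equiv G$. Using the partial isometry $U$ and taking powers on $\operatorname{supp}|G|$ gives exactly the boundary norms you computed.
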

\begin{proof}
By the standard interpolation theorem, e.g., see \cite{BerghLofstrom1976}, we get
\[
\left\|F(\theta)\right\|_{[S_{p_0},S_{p_1}]_{\theta}}
\leq
\left(
\sup_{y\in\mathbb R}\left\|F(iy)\right\|_{p_0}
\right)^{1-\theta}
\left(
\sup_{y\in\mathbb R}\left\|F(1+iy)\right\|_{p_1}
\right)^\theta.
\]
Here $[S_{p_0},S_{p_1}]_{\theta}$ denotes the interpolation space. Moreover, \cite{kosaki1984} shows that
$$ \left\|F(\theta)\right\|_{[S_{p_0},S_{p_1}]_{\theta}}=\left\|F(\theta)\right\|_{p_\theta}.$$
\end{proof}

\begin{lemma}\label{w-young}
Let $a,b\geq 0$ and $0< \theta \leq 1$. For $\varepsilon>0$, we have
$$ a^{1-\theta}b^{\theta}\leq  \varepsilon a+C_{\theta,\varepsilon}b.$$ 
Here $C_{\theta,\varepsilon}<\infty$ depends on $\theta$ and $\varepsilon$.
\end{lemma}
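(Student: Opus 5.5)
The plan is to prove Lemma~\ref{w-young} by the usual weighted Young inequality followed by absorbing the weight into the constants. First the edge case $\theta=1$. Here $a^{1-\theta}b^\theta=b$, so the claim holds with $C_{1,\varepsilon}=1$ (reading $a^0=1$, including when $a=0$). From now on take $0<\theta<1$.

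\textbf{Step 1 (Young's inequality).} For $x,y\ge0$ and $0<\theta<1$, weighted AM--GM, equivalently concavity of $\ln$, gives
\[
x^{1-\theta}y^{\theta}\le (1-\theta)x+\theta y .
\]
This also holds trivially when $x$ or $y$ vanishes.

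\textbf{Step 2 (rescaling).} For a parameter $\lambda>0$, write
\[
a^{1-\theta}b^{\theta}=(\lambda a)^{1-\theta}\bigl(\lambda^{-(1-\theta)/\theta}b\bigr)^{\theta}.
\]
Applying Step 1 to this product yields
\[
a^{1-\theta}b^{\theta}\le (1-\theta)\lambda a+\theta\lambda^{-(1-\theta)/\theta}b .
\]

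\textbf{Step 3 (choice of $\lambda$).} Choose $\lambda=\varepsilon/(1-\theta)$. The first term then becomes $\varepsilon a$, and we may take
\[
C_{\theta,\varepsilon}=\theta\Bigl(\tfrac{1-\theta}{\varepsilon}\Bigr)^{(1-\theta)/\theta}<\infty .
\]
This constant depends only on $\theta$ and $\varepsilon$, as required. No step is a real obstacle; the only care needed is the degenerate cases $\theta=1$ and $a=0$ or $b=0$, which Steps 1 and 3 handle directly.
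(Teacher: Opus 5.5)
Your proof is correct and matches the paper's argument: the paper also dismisses $\theta=1$ as trivial and, for $0<\theta<1$, applies Young's inequality $x^{1-\theta}y^{\theta}\le(1-\theta)x+\theta y$ with exactly your substitution $x=\frac{\varepsilon}{1-\theta}a$, $y=\bigl(\frac{1-\theta}{\varepsilon}\bigr)^{\frac{1-\theta}{\theta}}b$. This yields the same constant $C_{\theta,\varepsilon}=\theta\bigl(\frac{1-\theta}{\varepsilon}\bigr)^{(1-\theta)/\theta}$.
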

\begin{proof}
When $\theta=1$ it is trivial. We only consider $0<\theta<1$. The celebrated Young's inequality gives
$$ x^{1-\theta}y^{\theta}\leq  (1-\theta)x+\theta y,\quad x,y\geq0.$$
Then applying it with
\[
x=\frac{\varepsilon}{1-\theta}a,
\qquad
y=
\left(\frac{1-\theta}{\varepsilon}\right)^{
\frac{1-\theta}{\theta}}b
\]
yields the assertion.
\end{proof}

\begin{lemma}[Clarkson--McCarthy inequality~\cite{hirzallah2002non}]\label{lem:CM}
Let \(2\le p<\infty\), \(A,B\in\mathcal{L}(\mathcal{H})\) and $\|A\|_p,\|B\|_p< +\infty$. Then
\begin{equation*}
2\left(\|A\|_p^p+\|B\|_p^p\right)
\le
\|A+B\|_p^p+\|A-B\|_p^p
\le
2^{p-1}\left(\|A\|_p^p+\|B\|_p^p\right).
\end{equation*}
\end{lemma}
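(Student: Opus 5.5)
We derive both inequalities from one operator identity, the parallelogram law
\[
|A+B|^2+|A-B|^2=(A+B)^\dagger(A+B)+(A-B)^\dagger(A-B)=2\bigl(|A|^2+|B|^2\bigr),
\]
together with two scalar-type trace inequalities for $r:=p/2\ge1$ and $X,Y\ge0$:
\begin{equation*}
\text{(i) }\ \Tr X^r+\Tr Y^r\le \Tr (X+Y)^r,
\qquad
\text{(ii) }\ \Tr\Bigl(\tfrac{X+Y}{2}\Bigr)^r\le \tfrac12\bigl(\Tr X^r+\Tr Y^r\bigr).
\end{equation*}
Inequality (ii) is the convexity of $T\mapsto\Tr T^r$ on positive operators, a standard consequence of the convexity of $t\mapsto t^r$ (Peierls/Klein-type trace convexity).

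\textbf{Upper inequality.} Since $\|A\pm B\|_p^p=\Tr(|A\pm B|^2)^{r}$, we apply (i) with $X=|A+B|^2$ and $Y=|A-B|^2$, then use the parallelogram identity:
\[
\|A+B\|_p^p+\|A-B\|_p^p\le \Tr\bigl(2(|A|^2+|B|^2)\bigr)^{r}=2^{p}\,\Tr\Bigl(\tfrac{|A|^2+|B|^2}{2}\Bigr)^{r}.
\]
By (ii) with $X=|A|^2$ and $Y=|B|^2$, the right-hand side is at most $2^{p-1}(\|A\|_p^p+\|B\|_p^p)$.

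\textbf{Lower inequality.} We apply the upper inequality just proved to the pair $A'=A+B$, $B'=A-B$. Since $A'+B'=2A$ and $A'-B'=2B$, this gives
\[
2^p\bigl(\|A\|_p^p+\|B\|_p^p\bigr)\le 2^{p-1}\bigl(\|A+B\|_p^p+\|A-B\|_p^p\bigr).
\]
Dividing by $2^{p-1}$ yields the left-hand inequality.

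\textbf{Proof of (i), the main step.} Put $T=X+Y$ and work on $\operatorname{supp}T$; both $X$ and $Y$ live there because $0\le X,Y\le T$. On this support write $X=T^{1/2}KT^{1/2}$ and $Y=T^{1/2}(\mathbbm{1}-K)T^{1/2}$ with $K:=T^{-1/2}XT^{-1/2}$. From $0\le X\le T$ we get $0\le K\le\mathbbm{1}$. The Araki--Lieb--Thirring inequality~\cite{Araki1990inequality} (already used in the paper) with exponent $r\ge1$ gives
\[
\Tr X^r=\Tr\bigl(T^{1/2}KT^{1/2}\bigr)^r\le \Tr T^{r/2}K^rT^{r/2}\le \Tr T^{r/2}KT^{r/2}=\Tr T^rK .
\]
The second step uses $K^r\le K$ for $0\le K\le\mathbbm{1}$ and $r\ge1$. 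The same argument with $\mathbbm{1}-K$ in place of $K$ gives $\Tr Y^r\le \Tr T^r(\mathbbm{1}-K)$. Adding the two bounds gives $\Tr X^r+\Tr Y^r\le\Tr T^r$, which is (i).

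The only delicate points are the support issue for singular $T$, handled by restricting to $\operatorname{supp}T$ as above, and recognizing that (i) requires only $r\ge1$. That requirement is exactly $p\ge2$, which explains the restriction in the statement.
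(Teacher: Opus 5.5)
Your proof is correct. It also goes beyond what the paper offers: the paper states this lemma in the appendix with only a citation to Hirzallah--Kittaneh, where it is a special case of more general non-commutative Clarkson inequalities for unitarily invariant norms.

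You instead give a self-contained argument along the classical route. It combines the parallelogram identity $|A+B|^2+|A-B|^2=2(|A|^2+|B|^2)$, superadditivity $\Tr X^r+\Tr Y^r\le\Tr(X+Y)^r$ for $r=p/2\ge1$, and convexity of $T\mapsto\Tr T^r$. You obtain the left inequality for free by applying the right one to the pair $(A+B,A-B)$. Your proof of superadditivity via $X=T^{1/2}KT^{1/2}$ with $0\le K\le\mathbbm{1}$, Araki--Lieb--Thirring, and $K^r\le K$ checks out, including the support reduction to $\operatorname{supp}T$.

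The advantage of your route is that it uses only ingredients the paper already relies on: Araki--Lieb--Thirring in Proposition~\ref{prop:ordering} and Lemma~\ref{lem:add-positive-bridge}, and trace convexity in Lemma~\ref{lem:add-bregman}. With it, the appendix would no longer need an external result. The citation buys generality (arbitrary unitarily invariant norms, infinite-dimensional operators) that the paper never uses.

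One minor caveat: the hypothesis $\|A\|_p,\|B\|_p<+\infty$ hints at infinite dimensions, where inverting $T$ on $\operatorname{supp}T$ is not available. There you would write $X^{1/2}=WT^{1/2}$ with a contraction $W$ (Douglas' lemma) and take $K=W^\dagger W$. Since all Hilbert spaces in the paper are finite-dimensional, this does not affect your argument as needed here.
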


\begin{lemma}[Birman--Koplienko--Solomyak inequality~\cite{BKS75,BS89}]\label{BKS}
Let \(0<\theta<1\) and \(1\le\alpha<\infty\). Let
\(A,B\in\mathcal{P}(\mathcal{H})\) be positive semidefinite operators
such that $
\norm{A-B}_{\alpha}<\infty.$
Then
\begin{equation*}
\norm{A^\theta-B^\theta}_{\alpha/\theta}
\le C_\theta
\norm{A-B}_{\alpha}^{\theta}.
\end{equation*}
\end{lemma}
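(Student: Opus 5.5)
The plan is to prove the inequality with the explicit constant $C_\theta=1$, following Ando's comparison theorem for operator monotone functions. The function $f(t)=t^\theta$ with $0<\theta<1$ is operator monotone on $[0,\infty)$ and has $f(0)=0$. The target is
\[
\norm{A^\theta-B^\theta}_{p}\le\norm{|A-B|^\theta}_{p},\qquad p:=\frac{\alpha}{\theta}\ge1 .
\]
Since $\norm{|A-B|^\theta}_{\alpha/\theta}=(\Tr|A-B|^\alpha)^{\theta/\alpha}=\norm{A-B}_\alpha^\theta$, this is exactly the claim.

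\emph{Step 1 (reduction to one-sided increments).} Write the Jordan decomposition $A-B=P-N$ with $P,N\ge0$, $PN=0$ and $|A-B|=P+N$. Set $X:=A^\theta-B^\theta$ and $X=X_+-X_-$. From $A\le B+P$ and operator monotonicity of $t^\theta$ we get
\[
X\le (B+P)^\theta-B^\theta=:Y\ge0 .
\]
For Hermitian $X\le Y$ with $Y\ge0$, the positive part $X_+$ satisfies $\Tr X_+^p\le\Tr Y^p$. To see this, compress to the range projection $Q$ of $X_+$: then $X_+=QXQ\le QYQ$, and Weyl monotonicity together with the pinching inequality $\Tr(QYQ)^p\le\Tr Y^p$ gives the bound. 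Symmetrically, $B\le A+N$ gives $X_-\le (A+N)^\theta-A^\theta$. Hence
\[
\norm{X}_p^p=\Tr X_+^p+\Tr X_-^p\le \norm{(B+P)^\theta-B^\theta}_p^p+\norm{(A+N)^\theta-A^\theta}_p^p .
\]

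\emph{Step 2 (key estimate).} For $B,P\ge0$ the aim is
\[
\norm{(B+P)^\theta-B^\theta}_p\le\norm{P^\theta}_p .
\]
It suffices to prove weak majorization of singular values, i.e.\ the inequality for every Ky Fan $k$-norm $\norm{\cdot}_{(k)}$, because $\norm{\cdot}_p$ is a symmetric gauge function of the singular values. Use the integral representation
\[
t^\theta=c_\theta\int_0^\infty\frac{t}{t+s}\,s^{\theta-1}\,ds,\qquad c_\theta>0,
\]
and put $g_s(t)=t/(t+s)$. By the triangle inequality for $\norm{\cdot}_{(k)}$,
\[
\norm{(B+P)^\theta-B^\theta}_{(k)}\le c_\theta\int_0^\infty\norm{g_s(B+P)-g_s(B)}_{(k)}\,s^{\theta-1}\,ds .
\]
On the other side, the operators $g_s(P)$, $s>0$, are positive and commute, and their eigenvalues are ordered consistently in a common eigenbasis. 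Therefore
\[
\norm{P^\theta}_{(k)}=c_\theta\int_0^\infty\norm{g_s(P)}_{(k)}\,s^{\theta-1}\,ds .
\]
So everything reduces to the single-resolvent statement
\[
\norm{g_s(B+P)-g_s(B)}_{(k)}\le\norm{g_s(P)}_{(k)}\qquad\text{for every }s>0 .
\]

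\emph{Step 3 (main obstacle).} I expect this single-resolvent inequality to be the hardest step. I would first rewrite the difference as
\[
g_s(B+P)-g_s(B)=s\bigl[(B+s)^{-1}-(B+P+s)^{-1}\bigr]=s\,(B+s)^{-1/2}\,\frac{K}{1+K}\,(B+s)^{-1/2},
\qquad K:=(B+s)^{-1/2}P(B+s)^{-1/2}.
\]
The resulting operator is positive and is bounded above in norm by $1$. I would then compare it with $P(P+s)^{-1}$ using $(B+s)^{-1}\le s^{-1}$ and the operator concavity of $t\mapsto t/(1+t)$, in the spirit of Ando's proof or of Bhatia, \emph{Matrix Analysis}, Thm.~X.1.3. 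If this direct route becomes delicate, I would simply invoke Ando's theorem: for operator monotone $f$ with $f(0)\ge0$ and every unitarily invariant norm, $|||f(A)-f(B)|||\le|||f(|A-B|)|||$. That theorem already covers both Step 2 and the final combination.

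Assuming Step 2, combining it with Step 1 and using $PN=0$ gives
\[
\norm{X}_p^p\le\Tr P^{\alpha}+\Tr N^{\alpha}=\Tr|A-B|^\alpha .
\]
Taking $p$-th roots with $p=\alpha/\theta$ yields the lemma with $C_\theta=1$.
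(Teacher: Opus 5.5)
Your proof is correct, and it takes a genuinely different route from the paper. The paper gives no proof of this lemma. It imports the estimate from Birman--Koplienko--Solomyak \cite{BKS75,BS89}, a double-operator-integral result for Schatten classes on possibly infinite-dimensional spaces, with an unspecified constant $C_\theta$. You instead carry out the proof of Ando's comparison theorem, specialized to Schatten norms:
\begin{itemize}
\item Step~1 uses the Jordan splitting together with L\"owner--Heinz and Weyl monotonicity.
\item Step~2 reduces to Ky Fan norms through the integral representation of $t^\theta$.
\item The final identity $\Tr P^\alpha+\Tr N^\alpha=\Tr|A-B|^\alpha$ correctly uses $PN=0$.
\end{itemize}
All of these steps are sound.

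The step you flag as the main obstacle is actually easy, and it needs only scalar monotonicity, not operator concavity. Write $\phi(t)=t/(1+t)$. The operator $s(B+s)^{-1/2}\phi(K)(B+s)^{-1/2}$ has the same eigenvalues as $\phi(K)^{1/2}\,s(B+s)^{-1}\,\phi(K)^{1/2}\le\phi(K)$. Likewise, $K$ has the same eigenvalues as $P^{1/2}(B+s)^{-1}P^{1/2}\le P/s$. Weyl monotonicity then gives, for every $j$,
\[
\lambda_j\bigl(g_s(B+P)-g_s(B)\bigr)\le\phi\bigl(\lambda_j(K)\bigr)\le\phi\bigl(\lambda_j(P)/s\bigr)=\lambda_j\bigl(g_s(P)\bigr).
\]
This eigenvalue-wise bound is stronger than the Ky Fan inequality you need.

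Compared with the citation, your route buys three things: the sharp constant $C_\theta=1$, the same bound in every unitarily invariant norm (via Ando's theorem), and a self-contained argument in the finite-dimensional setting the paper actually uses. The cited BKS theorem buys the infinite-dimensional generality suggested by the hypothesis $\norm{A-B}_\alpha<\infty$. There, your eigenvalue-counting steps would need extra care with non-compact operators. For the paper's purposes, $C_\theta=1$ would simply simplify the constant $K_\alpha$ in the proof of Theorem~\ref{lem:add-change-base}.
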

\bmhead{Acknowledgements}
The research of Shi-Bing Li was supported by NSFC under Grant 62571166. Hongsen Qiu and Xinyu Zhang were supported
by the National Natural Science Foundation of China (Grant No. 12371138 and No. W2441002).
\bmhead{Data Availability} Data sharing is not applicable to this article, since no data sets were used.
\bibliography{references}


\begin{thebibliography}{69}
\ifx \bisbn   \undefined \def \bisbn  #1{ISBN #1}\fi
\ifx \binits  \undefined \def \binits#1{#1}\fi
\ifx \bauthor  \undefined \def \bauthor#1{#1}\fi
\ifx \batitle  \undefined \def \batitle#1{#1}\fi
\ifx \bjtitle  \undefined \def \bjtitle#1{#1}\fi
\ifx \bvolume  \undefined \def \bvolume#1{\textbf{#1}}\fi
\ifx \byear  \undefined \def \byear#1{#1}\fi
\ifx \bissue  \undefined \def \bissue#1{#1}\fi
\ifx \bfpage  \undefined \def \bfpage#1{#1}\fi
\ifx \blpage  \undefined \def \blpage #1{#1}\fi
\ifx \burl  \undefined \def \burl#1{\textsf{#1}}\fi
\ifx \doiurl  \undefined \def \doiurl#1{\url{https://doi.org/#1}}\fi
\ifx \betal  \undefined \def \betal{\textit{et al.}}\fi
\ifx \binstitute  \undefined \def \binstitute#1{#1}\fi
\ifx \binstitutionaled  \undefined \def \binstitutionaled#1{#1}\fi
\ifx \bctitle  \undefined \def \bctitle#1{#1}\fi
\ifx \beditor  \undefined \def \beditor#1{#1}\fi
\ifx \bpublisher  \undefined \def \bpublisher#1{#1}\fi
\ifx \bbtitle  \undefined \def \bbtitle#1{#1}\fi
\ifx \bedition  \undefined \def \bedition#1{#1}\fi
\ifx \bseriesno  \undefined \def \bseriesno#1{#1}\fi
\ifx \blocation  \undefined \def \blocation#1{#1}\fi
\ifx \bsertitle  \undefined \def \bsertitle#1{#1}\fi
\ifx \bsnm \undefined \def \bsnm#1{#1}\fi
\ifx \bsuffix \undefined \def \bsuffix#1{#1}\fi
\ifx \bparticle \undefined \def \bparticle#1{#1}\fi
\ifx \barticle \undefined \def \barticle#1{#1}\fi
\bibcommenthead
\ifx \bconfdate \undefined \def \bconfdate #1{#1}\fi
\ifx \botherref \undefined \def \botherref #1{#1}\fi
\ifx \url \undefined \def \url#1{\textsf{#1}}\fi
\ifx \bchapter \undefined \def \bchapter#1{#1}\fi
\ifx \bbook \undefined \def \bbook#1{#1}\fi
\ifx \bcomment \undefined \def \bcomment#1{#1}\fi
\ifx \oauthor \undefined \def \oauthor#1{#1}\fi
\ifx \citeauthoryear \undefined \def \citeauthoryear#1{#1}\fi
\ifx \endbibitem  \undefined \def \endbibitem {}\fi
\ifx \bconflocation  \undefined \def \bconflocation#1{#1}\fi
\ifx \arxivurl  \undefined \def \arxivurl#1{\textsf{#1}}\fi
\csname PreBibitemsHook\endcsname

\bibitem[\protect\citeauthoryear{Lieb}{1973}]{lieb1973convex}
\begin{barticle}
\bauthor{\bsnm{Lieb}, \binits{E.H.}}:
\batitle{Convex trace functions and the {Wigner-Yanase-Dyson} conjecture}.
\bjtitle{Advances in Mathematics}
\bvolume{11}(\bissue{3}),
\bfpage{267}--\blpage{288}
(\byear{1973})
\end{barticle}
\endbibitem

\bibitem[\protect\citeauthoryear{Ando}{1979}]{ando1979concavity}
\begin{barticle}
\bauthor{\bsnm{Ando}, \binits{T.}}:
\batitle{Concavity of certain maps on positive definite matrices and applications to hadamard products}.
\bjtitle{Linear algebra and its applications}
\bvolume{26},
\bfpage{203}--\blpage{241}
(\byear{1979})
\end{barticle}
\endbibitem

\bibitem[\protect\citeauthoryear{Petz}{1986}]{petz1986quasi}
\begin{barticle}
\bauthor{\bsnm{Petz}, \binits{D.}}:
\batitle{Quasi-entropies for finite quantum systems}.
\bjtitle{Reports on mathematical physics}
\bvolume{23}(\bissue{1}),
\bfpage{57}--\blpage{65}
(\byear{1986})
\end{barticle}
\endbibitem

\bibitem[\protect\citeauthoryear{Jen{\v c}ov{\'a} and Ruskai}{2010}]{jencova2010unified}
\begin{barticle}
\bauthor{\bsnm{Jen{\v c}ov{\'a}}, \binits{A.}},
\bauthor{\bsnm{Ruskai}, \binits{M.B.}}:
\batitle{A unified treatment of convexity of relative entropy and related trace functions, with conditions for equality}.
\bjtitle{Reviews in Mathematical Physics}
\bvolume{22}(\bissue{9}),
\bfpage{1099}--\blpage{1121}
(\byear{2010})
\end{barticle}
\endbibitem

\bibitem[\protect\citeauthoryear{Hiai}{2013}]{hiai2013concavity}
\begin{barticle}
\bauthor{\bsnm{Hiai}, \binits{F.}}:
\batitle{Concavity of certain matrix trace and norm functions}.
\bjtitle{Linear algebra and its applications}
\bvolume{439}(\bissue{5}),
\bfpage{1568}--\blpage{1589}
(\byear{2013})
\end{barticle}
\endbibitem

\bibitem[\protect\citeauthoryear{Carlen et~al.}{2016}]{carlen2016some}
\begin{barticle}
\bauthor{\bsnm{Carlen}, \binits{E.A.}},
\bauthor{\bsnm{Frank}, \binits{R.L.}},
\bauthor{\bsnm{Lieb}, \binits{E.H.}}:
\batitle{Some operator and trace function convexity theorems}.
\bjtitle{Linear algebra and its applications}
\bvolume{490},
\bfpage{174}--\blpage{185}
(\byear{2016})
\end{barticle}
\endbibitem

\bibitem[\protect\citeauthoryear{Zhang}{2020}]{zhang2020wigner}
\begin{barticle}
\bauthor{\bsnm{Zhang}, \binits{H.}}:
\batitle{From wigner-yanase-dyson conjecture to carlen-frank-lieb conjecture}.
\bjtitle{Advances in Mathematics}
\bvolume{365},
\bfpage{107053}
(\byear{2020})
\end{barticle}
\endbibitem

\bibitem[\protect\citeauthoryear{Epstein}{1973}]{epstein1973remarks}
\begin{barticle}
\bauthor{\bsnm{Epstein}, \binits{H.}}:
\batitle{Remarks on two theorems of e. lieb}.
\bjtitle{Communications in Mathematical Physics}
\bvolume{31}(\bissue{4}),
\bfpage{317}--\blpage{325}
(\byear{1973})
\end{barticle}
\endbibitem

\bibitem[\protect\citeauthoryear{Frank and Lieb}{2013}]{frank2013monotonicity}
\begin{botherref}
\oauthor{\bsnm{Frank}, \binits{R.L.}},
\oauthor{\bsnm{Lieb}, \binits{E.H.}}:
Monotonicity of a relative {R{\'e}nyi} entropy.
Journal of Mathematical Physics
\textbf{54}(12)
(2013)
\end{botherref}
\endbibitem

\bibitem[\protect\citeauthoryear{Beigi}{2013}]{Beigi2013sandwiched}
\begin{barticle}
\bauthor{\bsnm{Beigi}, \binits{S.}}:
\batitle{Sandwiched {{R{\'e}nyi}} divergence satisfies data processing inequality}.
\bjtitle{J. Math. Phys.}
\bvolume{54},
\bfpage{122202}
(\byear{2013})
\end{barticle}
\endbibitem

\bibitem[\protect\citeauthoryear{Audenaert and Datta}{2015}]{audenaert2015alpha}
\begin{botherref}
\oauthor{\bsnm{Audenaert}, \binits{K.M.}},
\oauthor{\bsnm{Datta}, \binits{N.}}:
$\alpha$-z {R{\'e}nyi} relative entropies.
Journal of Mathematical Physics
\textbf{56}(2)
(2015)
\end{botherref}
\endbibitem

\bibitem[\protect\citeauthoryear{M{\"u}ller-Lennert et~al.}{2013}]{muller2013quantum}
\begin{botherref}
\oauthor{\bsnm{M{\"u}ller-Lennert}, \binits{M.}},
\oauthor{\bsnm{Dupuis}, \binits{F.}},
\oauthor{\bsnm{Szehr}, \binits{O.}},
\oauthor{\bsnm{Fehr}, \binits{S.}},
\oauthor{\bsnm{Tomamichel}, \binits{M.}}:
On quantum {R{\'e}nyi} entropies: A new generalization and some properties.
Journal of Mathematical Physics
\textbf{54}(12)
(2013)
\end{botherref}
\endbibitem

\bibitem[\protect\citeauthoryear{Wilde et~al.}{2014}]{wilde2014strong}
\begin{barticle}
\bauthor{\bsnm{Wilde}, \binits{M.M.}},
\bauthor{\bsnm{Winter}, \binits{A.}},
\bauthor{\bsnm{Yang}, \binits{D.}}:
\batitle{Strong converse for the classical capacity of entanglement-breaking and hadamard channels via a sandwiched {R{\'e}nyi} relative entropy}.
\bjtitle{Communications in Mathematical Physics}
\bvolume{331}(\bissue{2}),
\bfpage{593}--\blpage{622}
(\byear{2014})
\end{barticle}
\endbibitem

\bibitem[\protect\citeauthoryear{Audenaert et~al.}{2007}]{AKCMBMA2007discriminating}
\begin{barticle}
\bauthor{\bsnm{Audenaert}, \binits{K.M.}},
\bauthor{\bsnm{Calsamiglia}, \binits{J.}},
\bauthor{\bsnm{Munoz-Tapia}, \binits{R.}},
\bauthor{\bsnm{Bagan}, \binits{E.}},
\bauthor{\bsnm{Masanes}, \binits{L.}},
\bauthor{\bsnm{Acin}, \binits{A.}},
\bauthor{\bsnm{Verstraete}, \binits{F.}}:
\batitle{Discriminating states: the quantum {Chernoff} bound}.
\bjtitle{Phys. Rev. Lett.}
\bvolume{98},
\bfpage{160501}
(\byear{2007})
\end{barticle}
\endbibitem

\bibitem[\protect\citeauthoryear{Hayashi}{2007}]{Hayashi2007error}
\begin{barticle}
\bauthor{\bsnm{Hayashi}, \binits{M.}}:
\batitle{Error exponent in asymmetric quantum hypothesis testing and its application to classical-quantum channel coding}.
\bjtitle{Phys. Rev. A}
\bvolume{76},
\bfpage{062301}
(\byear{2007})
\end{barticle}
\endbibitem

\bibitem[\protect\citeauthoryear{Audenaert et~al.}{2008}]{ANSV2008asymptotic}
\begin{barticle}
\bauthor{\bsnm{Audenaert}, \binits{K.M.}},
\bauthor{\bsnm{Nussbaum}, \binits{M.}},
\bauthor{\bsnm{Szko{\l}a}, \binits{A.}},
\bauthor{\bsnm{Verstraete}, \binits{F.}}:
\batitle{Asymptotic error rates in quantum hypothesis testing}.
\bjtitle{Commun. Math. Phys.}
\bvolume{279}(\bissue{1}),
\bfpage{251}--\blpage{283}
(\byear{2008})
\end{barticle}
\endbibitem

\bibitem[\protect\citeauthoryear{Hayashi and Tomamichel}{2016}]{HayashiTomamichel2016correlation}
\begin{barticle}
\bauthor{\bsnm{Hayashi}, \binits{M.}},
\bauthor{\bsnm{Tomamichel}, \binits{M.}}:
\batitle{Correlation detection and an operational interpretation of the {R{\'e}nyi} mutual information}.
\bjtitle{J. Math. Phys.}
\bvolume{57},
\bfpage{102201}
(\byear{2016})
\end{barticle}
\endbibitem

\bibitem[\protect\citeauthoryear{Cheng et~al.}{2021}]{CHDH2021non}
\begin{barticle}
\bauthor{\bsnm{Cheng}, \binits{H.-C.}},
\bauthor{\bsnm{Hanson}, \binits{E.P.}},
\bauthor{\bsnm{Datta}, \binits{N.}},
\bauthor{\bsnm{Hsieh}, \binits{M.-H.}}:
\batitle{Non-asymptotic classical data compression with quantum side information}.
\bjtitle{IEEE Trans. Inf. Theory}
\bvolume{67}(\bissue{2}),
\bfpage{902}--\blpage{930}
(\byear{2021})
\end{barticle}
\endbibitem

\bibitem[\protect\citeauthoryear{Dalai}{2013}]{Dalai2013lower}
\begin{barticle}
\bauthor{\bsnm{Dalai}, \binits{M.}}:
\batitle{Lower bounds on the probability of error for classical and classical-quantum channels}.
\bjtitle{IEEE Trans. Inf. Theory}
\bvolume{59}(\bissue{12}),
\bfpage{8027}--\blpage{8056}
(\byear{2013})
\end{barticle}
\endbibitem

\bibitem[\protect\citeauthoryear{Cheng et~al.}{2019}]{CHT2019quantum}
\begin{barticle}
\bauthor{\bsnm{Cheng}, \binits{H.-C.}},
\bauthor{\bsnm{Hsieh}, \binits{M.-H.}},
\bauthor{\bsnm{Tomamichel}, \binits{M.}}:
\batitle{Quantum sphere-packing bounds with polynomial prefactors}.
\bjtitle{IEEE Trans. Inf. Theory}
\bvolume{65}(\bissue{5}),
\bfpage{2872}--\blpage{2898}
(\byear{2019})
\end{barticle}
\endbibitem

\bibitem[\protect\citeauthoryear{Hahn et~al.}{2024}]{HahnTanBrown2024}
\begin{botherref}
\oauthor{\bsnm{Hahn}, \binits{T.A.}},
\oauthor{\bsnm{Tan}, \binits{E.Y.-Z.}},
\oauthor{\bsnm{Brown}, \binits{P.}}:
Bounds on {Petz}--{R}{\'e}nyi divergences and their applications for device-independent cryptography.
arXiv preprint arXiv:2408.12313
(2024)
{\href{https://arxiv.org/abs/2408.12313}{{arXiv:2408.12313}}}
{[quant-ph]}
\end{botherref}
\endbibitem

\bibitem[\protect\citeauthoryear{Renes}{2024}]{renes2024tight}
\begin{barticle}
\bauthor{\bsnm{Renes}, \binits{J.M.}}:
\batitle{Tight lower bound on the error exponent of classical-quantum channels}.
\bjtitle{IEEE Transactions on Information Theory}
\bvolume{71}(\bissue{1}),
\bfpage{530}--\blpage{538}
(\byear{2024})
\end{barticle}
\endbibitem

\bibitem[\protect\citeauthoryear{Li and Yang}{2025}]{LiYang2025}
\begin{barticle}
\bauthor{\bsnm{Li}, \binits{K.}},
\bauthor{\bsnm{Yang}, \binits{D.}}:
\batitle{Reliability function of classical--quantum channels}.
\bjtitle{Physical Review Letters}
\bvolume{134}(\bissue{1}),
\bfpage{010802}
(\byear{2025})
\end{barticle}
\endbibitem

\bibitem[\protect\citeauthoryear{Mosonyi and Ogawa}{2015}]{MosonyiOgawa2015quantum}
\begin{barticle}
\bauthor{\bsnm{Mosonyi}, \binits{M.}},
\bauthor{\bsnm{Ogawa}, \binits{T.}}:
\batitle{Quantum hypothesis testing and the operational interpretation of the quantum {R{\'e}nyi} relative entropies}.
\bjtitle{Commun. Math. Phys.}
\bvolume{334}(\bissue{3}),
\bfpage{1617}--\blpage{1648}
(\byear{2015})
\end{barticle}
\endbibitem

\bibitem[\protect\citeauthoryear{Mosonyi and Ogawa}{2017}]{MosonyiOgawa2017strong}
\begin{barticle}
\bauthor{\bsnm{Mosonyi}, \binits{M.}},
\bauthor{\bsnm{Ogawa}, \binits{T.}}:
\batitle{Strong converse exponent for classical--quantum channel coding}.
\bjtitle{Commun. Math. Phys.}
\bvolume{355}(\bissue{1}),
\bfpage{373}--\blpage{426}
(\byear{2017})
\end{barticle}
\endbibitem

\bibitem[\protect\citeauthoryear{Gupta and Wilde}{2015}]{GuptaWilde2015multiplicativity}
\begin{barticle}
\bauthor{\bsnm{Gupta}, \binits{M.K.}},
\bauthor{\bsnm{Wilde}, \binits{M.M.}}:
\batitle{Multiplicativity of completely bounded $p$-norms implies a strong converse for entanglement-assisted capacity}.
\bjtitle{Commun. Math. Phys.}
\bvolume{334}(\bissue{2}),
\bfpage{867}--\blpage{887}
(\byear{2015})
\end{barticle}
\endbibitem

\bibitem[\protect\citeauthoryear{Cooney et~al.}{2016}]{CMW2016strong}
\begin{barticle}
\bauthor{\bsnm{Cooney}, \binits{T.}},
\bauthor{\bsnm{Mosonyi}, \binits{M.}},
\bauthor{\bsnm{Wilde}, \binits{M.M.}}:
\batitle{Strong converse exponents for a quantum channel discrimination problem and quantum-feedback-assisted communication}.
\bjtitle{Commun. Math. Phys.}
\bvolume{344}(\bissue{3}),
\bfpage{797}--\blpage{829}
(\byear{2016})
\end{barticle}
\endbibitem

\bibitem[\protect\citeauthoryear{Li and Yao}{2024}]{LiYao2024}
\begin{barticle}
\bauthor{\bsnm{Li}, \binits{K.}},
\bauthor{\bsnm{Yao}, \binits{Y.}}:
\batitle{Operational interpretation of the sandwiched r{\'e}nyi divergence of order 1/2 to 1 as strong converse exponents}.
\bjtitle{Communications in Mathematical Physics}
\bvolume{405}(\bissue{2}),
\bfpage{22}
(\byear{2024})
\end{barticle}
\endbibitem

\bibitem[\protect\citeauthoryear{Li et~al.}{2023}]{LYH2023tight}
\begin{barticle}
\bauthor{\bsnm{Li}, \binits{K.}},
\bauthor{\bsnm{Yao}, \binits{Y.}},
\bauthor{\bsnm{Hayashi}, \binits{M.}}:
\batitle{Tight exponential analysis for smoothing the max-relative entropy and for quantum privacy amplification}.
\bjtitle{IEEE Trans. Inf. Theory}
\bvolume{69}(\bissue{3}),
\bfpage{1680}--\blpage{1694}
(\byear{2023})
\end{barticle}
\endbibitem

\bibitem[\protect\citeauthoryear{Li and Yao}{2024}]{LiYao2024reliability}
\begin{barticle}
\bauthor{\bsnm{Li}, \binits{K.}},
\bauthor{\bsnm{Yao}, \binits{Y.}}:
\batitle{Reliability function of quantum information decoupling via the sandwiched {R\'enyi} divergence}.
\bjtitle{Commun. Math. Phys.}
\bvolume{405}(\bissue{7}),
\bfpage{160}
(\byear{2024})
\end{barticle}
\endbibitem

\bibitem[\protect\citeauthoryear{Li and Yao}{2025}]{LiYaoChannelSimulation2025}
\begin{barticle}
\bauthor{\bsnm{Li}, \binits{K.}},
\bauthor{\bsnm{Yao}, \binits{Y.}}:
\batitle{Reliable simulation of quantum channels: The error exponent}.
\bjtitle{IEEE Transactions on Information Theory}
\bvolume{71}(\bissue{1}),
\bfpage{518}--\blpage{529}
(\byear{2025})
\end{barticle}
\endbibitem

\bibitem[\protect\citeauthoryear{Berta et~al.}{2026}]{TightAnyShotDecoupling2026}
\begin{botherref}
\oauthor{\bsnm{Berta}, \binits{M.}},
\oauthor{\bsnm{Cheng}, \binits{H.-C.}},
\oauthor{\bsnm{Yao}, \binits{Y.}}:
Tight any-shot quantum decoupling.
arXiv preprint arXiv:2602.17430
(2026)
{\href{https://arxiv.org/abs/2602.17430}{{arXiv:2602.17430}}}
{[quant-ph]}
\end{botherref}
\endbibitem

\bibitem[\protect\citeauthoryear{Tomamichel}{2016}]{tomamichel2016quantum}
\begin{bbook}
\bauthor{\bsnm{Tomamichel}, \binits{M.}}:
\bbtitle{Quantum Information Processing with Finite Resources: Mathematical Foundations}.
\bsertitle{SpringerBriefs in Mathematical Physics},
vol. \bseriesno{5}.
\bpublisher{Springer},
\blocation{Cham}
(\byear{2016})
\end{bbook}
\endbibitem

\bibitem[\protect\citeauthoryear{Verhagen et~al.}{2026}]{VerhagenTomamichelHaapasalo2026}
\begin{barticle}
\bauthor{\bsnm{Verhagen}, \binits{F.}},
\bauthor{\bsnm{Tomamichel}, \binits{M.}},
\bauthor{\bsnm{Haapasalo}, \binits{E.}}:
\batitle{Conditions for large-sample majorization of pairs of flat states in terms of {$\alpha$-$z$} relative entropies}.
\bjtitle{Communications in Mathematical Physics}
\bvolume{407}(\bissue{7}),
\bfpage{157}
(\byear{2026})
\end{barticle}
\endbibitem

\bibitem[\protect\citeauthoryear{Rubboli et~al.}{2026}]{rubboli2024quantum}
\begin{barticle}
\bauthor{\bsnm{Rubboli}, \binits{R.}},
\bauthor{\bsnm{Goodarzi}, \binits{M.M.}},
\bauthor{\bsnm{Tomamichel}, \binits{M.}}:
\batitle{Quantum conditional entropies from convex trace functionals}.
\bjtitle{Communications in Mathematical Physics}
\bvolume{407}(\bissue{9}),
\bfpage{180}
(\byear{2026})
\end{barticle}
\endbibitem

\bibitem[\protect\citeauthoryear{Rubboli and Tomamichel}{2026}]{RubboliTomamichel2026}
\begin{botherref}
\oauthor{\bsnm{Rubboli}, \binits{R.}},
\oauthor{\bsnm{Tomamichel}, \binits{M.}}:
The Strong Converse Exponent of Composable Randomness Extraction Against Quantum Side Information
(2026)
\end{botherref}
\endbibitem

\bibitem[\protect\citeauthoryear{Lindblad}{1974}]{lindblad1974}
\begin{barticle}
\bauthor{\bsnm{Lindblad}, \binits{G.}}:
\batitle{Expectations and entropy inequalities for finite quantum systems}.
\bjtitle{Communications in Mathematical Physics}
\bvolume{39}(\bissue{2}),
\bfpage{111}--\blpage{119}
(\byear{1974})
\end{barticle}
\endbibitem

\bibitem[\protect\citeauthoryear{Lindblad}{1975}]{lindblad1975}
\begin{barticle}
\bauthor{\bsnm{Lindblad}, \binits{G.}}:
\batitle{Completely positive maps and entropy inequalities}.
\bjtitle{Communications in Mathematical Physics}
\bvolume{40}(\bissue{2}),
\bfpage{147}--\blpage{151}
(\byear{1975})
\end{barticle}
\endbibitem

\bibitem[\protect\citeauthoryear{Ahlswede and Winter}{2002}]{AhlswedeWinter2002}
\begin{barticle}
\bauthor{\bsnm{Ahlswede}, \binits{R.}},
\bauthor{\bsnm{Winter}, \binits{A.}}:
\batitle{Strong converse for identification via quantum channels}.
\bjtitle{IEEE Transactions on Information Theory}
\bvolume{48}(\bissue{3}),
\bfpage{569}--\blpage{579}
(\byear{2002})
\end{barticle}
\endbibitem

\bibitem[\protect\citeauthoryear{Bennett et~al.}{2014}]{bennettEtAl2014}
\begin{barticle}
\bauthor{\bsnm{Bennett}, \binits{C.H.}},
\bauthor{\bsnm{Devetak}, \binits{I.}},
\bauthor{\bsnm{Harrow}, \binits{A.W.}},
\bauthor{\bsnm{Shor}, \binits{P.W.}},
\bauthor{\bsnm{Winter}, \binits{A.}}:
\batitle{The quantum reverse shannon theorem and resource tradeoffs for simulating quantum channels}.
\bjtitle{IEEE Transactions on Information Theory}
\bvolume{60}(\bissue{5}),
\bfpage{2926}--\blpage{2959}
(\byear{2014})
\end{barticle}
\endbibitem

\bibitem[\protect\citeauthoryear{Devetak}{2005}]{devetak2005private}
\begin{barticle}
\bauthor{\bsnm{Devetak}, \binits{I.}}:
\batitle{The private classical capacity and quantum capacity of a quantum channel}.
\bjtitle{IEEE Transactions on Information Theory}
\bvolume{51}(\bissue{1}),
\bfpage{44}--\blpage{55}
(\byear{2005})
\end{barticle}
\endbibitem

\bibitem[\protect\citeauthoryear{Devetak and Winter}{2003}]{DevetakWinter2003}
\begin{barticle}
\bauthor{\bsnm{Devetak}, \binits{I.}},
\bauthor{\bsnm{Winter}, \binits{A.}}:
\batitle{Classical data compression with quantum side information}.
\bjtitle{Physical Review A}
\bvolume{68}(\bissue{4}),
\bfpage{042301}
(\byear{2003})
\end{barticle}
\endbibitem

\bibitem[\protect\citeauthoryear{Devetak and Winter}{2005}]{DevetakWinter2005}
\begin{barticle}
\bauthor{\bsnm{Devetak}, \binits{I.}},
\bauthor{\bsnm{Winter}, \binits{A.}}:
\batitle{Distillation of secret key and entanglement from quantum states}.
\bjtitle{Proceedings of the Royal Society A: Mathematical, Physical and Engineering Sciences}
\bvolume{461}(\bissue{2053}),
\bfpage{207}--\blpage{235}
(\byear{2005})
\end{barticle}
\endbibitem

\bibitem[\protect\citeauthoryear{Winter}{2005}]{Winter2005}
\begin{bchapter}
\bauthor{\bsnm{Winter}, \binits{A.}}:
\bctitle{Secret, public and quantum correlation cost of triples of random variables}.
In: \bbtitle{Proceedings of the 2005 IEEE International Symposium on Information Theory},
\bconflocation{Adelaide, Australia},
pp. \bfpage{2270}--\blpage{2274}
(\byear{2005})
\end{bchapter}
\endbibitem

\bibitem[\protect\citeauthoryear{Cai et~al.}{2004}]{CaiWinterYeung2004}
\begin{barticle}
\bauthor{\bsnm{Cai}, \binits{N.}},
\bauthor{\bsnm{Winter}, \binits{A.}},
\bauthor{\bsnm{Yeung}, \binits{R.W.}}:
\batitle{Quantum privacy and quantum wiretap channels}.
\bjtitle{Problems of Information Transmission}
\bvolume{40}(\bissue{4}),
\bfpage{318}--\blpage{336}
(\byear{2004})
\end{barticle}
\endbibitem

\bibitem[\protect\citeauthoryear{Tomamichel et~al.}{2011}]{TSSR2011leftover}
\begin{barticle}
\bauthor{\bsnm{Tomamichel}, \binits{M.}},
\bauthor{\bsnm{Schaffner}, \binits{C.}},
\bauthor{\bsnm{Smith}, \binits{A.}},
\bauthor{\bsnm{Renner}, \binits{R.}}:
\batitle{Leftover hashing against quantum side information}.
\bjtitle{IEEE Trans. Inf. Theory}
\bvolume{57}(\bissue{8}),
\bfpage{5524}--\blpage{5535}
(\byear{2011})
\end{barticle}
\endbibitem

\bibitem[\protect\citeauthoryear{Cheng and Gao}{2024}]{ChengGao2024}
\begin{barticle}
\bauthor{\bsnm{Cheng}, \binits{H.-C.}},
\bauthor{\bsnm{Gao}, \binits{L.}}:
\batitle{Error exponent and strong converse for quantum soft covering}.
\bjtitle{IEEE Transactions on Information Theory}
\bvolume{70}(\bissue{5}),
\bfpage{3499}--\blpage{3511}
(\byear{2024})
\end{barticle}
\endbibitem

\bibitem[\protect\citeauthoryear{Shen et~al.}{2024}]{ShenGaoCheng2024}
\begin{barticle}
\bauthor{\bsnm{Shen}, \binits{Y.-C.}},
\bauthor{\bsnm{Gao}, \binits{L.}},
\bauthor{\bsnm{Cheng}, \binits{H.-C.}}:
\batitle{Optimal second-order rates for quantum soft covering and privacy amplification}.
\bjtitle{IEEE Transactions on Information Theory}
\bvolume{70}(\bissue{7}),
\bfpage{5077}--\blpage{5091}
(\byear{2024})
\end{barticle}
\endbibitem

\bibitem[\protect\citeauthoryear{Renner and K{\"o}nig}{2005}]{RennerKonig2005}
\begin{bchapter}
\bauthor{\bsnm{Renner}, \binits{R.}},
\bauthor{\bsnm{K{\"o}nig}, \binits{R.}}:
\bctitle{Universally composable privacy amplification against quantum adversaries}.
In: \bbtitle{Theory of Cryptography}.
\bsertitle{Lecture Notes in Computer Science},
vol. \bseriesno{3378},
pp. \bfpage{407}--\blpage{425}.
\bpublisher{Springer},
\blocation{Berlin}
(\byear{2005})
\end{bchapter}
\endbibitem

\bibitem[\protect\citeauthoryear{Renner}{2005}]{Renner2005}
\begin{botherref}
\oauthor{\bsnm{Renner}, \binits{R.}}:
Security of quantum key distribution.
PhD thesis,
ETH Z{\"u}rich
(2005)
\end{botherref}
\endbibitem

\bibitem[\protect\citeauthoryear{Hayashi}{2015}]{Hayashi2015}
\begin{barticle}
\bauthor{\bsnm{Hayashi}, \binits{M.}}:
\batitle{Precise evaluation of leaked information with secure randomness extraction in the presence of quantum attacker}.
\bjtitle{Communications in Mathematical Physics}
\bvolume{333}(\bissue{1}),
\bfpage{335}--\blpage{350}
(\byear{2015})
\end{barticle}
\endbibitem

\bibitem[\protect\citeauthoryear{M{\"u}ller-Lennert et~al.}{2013}]{MDSFT2013on}
\begin{barticle}
\bauthor{\bsnm{M{\"u}ller-Lennert}, \binits{M.}},
\bauthor{\bsnm{Dupuis}, \binits{F.}},
\bauthor{\bsnm{Szehr}, \binits{O.}},
\bauthor{\bsnm{Fehr}, \binits{S.}},
\bauthor{\bsnm{Tomamichel}, \binits{M.}}:
\batitle{On quantum {R{\'e}nyi} entropies: a new generalization and some properties}.
\bjtitle{J. Math. Phys.}
\bvolume{54},
\bfpage{122203}
(\byear{2013})
\end{barticle}
\endbibitem

\bibitem[\protect\citeauthoryear{Wilde et~al.}{2014}]{WWY2014strong}
\begin{barticle}
\bauthor{\bsnm{Wilde}, \binits{M.M.}},
\bauthor{\bsnm{Winter}, \binits{A.}},
\bauthor{\bsnm{Yang}, \binits{D.}}:
\batitle{Strong converse for the classical capacity of entanglement-breaking and {Hadamard} channels via a sandwiched {R{\'e}nyi} relative entropy}.
\bjtitle{Commun. Math. Phys.}
\bvolume{331}(\bissue{2}),
\bfpage{593}--\blpage{622}
(\byear{2014})
\end{barticle}
\endbibitem

\bibitem[\protect\citeauthoryear{Umegaki}{1954}]{Umegaki1954conditional}
\begin{barticle}
\bauthor{\bsnm{Umegaki}, \binits{H.}}:
\batitle{Conditional expectation in an operator algebra}.
\bjtitle{Tohoku Math. J.}
\bvolume{6}(\bissue{2}),
\bfpage{177}--\blpage{181}
(\byear{1954})
\end{barticle}
\endbibitem

\bibitem[\protect\citeauthoryear{Tomamichel et~al.}{2014}]{TBH2014relating}
\begin{barticle}
\bauthor{\bsnm{Tomamichel}, \binits{M.}},
\bauthor{\bsnm{Berta}, \binits{M.}},
\bauthor{\bsnm{Hayashi}, \binits{M.}}:
\batitle{Relating different quantum generalizations of the conditional {R{\'e}nyi} entropy}.
\bjtitle{J. Math. Phys.}
\bvolume{55},
\bfpage{082206}
(\byear{2014})
\end{barticle}
\endbibitem

\bibitem[\protect\citeauthoryear{Temme et~al.}{2010}]{chi-diver}
\begin{barticle}
\bauthor{\bsnm{Temme}, \binits{K.}},
\bauthor{\bsnm{Kastoryano}, \binits{M.J.}},
\bauthor{\bsnm{Ruskai}, \binits{M.B.}},
\bauthor{\bsnm{Wolf}, \binits{M.M.}},
\bauthor{\bsnm{Verstraete}, \binits{F.}}:
\batitle{The {$\chi^2$}-divergence and mixing times of quantum markov processes}.
\bjtitle{Journal of Mathematical Physics}
\bvolume{51}(\bissue{12}),
\bfpage{122201}
(\byear{2010})
\end{barticle}
\endbibitem

\bibitem[\protect\citeauthoryear{Carlen and Maas}{2017}]{CM17}
\begin{barticle}
\bauthor{\bsnm{Carlen}, \binits{E.A.}},
\bauthor{\bsnm{Maas}, \binits{J.}}:
\batitle{Gradient flow and entropy inequalities for quantum markov semigroups with detailed balance}.
\bjtitle{Journal of Functional Analysis}
\bvolume{273}(\bissue{5}),
\bfpage{1810}--\blpage{1869}
(\byear{2017})
\end{barticle}
\endbibitem

\bibitem[\protect\citeauthoryear{Araki}{1990}]{Araki1990inequality}
\begin{barticle}
\bauthor{\bsnm{Araki}, \binits{H.}}:
\batitle{On an inequality of {Lieb} and {Thirring}}.
\bjtitle{Letters in Mathematical Physics}
\bvolume{19},
\bfpage{167}--\blpage{170}
(\byear{1990})
\end{barticle}
\endbibitem

\bibitem[\protect\citeauthoryear{Junge and Xu}{2008}]{JungeXu2008}
\begin{barticle}
\bauthor{\bsnm{Junge}, \binits{M.}},
\bauthor{\bsnm{Xu}, \binits{Q.}}:
\batitle{Noncommutative {B}urkholder/{R}osenthal inequalities {II}: Applications}.
\bjtitle{Israel Journal of Mathematics}
\bvolume{167},
\bfpage{227}--\blpage{282}
(\byear{2008})
\end{barticle}
\endbibitem

\bibitem[\protect\citeauthoryear{Pisier and Xu}{2003}]{PX03}
\begin{bchapter}
\bauthor{\bsnm{Pisier}, \binits{G.}},
\bauthor{\bsnm{Xu}, \binits{Q.}}:
\bctitle{Noncommutative $l_p$-spaces}.
In: \bbtitle{Handbook of the Geometry of Banach Spaces}
vol. \bseriesno{2},
pp. \bfpage{1459}--\blpage{1517}.
\bpublisher{North-Holland},
\blocation{Amsterdam}
(\byear{2003})
\end{bchapter}
\endbibitem

\bibitem[\protect\citeauthoryear{Xu}{2007}]{Xu07}
\begin{botherref}
\oauthor{\bsnm{Xu}, \binits{Q.}}:
Noncommutative $L_p$-spaces and martingale inequalities.
Unpublished manuscript
(2007)
\end{botherref}
\endbibitem

\bibitem[\protect\citeauthoryear{Birman et~al.}{1975}]{BKS75}
\begin{botherref}
\oauthor{\bsnm{Birman}, \binits{M.S.}},
\oauthor{\bsnm{Koplienko}, \binits{L.S.}},
\oauthor{\bsnm{Solomjak}, \binits{M.Z.}}:
Estimates of the spectrum of a difference of fractional powers of selfadjoint operators.
Izvestiya Vysshikh Uchebnykh Zavedenii. Matematika
(3),
3--10
(1975).
Issue 154
\end{botherref}
\endbibitem

\bibitem[\protect\citeauthoryear{Birman and Solomyak}{1992}]{BS89}
\begin{barticle}
\bauthor{\bsnm{Birman}, \binits{M.S.}},
\bauthor{\bsnm{Solomyak}, \binits{M.Z.}}:
\batitle{Estimates for the difference of fractional powers of selfadjoint operators under unbounded perturbations}.
\bjtitle{Journal of Soviet Mathematics}
\bvolume{61}(\bissue{2}),
\bfpage{2018}--\blpage{2035}
(\byear{1992}).
\bcomment{Translated from Zapiski Nauchnykh Seminarov Leningradskogo Otdeleniya Matematicheskogo Instituta imeni V. A. Steklova, vol. 178 (1989), pp. 120--145}
\end{barticle}
\endbibitem

\bibitem[\protect\citeauthoryear{Carlen}{2010}]{carlennote}
\begin{bchapter}
\bauthor{\bsnm{Carlen}, \binits{E.A.}}:
\bctitle{Trace inequalities and quantum entropy: An introductory course}.
In: \beditor{\bsnm{Sims}, \binits{R.}},
\beditor{\bsnm{Ueltschi}, \binits{D.}} (eds.)
\bbtitle{Entropy and the Quantum}.
\bsertitle{Contemporary Mathematics},
vol. \bseriesno{529},
pp. \bfpage{73}--\blpage{140}.
\bpublisher{American Mathematical Society},
\blocation{Providence, RI}
(\byear{2010})
\end{bchapter}
\endbibitem

\bibitem[\protect\citeauthoryear{Yu and Tan}{2018}]{YuTan2018renyi}
\begin{barticle}
\bauthor{\bsnm{Yu}, \binits{L.}},
\bauthor{\bsnm{Tan}, \binits{V.Y.}}:
\batitle{R{\'e}nyi resolvability and its applications to the wiretap channel}.
\bjtitle{IEEE Trans. Inf. Theory}
\bvolume{65}(\bissue{3}),
\bfpage{1862}--\blpage{1897}
(\byear{2018})
\end{barticle}
\endbibitem

\bibitem[\protect\citeauthoryear{Hayashi et~al.}{2025}]{hayashi2025resolvability}
\begin{barticle}
\bauthor{\bsnm{Hayashi}, \binits{M.}},
\bauthor{\bsnm{Cheng}, \binits{H.-C.}},
\bauthor{\bsnm{Gao}, \binits{L.}}:
\batitle{Resolvability of classical-quantum channels}.
\bjtitle{IEEE Transactions on Information Theory}
\bvolume{71}(\bissue{8}),
\bfpage{6061}--\blpage{6074}
(\byear{2025})
\end{barticle}
\endbibitem

\bibitem[\protect\citeauthoryear{Bergh and L{\"o}fstr{\"o}m}{1976}]{BerghLofstrom1976}
\begin{bbook}
\bauthor{\bsnm{Bergh}, \binits{J.}},
\bauthor{\bsnm{L{\"o}fstr{\"o}m}, \binits{J.}}:
\bbtitle{Interpolation Spaces: An Introduction}.
\bsertitle{Grundlehren der Mathematischen Wissenschaften},
vol. \bseriesno{223}.
\bpublisher{Springer},
\blocation{Berlin, Heidelberg}
(\byear{1976})
\end{bbook}
\endbibitem

\bibitem[\protect\citeauthoryear{Kosaki}{1984}]{kosaki1984}
\begin{barticle}
\bauthor{\bsnm{Kosaki}, \binits{H.}}:
\batitle{Applications of the complex interpolation method to a von neumann algebra: Non-commutative {$L^p$}-spaces}.
\bjtitle{Journal of Functional Analysis}
\bvolume{56}(\bissue{1}),
\bfpage{29}--\blpage{78}
(\byear{1984})
\end{barticle}
\endbibitem

\bibitem[\protect\citeauthoryear{Hirzallah and Kittaneh}{2002}]{hirzallah2002non}
\begin{barticle}
\bauthor{\bsnm{Hirzallah}, \binits{O.}},
\bauthor{\bsnm{Kittaneh}, \binits{F.}}:
\batitle{Non-commutative clarkson inequalities for unitarily invariant norms}.
\bjtitle{Pacific Journal of Mathematics}
\bvolume{202}(\bissue{2}),
\bfpage{363}--\blpage{369}
(\byear{2002})
\end{barticle}
\endbibitem

\end{thebibliography}

\end{document}